\newtheorem{theorem}{Theorem}
\newtheorem{definition}[theorem]{Definition}
\newtheorem{lemma}[theorem]{Lemma}
\newtheorem{proposition}[theorem]{Proposition}
\newtheorem{claim}[theorem]{Claim}
\newtheorem{remark}[theorem]{Remark}
\newcommand{\wdeg}{\mathrm{\mathbf{w}\!\deg}}
\title{Maximum Weight $b$-Matchings in Random-Order Streams}
\date{}
\author{Chien-Chung Huang\\
\textit{CNRS, DI ENS, École normale supérieure, Université PSL, France}
\and François Sellier\\
\textit{Université Paris Cité, CNRS, IRIF, Paris, France}\\
\textit{Mines Paris, Université PSL, Paris, France}}
\begin{document}

\maketitle

\begin{abstract}
    We consider the maximum weight $b$-matching problem in the random-order semi-streaming model. Assuming all weights are small integers drawn from $[1,W]$, we present a $2 - \frac{1}{2W} + \varepsilon$ approximation algorithm working on randomly-ordered streams of edges with high probability and using $O(\max(|M_G|, n) \cdot poly(\log(m),W,1/\varepsilon))$ memory, where $|M_G|$ denotes the cardinality of the optimal matching. Our result generalizes that of Bernstein~\cite{bernstein:LIPIcs:2020:12419}, which achieves a $3/2 + \varepsilon$ approximation for the maximum cardinality simple matching. When $W$ is small, our result also improves upon that of Gamlath \emph{et al.}~\cite{GamlathSSS2019}, which obtains a $2 - \delta$ approximation (for some small constant $\delta \sim 10^{-17}$) for the maximum weight simple matching. In particular, for the weighted $b$-matching problem, ours is the first result beating the approximation ratio of $2$. Our technique hinges on a generalized weighted version of edge-degree constrained subgraphs, originally developed by Bernstein and Stein~\cite{BernsteinS15}. Such a subgraph has bounded vertex degree (hence uses only a small number of edges), and can be easily computed. The fact that it contains a $2 - \frac{1}{2W} + \varepsilon$ approximation of the maximum weight matching is proved using the classical K\H{o}nig-Egerváry's duality theorem. 
\end{abstract}

\paragraph*{Keywords} Maximum weight matching, $b$-matching, streaming, random order

\paragraph*{Funding} This work was funded by the grants ANR-19-CE48-0016 and ANR-18-CE40-0025-01 from the French National Research Agency (ANR).

\paragraph*{Related version} Part of this work appeared at ESA 2022~\cite{Huang_et_al:LIPIcs.ESA.2022.68}.

\paragraph*{Acknowledgements} The authors thank Claire Mathieu and the anonymous reviewers for their helpful comments.

\section{Introduction}

    The maximum weight ($b$-)matching problem is a classical problem in combinatorial optimization. In this paper we will study a sparsifier for that problem and use it in order to design a streaming algorithm for randomly-ordered streams of edges.
    
    \paragraph*{Structural Result for Edge-Degree Constrained Subgraphs}
    
    Our main tool is a generalized weighted version of the \emph{edge-degree constrained subgraph} (EDCS), a graph sparsifier originally designed for the maximum matching problem by Bernstein and Stein~\cite{BernsteinS15}. Let us first recall the definition an EDCS $H$ of a graph $G$~\cite{BernsteinS15}.
    
    \begin{definition}[from \cite{BernsteinS15}]
        \label{def:intro-edcs}
         Let $G = (V,E)$ be a graph, and $H$ a subgraph of $G$. Given any integer parameters $\beta \geq 2$ and $\beta^- \leq \beta - 1$, we say that $H$ is a $(\beta, \beta^-)$-EDCS of $G$ if $H$ satisfies the following properties:
        \begin{enumerate}[(i)]
            \item \makebox[12em][l]{For any edge $(u, v) \in H$,} $\deg_H(u) + \deg_H(v) \leq \beta$
            \item \makebox[12em][l]{For any edge $(u, v) \in G \backslash H$,} $\deg_H(u) + \deg_H(v) \geq \beta^-$.
        \end{enumerate}
    \end{definition}
    
    An EDCS has a size that can be easily controlled by the parameter $\beta$ and it somehow ``balances'' the vertex degrees in the graph. A very nice property of this sparsifier is that, for well-chosen values of $\beta$ and $\beta^-$, it always contains a $3/2 + \varepsilon$ approximation of the maximum cardinality matching~\cite{AssadiB19,BernsteinS16}:
    
    \begin{theorem}[from the recent work of Assadi and Bernstein~\cite{AssadiB19}] \label{thm:intro-general-edcs}
        Let $0 < \varepsilon < 1/2$. Set $\lambda = \frac{\varepsilon}{32}$. Let $\beta \geq \beta^- + 1$ be integers such that $\beta \geq 8\lambda^{-2}\log(1/\lambda)$ and $\beta^- \geq (1 - \lambda) \cdot \beta$. Then any $(\beta, \beta^-)$-EDCS $H$ of a graph $G$ contains a matching $M_H$ such that $\left(\frac{3}{2} + \varepsilon\right)\cdot |M_H| \geq |M_G|$ where $M_G$ denotes a maximum cardinality matching.
    \end{theorem}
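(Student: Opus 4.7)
The plan is to argue by contradiction. Let $M^* := M_G$ and let $M_H$ be a maximum matching of $H$; suppose $|M_H| < \frac{2}{3+\varepsilon}|M^*|$, and aim to derive the existence of an $M_H$-augmenting path entirely inside $H$, contradicting the maximality of $M_H$.

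Consider the symmetric difference $M^* \triangle M_H$ in $G$, which decomposes into alternating paths and cycles. Edge-counting gives a surplus of at least $|M^*| - |M_H| > \frac{1+\varepsilon/2}{2}|M_H|$ many $M^*$-edges, packaged into $M_H$-augmenting paths. By averaging, many such augmenting paths in $G$ must be short (length $3$ or $5$). Pick a typical length-$3$ augmenting path $x - a - b - y$, where $(a,b) \in M_H \subseteq H$ and $(x,a),(b,y) \in M^* \setminus M_H$. If both $(x,a)$ and $(b,y)$ lie in $H$, the forbidden augmenting path is already witnessed and we are done; otherwise, at least one edge, say $(x,a)$, is in $G \setminus H$, and property (ii) applies: $\deg_H(x) + \deg_H(a) \geq \beta^-$. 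This large $H$-neighborhood lets us patch the augmenting path by substituting an $H$-edge for $(x,a)$.

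Making this patching work uniformly across the surplus augmenting paths requires a double-counting argument: the $H$-edges used as patches cannot be reused arbitrarily often, because property (i) caps endpoint degrees in $H$ by $\beta$. The balance $\beta^- \geq (1 - \lambda)\beta$ with $\lambda = \varepsilon/32$ provides just enough slack between properties (i) and (ii) to prevent a Hall-type obstruction (too many augmenting paths demanding too few patch edges), and translates into the claimed ratio $\frac{3}{2} + \varepsilon$.

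The main obstacle is precisely this patching step: when a short $G$-augmenting path is missing $H$-edges at \emph{both} external positions $(x,a)$ and $(b,y)$, one must simultaneously reroute both ends through $H$ while keeping the resulting walk a simple alternating path. Resolving this likely requires either a careful random substitution (exploiting the concentration made possible by the lower bound $\beta \geq 8 \lambda^{-2} \log(1/\lambda)$) or a case analysis based on the Gallai--Edmonds structure of $H$; the precise inequality linking $\lambda$ to the approximation slack $\varepsilon$ will come out of this analysis.
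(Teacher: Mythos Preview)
Your proposal has a genuine gap that you yourself flag: the ``patching step'' is never carried out, and you close with speculation (``likely requires\ldots'') about how it might be resolved. More concretely, the augmenting-path strategy runs into trouble already at the first reroute. When $(x,a) \in G \setminus H$, property~(ii) gives $\deg_H(x) + \deg_H(a) \geq \beta^-$, but this sum can be realized entirely on the $a$ side (e.g.\ $\deg_H(x) = 0$, $\deg_H(a) = \beta^-$), in which case there is no $H$-edge incident to the unmatched vertex $x$ with which to begin a rerouted path. The EDCS conditions are degree-\emph{sum} constraints on edges, not per-vertex lower bounds, so you cannot guarantee any $H$-neighborhood for the free endpoints of your $G$-augmenting paths. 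Your double-counting intuition does not rescue this: a single $M_H$-vertex $a$ with $\deg_H(a) \approx \beta$ can absorb the property-(ii) constraint for many distinct unmatched vertices $x$ at once, so there is no Hall-type obstruction to exploit.

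The paper does not give its own proof of this statement (it is quoted from Assadi--Bernstein), but its proof of the weighted generalization (Lemma~\ref{lem:bipartite-w-edcs} and Theorem~\ref{thm:general-w-edcs}) specializes to one, and that proof follows the Assadi--Bernstein template, which is entirely different from yours. For \emph{bipartite} $G$ one invokes K\H{o}nig--Egerv\'ary duality to obtain a minimum vertex cover $(\alpha_v)$ of $H$ with $\sum_v \alpha_v = |M_H|$, splits the edges of $M_G$ into ``good'' and ``bad'' according to whether the dual constraint is approximately tight, and bounds the bad contribution via an auxiliary graph plus a convexity/averaging argument. For \emph{general} $G$ one does not touch augmenting paths at all; instead one takes a random bipartition of $V$ respecting $M_G$, shows via Hoeffding bounds and the Lov\'asz Local Lemma that the restriction of $H$ is still an EDCS (with slightly perturbed parameters) of the bipartite restriction of $G$, and reduces to the bipartite case. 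The lower bound $\beta \geq 8\lambda^{-2}\log(1/\lambda)$ is used precisely for the concentration needed in that LLL step, not for any random substitution inside an augmenting-path argument.
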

    
    In our paper, we generalize the EDCS in two ways:
    \begin{itemize}
        \item we handle (small) integer-weighted edges;
        \item we handle the more general case of $b$-matchings.
    \end{itemize}

    To describe our generalization, first let us introduce some notation. A weighted multi-graph $G = (V, E)$ is defined by its set of vertices $V$ and its multi-set of weighted edges $E$ drawn from $V \times V \times \{1,2,\dots,W\}$ (\emph{i.e.}, $e = (u,v,k)$ represents an edge between $u$ and $v$ of weight $w(e) = k$; edges are non-oriented). We emphasize that $E$ is a multi-set: not only can there be multiple edges between two vertices, but also some of these edges can have the same weight. We assume that the multi-graph does not contain any self-loop. For a given vertex $v \in V$ and a given subgraph $H$ of $G$, $\delta_H(v)$ denotes the multi-set of edges incident to $v$ in $H$, $\deg_H(v)$ the \emph{degree} of $v$ in the multi-graph $H$ and $\wdeg_H(v)$ its \emph{weighted degree} $\sum_{(u,v,w)\in \delta_H(v)}w$ in $H$. Given a weighted multi-graph $G = (V,E)$ and a set of capacities $\{b_v\}_{v \in V} \in \mathbb{Z}_{+}^V$ associated to vertices $v \in V$, a multi-set of weighted edges $M$ is called a \emph{$b$-matching} if for all $v \in V$ the number of edges incident to $v$ in $M$ is smaller than or equal to $b_v$ (a set of edges is called a \emph{matching} when all the $b_v$s are equal to $1$). For a given subgraph $H$ of $G$, we denote by $M_H$ an arbitrary maximum weight $b$-matching included in $H$. The concept of $b$-matching encompasses that of matching and allows us to tackle a larger variety of real situations where the vertices have different capacities, \emph{e.g.}~\cite{tennenholtz2000some}. In this paper we will assume that the number of edges between any two vertices $u$ and $v$ is at most $\min(b_u, b_v)$.\footnote{This is actually a reasonable assumption as the maximum number of edges that are relevant between two given vertices $u$ and $v$ to construct a $b$-matching is at most $\min(b_u, b_v)$. This assumption is more debatable in the streaming setting, and this is why we explain how to handle this case in Appendix~\ref{app:relevant-edges}.}
    
    \begin{definition}
        \label{def:intro-w-b-edcs}
        Let $G = (V,E)$ be a weighted multi-graph, $\{b_v\}_{v \in V}$ be a set of constraints, and $H$ be a subgraph of $G$. Given any integer parameters $\beta \geq 3$ and $\beta^- \leq \beta - 2$, we say that $H$ is a $(\beta, \beta^-)$-$w$-$b$-EDCS of $G$ if $H$ satisfies the following properties:
        \begin{enumerate}[(i)]
            \item \makebox[14em][l]{For any edge $(u, v, w_{uv}) \in H$,} $\frac{\wdeg_H(u)}{b_u} + \frac{\wdeg_H(v)}{b_v} \leq \beta \cdot w_{uv}$
            \item \makebox[14em][l]{For any edge $(u, v, w_{uv}) \in G \backslash H$,} $\frac{\wdeg_H(u)}{b_u} + \frac{\wdeg_H(v)}{b_v} \geq \beta^- \cdot w_{uv}$. 
        \end{enumerate}
    \end{definition}
    
    An EDCS is a special case of a \emph{weighted $b$-EDCS} when all the $b_v$s and all the weights are equal to $1$. 
    We can show that such $w$-$b$-EDCSes as described in Definition~\ref{def:intro-w-b-edcs} always exist (Propositions~\ref{prop:exist-w-edcs} and~\ref{prop:exist-w-b-edcs}). Moreover, we can also prove that it uses only a reasonable number of edges (up to $2\beta \cdot |M_G|$) and that a relatively large weighted $b$-matching can be found in it:
    
    \begin{restatable}[]{theorem}{generalwbedcs} \label{thm:intro-general-w-b-edcs}
        Let $0 < \varepsilon < 1/2$ and let $W$ be an integer parameter. Set $\lambda = \frac{\varepsilon}{100W}$. Let $\beta \geq \beta^- + 2$ be integers such that $\frac{\beta + 6W}{\log(\beta + 6W)} \geq 2 W^2\lambda^{-2}$ and $\beta^- - 6W \geq (1 - \lambda) \cdot (\beta + 6W)$. Then any $(\beta, \beta^-)$-$w$-$b$-EDCS $H$ of an edge-weighted multi-graph $G$ with integer edge weights bounded by $W$ contains a $b$-matching $M_H$ such that $\left(2 - \frac{1}{2W} + \varepsilon\right)\cdot w(M_H) \geq w(M_G)$.
    \end{restatable}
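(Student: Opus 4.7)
The plan is to generalize the LP-duality argument underlying the unweighted EDCS approximation theorem (Theorem~\ref{thm:intro-general-edcs}) to the weighted $b$-matching setting, with K\H{o}nig--Egerv\'ary duality as the central tool. The natural dual variables are
\[
    y_v := \frac{\wdeg_H(v)}{\beta \cdot b_v}, \qquad v \in V,
\]
under which property~(i) of Definition~\ref{def:intro-w-b-edcs} becomes $y_u + y_v \leq w_{uv}$ on edges of $H$, while property~(ii) becomes $y_u + y_v \geq (\beta^-/\beta)\,w_{uv}$ on edges of $G \setminus H$. Rescaling by $\beta/\beta^- = 1 + O(\lambda)$ turns $y$ into an approximate fractional vertex cover of $G$ in the $b$-matching LP, whose objective value is $2\,w(H)/\beta^-$.

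To sidestep the factor-$2$ integrality gap of weighted matching on general graphs, I would lift everything to the bipartite double $\tilde G$: split each $v$ into copies $v_1, v_2$ with unchanged capacity $b_v$, and replace every edge $(u, v, w_{uv}) \in E(G)$ by the two edges $(u_1, v_2, w_{uv})$ and $(u_2, v_1, w_{uv})$ in $\tilde G$. A direct check shows that the doubled subgraph $\tilde H$ remains a $(\beta, \beta^-)$-$w$-$b$-EDCS of $\tilde G$ with the same parameters. Since $\tilde G$ is bipartite, K\H{o}nig--Egerv\'ary yields that its maximum weighted $b$-matching equals its minimum weighted fractional vertex cover and that the LP is integral. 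Using the lifted dual $y$ then bounds $w(M_{\tilde G})$ from above by $4\,w(H)/\beta^-$, while the doubling relation $w(M_{\tilde G}) \geq 2\,w(M_G)$ gives the matching lower bound $w(H) \geq \beta^-\,w(M_G)/2$, and the integrality of the bipartite $b$-matching polytope delivers an integer $b$-matching of $\tilde H$ of the corresponding weight that can be un-doubled into an integer $b$-matching $M_H$ of $H$.

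The final refinement from the trivial factor-$2$ ratio to the sharper $2 - 1/(2W)$ is where the integrality of the weights $w_{uv} \in \{1,\dots,W\}$ plays the decisive role: on every edge of weight strictly less than $W$, property~(i) carries an additional slack of at least $\beta$, and aggregating this slack through the LP chain contributes exactly the $1/(2W)$ improvement; for the unweighted case $W = 1$ this recovers the classical $3/2$ bound of Theorem~\ref{thm:intro-general-edcs}. The main obstacle will be executing this slack aggregation tightly and absorbing the multiplicative $(1 + O(\lambda))$ error terms coming from $\beta^-/\beta \geq 1 - \lambda$ and from the un-doubling step between $\tilde H$ and $H$; the parameter hypotheses in the statement --- $\lambda = \varepsilon/(100W)$, the logarithmic condition on $\beta + 6W$, and the gap $\beta^- - 6W \geq (1-\lambda)(\beta + 6W)$ --- are tuned precisely to absorb these errors and yield the stated approximation factor $2 - 1/(2W) + \varepsilon$.
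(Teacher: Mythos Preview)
Your proposal has two genuine gaps.

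First, the dual variables $y_v = \wdeg_H(v)/(\beta\, b_v)$ do \emph{not} form an approximate vertex cover of $G$ (or of $\tilde G$). Property~(ii) gives $y_u + y_v \geq (\beta^-/\beta)\, w_{uv}$ only for edges of $G \setminus H$; for edges of $H$, property~(i) gives the \emph{reverse} inequality $y_u + y_v \leq w_{uv}$. So rescaling $y$ by $\beta/\beta^-$ covers $G \setminus H$ but leaves every edge of $H$ uncovered, and your LP-duality bound $w(M_{\tilde G}) \leq 4\,w(H)/\beta^-$ is unjustified. Even if one patched this, the conclusion $w(H) \geq \beta^-\, w(M_G)/2$ is a statement about the \emph{total edge weight} of $H$, not about its maximum $b$-matching $w(M_H)$; a subgraph can have arbitrarily large total weight yet tiny maximum matching (think of a heavy star). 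Nothing in your chain produces a lower bound on $w(M_H)$, so the ``un-doubling'' step has nothing to un-double.

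Second, the ``slack aggregation'' paragraph is not an argument. Property~(i) on an edge with $w_{uv} < W$ gives a \emph{smaller} right-hand side $\beta\, w_{uv}$, not extra slack. The actual source of the $1/(2W)$ gain in the paper is quite different: one takes an optimal \emph{integer} vertex cover $(\alpha_v)$ of $H$ via K\H{o}nig--Egerv\'ary (so $\sum_v \alpha_v = w(M_H)$), splits the edges of $M_G$ into ``good'' (essentially covered by $\alpha$) and ``bad'' ones, and then observes that for a bad edge the integer $w(u,v)$ would have to lie in an open interval of length strictly less than $1$ whose left endpoint is within $O(\varepsilon/W)$ of an integer --- an impossibility. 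This integrality trap (Claim~\ref{claim:average}) forces each bad vertex to carry weighted degree at least $\beta^-/(2(1+\varepsilon/4))$ in an auxiliary graph $H_{bad}$, and a Cauchy--Schwarz-type averaging over $H_{bad}$ then yields the $1/(2W)$ term. The paper handles non-bipartiteness not by the bipartite double cover but by a random bipartition analysed with the Lov\'asz Local Lemma (Theorem~\ref{thm:general-w-edcs}), and handles $b$-matchings by splitting each vertex $v$ into $b_v$ copies with a careful edge-distribution lemma (Lemma~\ref{lem:distrib-edges}) so that the resulting simple graph inherits a $(\beta+6W,\beta^- - 6W)$-$w$-EDCS; this is exactly why the hypotheses are phrased in terms of $\beta \pm 6W$.
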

    
    Moreover, in Appendix~\ref{app:tight-bound-edcs}, we give a whole class of tight examples reaching the bound of Theorem~\ref{thm:intro-general-w-b-edcs}. Comparing with the previous result of~\cite{AssadiB19, BernsteinS16} (Theorem~\ref{thm:intro-general-edcs}), setting $W = 1$, we can observe that the approximation ratio is the same, but the constraints on $\beta$ and $\beta^-$ are a bit stricter here. However, in our case we can deal with $b$-matchings as well, and we can handle situations in which $W > 1$. 
    
    \paragraph*{Algorithmic Result in Random-Order Streams}
    
    The \emph{semi-streaming} model of computation~\cite{FKMSZ2005} has been motivated by the recent rise of massive graphs, for which we cannot afford to store all the edges in memory while it is still achievable to store an output of the considered problem in memory. Given that the graph is made of $|V| = n$ vertices and $|E| = m$ edges, in the semi-streaming model the graph is presented to the algorithm as a stream of edges $e_1, \ldots, e_m$. The algorithm is allowed to make a single pass over that stream and can use a memory roughly proportional to the output size, or proportional to the number of vertices in the graph, up to a poly-logarithmic factor. 
    
    We note that in the most general model where an adversary decides the order of the elements, even for the maximum cardinality simple matching, it is still unclear whether it is possible to beat the approximation ratio of $2$. 
    
    Our focus here is on the \emph{random-order} streaming model, where the permutation of the edges in the stream is assumed to be chosen uniformly at random. This is a quite reasonable assumption as real-world data have little reason of being ordered in an adversarial way, and a uniform random order is a simple and convenient approximation of this. In particular, for this model, it is possible to beat the approximation factor of $2$~\cite{KonradMM12,GamlathSSS2019}, at least for the simple matching. 
    
    Using an adaptation of EDCS, Bernstein~\cite{bernstein:LIPIcs:2020:12419} obtained a $3/2+\varepsilon$ approximation in the random-order semi-streaming framework (with probability $1 - 2n^{-3}$ and using $O(n\cdot \log(n) \cdot poly(1/\varepsilon))$ memory). Similarly, we can adapt our $w$-$b$-EDCSes to design a semi-streaming algorithm for randomly-ordered streams of weighted edges:
    
    \begin{theorem} \label{thm:intro-streaming}
        Let $0 < \varepsilon < \frac{1}{2}$ and let $W$ be an integer parameter. There exists an algorithm that can extract with high probability (at least $1-2m^{-3}$) from a randomly-ordered stream of weighted edges having integer weights in $\{1, \ldots, W\}$ a weighted $b$-matching with an approximation ratio of $2 - \frac{1}{2W} + \varepsilon$, using $O(\max(|M_G|,n) \cdot poly(\log(m), W, 1/\varepsilon))$ memory.
    \end{theorem}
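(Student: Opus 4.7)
The plan is to adapt the random-order streaming construction of Bernstein~\cite{bernstein:LIPIcs:2020:12419} for the classical EDCS to our weighted $b$-matching setting, and then invoke Theorem~\ref{thm:intro-general-w-b-edcs} to convert the stored sparsifier into the desired approximation ratio.

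First I would fix $\beta$ and $\beta^-$ to satisfy the hypotheses of Theorem~\ref{thm:intro-general-w-b-edcs}, which gives $\beta = poly(\log m, W, 1/\varepsilon)$. The algorithm maintains a subgraph $H \subseteq G$ according to a local rule: upon arrival of an edge $e = (u,v,w_{uv})$, insert $e$ into $H$ if doing so keeps constraint~(i) of Definition~\ref{def:intro-w-b-edcs} satisfied; otherwise, look for a currently incident edge $e' \in H$ such that removing $e'$ and inserting $e$ both restores constraint~(i) and improves constraint~(ii) locally (a beneficial swap). This update rule enforces constraint~(i) as an invariant throughout the stream, and the swaps are what tend to repair violations of constraint~(ii). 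At the end of the stream, I would run any exact offline maximum weight $b$-matching algorithm on the stored $H$; Theorem~\ref{thm:intro-general-w-b-edcs} then certifies the $2 - \tfrac{1}{2W} + \varepsilon$ approximation.

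The memory bound follows from constraint~(i): it implies $\sum_{v} \wdeg_H(v)/b_v = O(\beta \cdot |E(H)|)$, which combined with an LP-duality argument on the fractional $b$-matching polytope (the same one used to bound the size of a $w$-$b$-EDCS in the existence proof) yields $|E(H)| = O(\beta \cdot |M_G|)$. Storing $H$, together with a weighted-degree counter and capacity for each of the $n$ vertices, costs $O(\max(|M_G|, n) \cdot poly(\log m, W, 1/\varepsilon))$ bits as claimed.

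The main obstacle, and the technical heart of the proof, is establishing that with probability at least $1 - 2m^{-3}$ the final $H$ is a valid $(\beta, \beta^-)$-$w$-$b$-EDCS of $G$. Since constraint~(i) is an invariant, the delicate step is constraint~(ii) for edges in $G \setminus H$. Mirroring Bernstein's analysis, I would divide the stream into windows of appropriate length and show that for each edge $e = (u,v,w_{uv}) \in G \setminus H$, the event $\tfrac{\wdeg_H(u)}{b_u} + \tfrac{\wdeg_H(v)}{b_v} < \beta^- \cdot w_{uv}$ at the end is improbable: by random-order exchangeability, the set of edges present in $H$ at any point of the stream is distributed like a near-uniform random subset of a suitable reference EDCS, so a Chernoff-type (or martingale) concentration bound keeps the weighted degrees at $u$ and $v$ within a $(1\pm \lambda)$ multiplicative window around their expected trajectory. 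A union bound over $O(m)$ candidate edges and $O(m)$ windows delivers the $1 - 2m^{-3}$ guarantee. The hard part is that in the weighted regime each edge contributes up to $W$ (rather than $1$) to the sums of interest, inflating the variance by a factor of order $W^2$; this is exactly why the slack term $6W$ appears in the hypothesis of Theorem~\ref{thm:intro-general-w-b-edcs}, and tracking it carefully through the concentration argument, while adjusting $\lambda$ to $\varepsilon/(100W)$, is the technical crux that allows the weighted streaming analysis to go through with the same qualitative structure as Bernstein's original proof.
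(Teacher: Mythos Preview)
Your proposal has two substantive gaps relative to what actually works (and to what Bernstein does in the unweighted case).

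First, the algorithmic scheme is not the one that yields the result. Neither Bernstein nor this paper attempts to maintain a full EDCS throughout the stream and then argue that at termination both properties~(i) and~(ii) of Definition~\ref{def:intro-w-b-edcs} hold. The actual algorithm has two phases: in an $\varepsilon$-fraction prefix of the stream one builds a subgraph $H$ satisfying \emph{only} Property~(i) (bounded weighted edge-degree), and in the remainder of the stream one separately collects into a set $X$ all edges that are $(H,\beta,\beta^-)$-underfull with respect to the now-frozen $H$. The matching is then computed in $H \cup X$, not in $H$ alone. The correctness hinges on Lemma~\ref{lem:underfull-matching}, which shows that even though $H \cup X$ is \emph{not} a $w$-$b$-EDCS of $G$, the set $H \cup X^M$ (where $X^M = X \cap M_G$) \emph{is} a $(\beta+2W,\beta^-)$-$w$-$b$-EDCS of the smaller graph $H \cup (M_G \setminus H)$, which is enough. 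Your concentration sketch---that ``the set of edges present in $H$ at any point is distributed like a near-uniform random subset of a suitable reference EDCS''---is not a known fact and there is no obvious way to make it precise; different arrival orders produce genuinely different $H$'s, and nothing forces Property~(ii) to hold for edges that arrived early and were rejected. The randomness of the stream is used only to argue that if an entire interval of length $\alpha$ contains no underfull edge, then with high probability few underfull edges remain in the suffix (so $|X|$ is small); it is not used to control the structure of $H$ itself.

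Second, you omit the guessing of $|M_G|$, which is precisely the new technical difficulty in the $b$-matching case (see the discussion around Algorithm~\ref{algo:w-b-matching}). For simple matching one knows $|M_G| \le n$, so the interval length $\alpha$ and the number of intervals can be fixed in advance. For $b$-matching $|M_G|$ can be much larger than $n$, and the bound on the number of local-improvement steps in the potential argument of Proposition~\ref{prop:exist-w-b-edcs} is $O(\beta^2 W^2 |M_G|)$, which is unknown. The paper handles this by doubling a guess $2^i$ for $|M_G|$ and shrinking $\alpha_i$ accordingly, so that Phase~1 still fits in $\varepsilon m$ edges while the eventual bound $\gamma$ on $|X|$ stays $O(|M_G|\cdot poly(\log m, W, 1/\varepsilon))$. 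Without this mechanism your memory claim is unsupported: you cannot simultaneously ensure Phase~1 finishes in an $\varepsilon$-prefix and bound the number of underfull edges collected later.
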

    
    Theorem~\ref{thm:intro-streaming} is the first result for the maximum (integer-weighted) $b$-matching problem in the random-order semi-streaming framework. For the special case of simple matching, when $W=1$, we essentially re-capture the result of  Bernstein~\cite{bernstein:LIPIcs:2020:12419} (albeit using slightly more memory). When $W>1$, we note that prior to our work, Gamlath \emph{et al.}~\cite{GamlathSSS2019} have obtained an approximation ratio of $2 - \delta$ for some small $\delta \sim 10^{-17}$. Our result gives a better approximation when $W$ is reasonably small (but using a memory depending polynomially in $W$) and we believe that our approach is significantly simpler. 
    
    \begin{remark}
        Another generalization of EDCS has been developed by Bernstein \emph{et al.}~\cite{BernsteinDL21} to maintain a $3/2 + \varepsilon$ approximation of the optimal weighted matching in a dynamic graph. However it is still unknown if their construction can actually lead to an algorithm in the random-order one-pass semi-streaming model~\cite{BernsteinDL21}, or applied to $b$-matchings.
    \end{remark}

    \paragraph*{Technical Overview}
    
        To generalize the EDCS to the weighted case, a very natural first idea is to build multiple EDCSes, one for each edge-weight from $1$ to $W$, and then take their union. We show in Appendix~\ref{app:copies-edcs} that such an idea does not lead to a subgraph containing a matching that is better than a $2$ approximation. 
    
        Our approach is a proper generalization of EDCS, as defined in Definition~\ref{def:w-b-edcs}. 
        In Theorem~\ref{thm:intro-general-w-b-edcs}, we show that such a $w$-$b$-EDCS contains a matching of good approximation ratio. The proof of this theorem is technically the most innovative part of the present work. In order to handle integer-weighted matchings (see Section~\ref{sec:weighted-matching}) we use K\H{o}nig-Egerváry's duality theorem~\cite{Egervary1931matrixok} and the construction of an auxiliary graph. The fact that the weights of the edges are integers is critical to get an approximation ratio better than $2$ (especially for Claim~\ref{claim:average}). Then, to handle $b$-matchings (see Section~\ref{sec:weighted-b-matching}), we build a reduction to simple matchings and show that by doing so we do not lose too much in the approximation ratio.
        
        Regarding Theorem~\ref{thm:intro-streaming}, when we design a semi-streaming algorithm to extract a $b$-matching there is an additional challenge: we do not know in advance the actual size of $M_G$, which cannot be bounded by $n$ (for instance $|M_G|$ could be of size $n^{1.2}$ or even larger), but we still want to use as little memory as possible, \emph{i.e.} $O(\max(|M_G|, n) \cdot poly(\log(m),W,1/\varepsilon))$. We tackle this issue using a guessing technique in the early phase of the stream (see Section~\ref{sec:random-order}).

    \paragraph*{Related Work}
    
    In the adversarial semi-streaming setting, for the unweighted case, the simple greedy algorithm building a maximal matching provides a $2$ approximation, which is the best known approximation ratio. Knowing whether it is possible to achieve a better approximation ratio is a major open question in the field of streaming algorithms. For weighted matchings an approximation ratio of $2 + \varepsilon$ can be achieved~\cite{GW2019,LW2021,PS2017}. For weighted $b$-matchings the approximation ratio $2 + \varepsilon$ can also be attained~\cite{HuangS21}. On the hardness side, we know that an approximation ratio better than $1 + \ln 2 \approx 1.69$ cannot be achieved~\cite{Kapralov2021}.
        
    In contrast, for the \emph{random-order} stream, a first result was obtained by Konrad, Magniez, and Mathieu~\cite{KonradMM12} with an approximation ratio strictly below $2$ for unweighted simple matchings. The approximation ratio was then improved in a sequence of papers~\cite{GamlathSSS2019,Konrad18,FarhadiHMRR20,bernstein:LIPIcs:2020:12419}. Currently the best result is due to Assadi and Behnezhad~\cite{AssadiB21}, who obtained the ratio of $3/2 - \delta$ for some small constant $\delta \sim 10^{-14}$. Regarding weighted simple matchings, Gamlath \emph{et al.}~\cite{GamlathSSS2019} obtained an approximation ratio of $2 - \delta$ for some small constant $\delta \sim 10^{-17}$. Regarding $b$-matchings, to our knowledge the only result is an approximation ratio of $2 - \delta$ \emph{in expectation} for random-order online matroid intersection by Guruganesh and Singla~\cite{GuruganeshS17} (hence it applies for unweighted bipartite $b$-matchings).

\section{EDCS for Weighted Matchings}
    \label{sec:weighted-matching}

    In this section we consider the problem of finding a maximum weight matching in an edge-weighted graph $G = (V, E)$ where the edges have integer weights in $[1,W]$. For ease of presentation, we will use simplified notations for \emph{simple} graphs in this section. Here $w(u,v)$ denotes edge weight between vertices $u$ and $v$. For a subgraph $H$ of $G$ and a vertex $u \in V$, we denote by $N_H(v)$ the set of vertices adjacent to $v$ in $H$, by $\deg_H(v)$ the degree of $v$ in $H$, \emph{i.e.}, $\deg_H(v) = |N_H(v)|$, and by $\wdeg_H(v)$ the weighted degree of $v$ in $H$, \emph{i.e.}, $\wdeg_H(v) = \sum_{u\in N_H(v)} w(u,v)$. For a subgraph $H$ of $G$, we will denote by $M_H$ an arbitrary maximum weight matching in $H$. Then we define the notion of edge-degree constrained subgraphs for weighted graphs ($w$-EDCS), which in fact is just Definition~\ref{def:intro-w-b-edcs} specialized to the setting in this section. 
    \begin{definition}
        \label{def:w-edcs}
         Let $G = (V,E)$ be a graph with weighted edges, and $H$ be a subgraph of $G$. Given any integer parameters $\beta \geq 3$ and $\beta^- \leq \beta - 2$, we say that $H$ is a $(\beta, \beta^-)$-$w$-EDCS of $G$ if $H$ satisfies the following properties:
        \begin{enumerate}[(i)]
            \item \makebox[12em][l]{For any edge $(u, v) \in H$,} $\wdeg_H(u) + \wdeg_H(v) \leq \beta \cdot w(u,v)$
            \item \makebox[12em][l]{For any edge $(u, v) \in G \backslash H$,} $\wdeg_H(u) + \wdeg_H(v) \geq \beta^- \cdot w(u,v)$.
        \end{enumerate}
    \end{definition}
    
    Here is a first simple proposition on $(\beta,\beta^-)$-$w$-EDCS (coming from Property~(i)).
    \begin{proposition}
        \label{prop:w-edcs-bounded-degree}
        Let $H$ be a $(\beta,\beta^-)$-$w$-EDCS of a given graph $G$. Then, for all $v \in V$, we have $\deg_H(v) \leq \beta$.
    \end{proposition}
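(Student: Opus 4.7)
The plan is a short direct argument using Property~(i) of the $w$-EDCS combined with the fact that edge weights are integers at least $1$. If $\deg_H(v) = 0$ the bound holds trivially, so assume $\deg_H(v) \geq 1$.

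First, I would pick an edge incident to $v$ in $H$ of minimum weight. Let $(u^*, v) \in H$ be such an edge, and write $w^* = w(u^*, v)$. Since weights are in $[1, W]$, we have $w^* \geq 1$. Applying Property~(i) of Definition~\ref{def:w-edcs} to the edge $(u^*, v)$ gives
\[
    \wdeg_H(v) \;\leq\; \wdeg_H(u^*) + \wdeg_H(v) \;\leq\; \beta \cdot w^*.
\]

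Second, I would bound the weighted degree of $v$ from below by its degree. Since $w^*$ is the minimum weight among the edges of $H$ incident to $v$, every edge $(u,v) \in H$ satisfies $w(u,v) \geq w^*$, so
\[
    \wdeg_H(v) \;=\; \sum_{u \in N_H(v)} w(u,v) \;\geq\; \deg_H(v) \cdot w^*.
\]

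Combining the two inequalities yields $\deg_H(v) \cdot w^* \leq \beta \cdot w^*$, and dividing by $w^* \geq 1 > 0$ gives $\deg_H(v) \leq \beta$, as desired. There is no real obstacle here; the only subtle point is to pick the minimum-weight incident edge to apply Property~(i), rather than an arbitrary one, so that the inequality $\wdeg_H(v) \geq \deg_H(v) \cdot w^*$ goes through.
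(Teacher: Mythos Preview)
Your proof is correct and essentially identical to the paper's: both pick a minimum-weight incident edge, apply Property~(i) to get $\wdeg_H(v) \leq \beta \cdot w^*$, and then use $\wdeg_H(v) \geq \deg_H(v)\cdot w^*$ to conclude.
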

    
    \begin{proof}
        Let $v \in V$. If $N_H(v) = \emptyset$, the stated property is trivial. Otherwise, pick a vertex $u$ such that $w(u,v) = \min_{u' \in N_H(v)}w(u', v)$. Then, by Property~(i), $\beta \cdot w(u, v) \geq \wdeg_H(v)$. Therefore, $\deg_H(v) \leq \frac{\wdeg_H(v)}{w(u,v)} \leq \beta$, because we have chosen $u$ so that any edge incident to $v$ in $H$ has a weight larger or equal to $w(u,v)$. 
    \end{proof}
    
    We show the existence of $w$-EDCSes by construction, using a local search 
    algorithm. The following proof closely follows the argument of~\cite{AssadiB19}.
    
    \begin{proposition} \label{prop:exist-w-edcs}
        Any graph $G = (V, E)$ with weighted edges contains a $(\beta, \beta^-)$-$w$-EDCS for any parameters $\beta \geq \beta^- + 2$. Such a $(\beta, \beta^-)$-$w$-EDCS can be found in $O(\beta^2 W^2 \cdot n)$ time.
    \end{proposition}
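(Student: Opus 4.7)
The plan is to construct a $(\beta,\beta^-)$-$w$-EDCS explicitly by a natural local-search procedure: initialize $H = \emptyset$ and, while some edge $(u,v) \in G$ violates one of the two conditions of Definition~\ref{def:w-edcs}, ``repair'' it---remove $(u,v)$ from $H$ if it violates property~(i), and insert $(u,v)$ into $H$ if it violates property~(ii). By construction, if this loop ever halts, the resulting $H$ is a valid $(\beta,\beta^-)$-$w$-EDCS, so the entire content of the proposition reduces to bounding the number of repair steps. I would do this via a potential-function argument in the template of~\cite{BernsteinS15,AssadiB19}, adapted to the weighted setting.

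The candidate potential is the weighted analogue
\[
\Phi(H) \;=\; \frac{\beta+\beta^-}{2} \sum_{e\in H} w(e)^2 \;-\; \frac{1}{2} \sum_{v \in V} \wdeg_H(v)^2.
\]
A direct computation, exploiting the identity $\sum_v \wdeg_H(v)^2 = \sum_{(u,v) \in H} w(u,v)(\wdeg_H(u) + \wdeg_H(v))$, shows that inserting an edge $(u,v)$ of weight $w$ whose violation of (ii) gives $\wdeg_H(u)+\wdeg_H(v) \leq \beta^- w - 1$ changes the potential by
\[
\Delta \Phi \;\geq\; w^2 \cdot \frac{\beta-\beta^- - 2}{2} + w,
\]
and the same lower bound is obtained for the deletion of an edge violating (i), using the integrality constraint $\wdeg_H(u)+\wdeg_H(v) \geq \beta w + 1$. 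Under the hypothesis $\beta \geq \beta^-+2$, both quantities are at least $w \geq 1$, so $\Phi$ strictly increases by at least one at every repair step.

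To bound $\Phi$ from above, observe that when an edge of weight $w$ is inserted at an endpoint $v$, property~(ii) must have been violated beforehand, so the old value $\wdeg_H(v) < \beta^- w \leq \beta^- W$; deletions only decrease weighted degrees, and hence $\wdeg_H(v) \leq (\beta^-+1) W = O(\beta W)$ throughout the execution. Consequently $\sum_{e \in H} w(e) \leq \frac{1}{2}\sum_v \wdeg_H(v) = O(\beta W n)$, so $\Phi \leq \frac{\beta+\beta^-}{2}\, W \sum_e w(e) = O(\beta^2 W^2 n)$. Combined with the per-step gain of at least one, this yields an $O(\beta^2 W^2 n)$ bound on the number of iterations, and, with standard adjacency-list bookkeeping of violated edges, on the total running time as well.

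The main obstacle is picking the correct weighted potential: naively substituting $\wdeg_H$ for $\deg_H$ in the Bernstein--Stein potential is too weak, because a light insertion with $w=1$ then yields only a tiny gain, which would force a gap $\beta-\beta^- = \Omega(W)$ rather than the desired $\beta-\beta^- \geq 2$. Scaling the ``reward'' term by $w(e)^2$ (rather than by $w(e)$ or $1$) is precisely what extracts the residual $+w$ term in the computations above, preserving strict positivity even in the tight case $\beta - \beta^- - 2 = 0$. Once this scaling is identified, the rest of the argument is essentially the unweighted proof of~\cite{AssadiB19}.
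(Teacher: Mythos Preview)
Your proof is correct and follows the same local-search plus quadratic-potential approach as the paper, including the key observation that the reward term must be scaled by $w(e)^2$. The only differences are cosmetic: the paper uses the coefficient $2\beta-2$ in place of your $\tfrac{\beta+\beta^-}{2}$, and it bounds the degrees by first giving priority to repairing Property~(i) (so that Proposition~\ref{prop:w-edcs-bounded-degree} applies with $\deg_H(v)\le\beta+1$), whereas you bound $\wdeg_H(v)$ directly from the insertion precondition $\wdeg_H(u)+\wdeg_H(v)<\beta^- w$ and thereby avoid the need for any priority rule---a mild simplification.
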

    
    \begin{proof}
        Start with an empty subgraph $H$. Then try the following local improvements of $H$, until it is no longer possible. If there is an edge in $H$ violating Property~(i) of Definition~\ref{def:w-edcs}, then fix that edge by removing it from $H$. Otherwise, if there is an edge in $G \backslash H$ violating Property~(ii), then fix that edge by inserting it in $H$.
        
        Observe that we give the priority to the correction of violations of Property~(i), so that at each step of the algorithm all the vertices have degrees bounded by $\beta + 1$ (as after inserting an edge, Proposition~\ref{prop:w-edcs-bounded-degree} may be violated).
        To prove that this algorithm terminates in finite time and to show the existence of a $w$-EDCS, we introduce a potential function:
        \[\Phi(H) = (2\beta - 2) \sum_{(u,v) \in H}w(u,v)^2 - \sum_{u \in V} (\wdeg_H(u))^2.\]
        As the vertices have degrees bounded by $\beta + 1$ and the edges have weights bounded by $W$, the value of that potential function is bounded by $2\beta^2 W^2 \cdot n$. Then we can show that after each local improvement step, the value of $\Phi(H)$ increases at least by $2$.
        
        In fact, when Property~(i) is not satisfied by some edge $(u, v) \in H$, we have $\wdeg_H(u) + \wdeg_H(v) \geq \beta \cdot w(u,v) + 1$. If we fix it by erasing this edge from $H$ the value of $\Phi(H)$ is increased by $-(2\beta - 2) \cdot w(u,v)^2 + 2 \cdot \wdeg_H(u) \cdot w(u,v)  + 2 \cdot \wdeg_H(v) \cdot w(u,v) - 2 \cdot w(u,v)^2 \geq -(2\beta - 2) \cdot w(u,v)^2 + 2 \cdot (\beta \cdot w(u,v) + 1) \cdot w(u,v) - 2 \cdot w(u,v)^2 = 2 \cdot w(u,v) \geq 2$. 
            
        When Property~(ii) is not satisfied by some edge $(u, v) \in G \backslash H$, we have $\wdeg_H(u) + \wdeg_H(v) \leq \beta^- \cdot w(u,v) - 1$. If we insert $(u, v)$ in $H$, the value of $\Phi(H)$ increases by $(2\beta - 2) \cdot w(u,v)^2 - 2\cdot \wdeg_H(u) \cdot w(u,v) - 2 \cdot \wdeg_H(v) \cdot w(u,v) - 2 \cdot w(u,v)^2 \geq (2\beta - 2) \cdot w(u,v)^2 - 2\cdot( \beta^- \cdot w(u,v) - 1) \cdot w(u,v) - 2\cdot w(u,v)^2 \geq (2\cdot (\beta - \beta^-) - 4) \cdot w(u,v)^2 + 2 \cdot w(u,v)\geq 2 \cdot w(u,v) \geq 2$, as $\beta \geq \beta^- + 2$.
        
        Therefore, the algorithm terminates in $O(\beta^2 W^2\cdot n)$ steps.
    \end{proof}   
        
    We also introduce the notion of $w$-vertex-cover of the edge-weighted graph.
    \begin{definition}
        We say that the non-negative integer variables $(\alpha_v)_{v \in V}$ represent a $w$-vertex-cover of a subgraph $H$ of $G$ if for all $(u,v) \in H$, we have $w(u,v) \leq \alpha_u + \alpha_v$. The sum $\sum_{v \in V}\alpha_v$ is called the weight of the $w$-vertex-cover.
    \end{definition}
    
    To prove the apprixmation guarantee for the maximum weight problem, we will use the theorem of K\H{o}nig-Egerváry~\cite{Egervary1931matrixok}, which is a classic duality theorem.
    \begin{theorem}[K\H{o}nig-Egerváry] \label{thm:konig-egervary}
        In any edge-weighted bipartite subgraph $H$ of $G$, the maximum weight of a matching equals the smallest weight of a $w$-vertex-cover.
    \end{theorem}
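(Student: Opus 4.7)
The plan is to prove the two inequalities separately, $\max \leq \min$ and $\max \geq \min$, the latter being the substantive direction.

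For weak duality, fix any matching $M$ in $H$ and any $w$-vertex-cover $(\alpha_v)_{v \in V}$. Since $w(u,v) \leq \alpha_u + \alpha_v$ holds for every edge of $H$, and thus of $M$, summation yields
\[ w(M) = \sum_{(u,v) \in M} w(u,v) \;\leq\; \sum_{(u,v) \in M} (\alpha_u + \alpha_v) \;\leq\; \sum_{v \in V} \alpha_v, \]
where the last step uses that $M$ is a matching, so each vertex contributes to at most one term of the middle sum. Maximizing the left side over $M$ and minimizing the right side over $\alpha$ gives one direction.

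For strong duality, I would appeal to linear programming duality combined with total unimodularity. The primal LP maximizes $\sum_e w(e) x_e$ subject to $\sum_{e \ni v} x_e \leq 1$ for each $v \in V$ and $x \geq 0$; because the edge-vertex incidence matrix of a bipartite graph is totally unimodular, this LP has an integer optimal solution, which is exactly the characteristic vector of a maximum-weight matching in $H$. The dual LP minimizes $\sum_v y_v$ subject to $y_u + y_v \geq w(u,v)$ for each edge $(u,v) \in H$ and $y \geq 0$; its constraint matrix is the transpose of the primal's (hence also totally unimodular) and the right-hand sides $w(u,v)$ are integers, so by total dual integrality the dual too admits an integer optimum. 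That integer optimum is precisely a minimum-weight $w$-vertex-cover, and strong LP duality equates the two optima.

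The only place bipartiteness is genuinely used is the integrality of the dual optimum; for general graphs the natural LP relaxation of matching is loose and vertex covers may be half-integral, so the statement fails. If I wanted to avoid invoking LP machinery, a self-contained alternative would be a Hungarian-style primal-dual construction: initialize a trivially feasible integer cover (e.g. $\alpha_u = \max_{v} w(u,v)$ on one side of the bipartition and $\alpha_v = 0$ on the other), repeatedly augment the current matching along edges tight for the cover, and when no augmenting path exists, uniformly decrease the cover on one alternating layer and increase it on the other by the smallest amount that introduces a new tight edge, terminating with a matching and cover of equal weight satisfying complementary slackness. Executing this cleanly, in particular verifying that all intermediate covers remain integer-valued, would be the main piece of work for a from-scratch proof.
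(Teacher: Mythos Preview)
The paper does not supply its own proof of this statement; it is quoted as the classical K\H{o}nig--Egerv\'ary duality theorem with a citation to Egerv\'ary's original 1931 paper, and is used as a black box in the proof of Lemma~\ref{lem:bipartite-w-edcs}. Your proposal is therefore not being compared against anything in the paper.

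That said, your argument is correct and standard. The weak-duality direction is fine as written. For strong duality, your LP/total-unimodularity route is sound in this setting: the incidence matrix of a bipartite graph is TU, so the primal (matching) LP has an integral optimum; the dual constraint matrix is the transpose (still TU) and, crucially, the paper's edge weights $w(u,v)$ are integers in $\{1,\dots,W\}$, so the dual polyhedron has integral vertices and hence an integral optimum, which is exactly a minimum-weight $w$-vertex-cover in the paper's sense (the $\alpha_v$ are required to be non-negative integers). Your remark that bipartiteness is only needed for dual integrality is accurate, and the Hungarian alternative you sketch would also work and keeps everything integral throughout. Nothing is missing.
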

    
    This theorem allows us to prove the following lemma, which is technically the most important part of the present work. 
    
    \begin{lemma} \label{lem:bipartite-w-edcs}
        Let $0 < \varepsilon < 1/2$ and $W$ be an integer parameter. For $\beta \geq \beta^- + 2$ integers such that $\frac{\beta}{\beta^-} \leq 1 + \frac{\varepsilon}{5W}$ and $\beta^- \geq \frac{4W}{\varepsilon}$, we have that any $(\beta, \beta^-)$-$w$-EDCS $H$ of a bipartite graph $G$ (with integer edge weights bounded by $W$) contains a matching $M_H$ such that $\left(2 - \frac{1}{2W} + \varepsilon\right)\cdot w(M_H) \geq w(M_G)$.
    \end{lemma}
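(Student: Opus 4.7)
The plan is to prove the equivalent inequality $w(M_G) \leq (2 - \frac{1}{2W} + \varepsilon) \cdot w(M_H)$ by exhibiting, via Theorem~\ref{thm:konig-egervary}, a $w$-vertex-cover of $M_G$ whose total weight is at most $(2 - \frac{1}{2W} + \varepsilon) \cdot w(M_H)$. Since $M_G$ is a matching, any $w$-vertex-cover of $M_G$ of weight $W^\star$ automatically gives $w(M_G) \leq W^\star$, which is the direction I need.

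First, since $G$ (and hence $H$) is bipartite, Theorem~\ref{thm:konig-egervary} gives an integer-valued minimum $w$-vertex-cover $(\alpha_v)_{v \in V}$ of $H$ with $\sum_v \alpha_v = w(M_H)$. This cover already handles $M_G \cap H$ for free. The key step is to ``lift'' $(\alpha_v)$ to a $w$-vertex-cover $(\alpha'_v)_{v \in V}$ of the \emph{entire} $M_G$, at the price of at most a factor $2 - \frac{1}{2W} + \varepsilon$ in the total weight. To do this, I will construct an auxiliary graph whose edges are the edges of $H$ weighted by a potential derived from $(\alpha_v)$, and define $\alpha'_v = \alpha_v + \Delta_v$ where the correction $\Delta_v$ is an integer built from the structure of this auxiliary graph around $v$. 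The correction must satisfy $\alpha'_u + \alpha'_v \geq w(u,v)$ for every edge $(u,v) \in M_G \setminus H$, and the bound on $\sum_v \Delta_v$ will come from Property~(i) of Definition~\ref{def:w-edcs}, which caps the weighted degrees of the endpoints of any $H$-edge and thereby limits how much any single $\alpha_u$ can be used as a ``donor'' of correction mass to its neighbors.

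For an edge $(u,v) \in M_G \setminus H$, Property~(ii) gives $\wdeg_H(u) + \wdeg_H(v) \geq \beta^- \cdot w(u,v)$, so the neighborhoods of $u$ and $v$ in $H$ contain a lot of weight that is already paid for by $(\alpha_v)$. Distributing this paid-for weight carefully into $\Delta_u$ and $\Delta_v$ via a double-counting argument, and combining Properties~(i) and (ii) through the bound $\beta/\beta^- \leq 1 + \varepsilon/(5W)$, will yield a naïve approximation factor of $2 + O(\varepsilon)$. The improvement from $2$ down to $2 - \frac{1}{2W}$ comes from the integrality of the edge weights: since each edge weight is an integer in $[1,W]$, whenever the ``average weight'' of the $H$-edges incident to $u$ (or to $v$) is bounded from above by $w(u,v)$, an elementary pigeon-hole plus integrality argument forces it to actually be at most $w(u,v) - \tfrac{1}{2}$ on at least one side, and this discrete slack propagates to a saving of $\tfrac{1}{2W}$ in the denominator of the final ratio. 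This is the averaging step for which the integer-weight assumption is essential.

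I expect the construction of $(\alpha'_v)$ to be the technical heart of the argument: one must simultaneously ensure that every edge of $M_G \setminus H$ is covered, that the double counting in $\sum_v \Delta_v$ is controlled against $\sum_v \alpha_v = w(M_H)$, and that the integrality saving propagates cleanly through to the final ratio. The hypothesis $\beta^- \geq 4W/\varepsilon$ is there so that the error terms produced by the discrete rounding of the $\alpha$-values and by edges of $M_G$ whose weight is close to the maximum $W$ can be absorbed into the $\varepsilon$ slack, while $\beta \geq \beta^- + 2$ (as in Proposition~\ref{prop:exist-w-edcs}) guarantees that a $w$-EDCS with these parameters exists in the first place.
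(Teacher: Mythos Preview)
Your outline correctly identifies the ingredients (K\H{o}nig--Egerv\'ary on $H$, Properties~(i)--(ii) of the $w$-EDCS, an auxiliary graph, and an integrality step) but it is not yet a proof: the construction of $\Delta_v$ and of the auxiliary graph are left completely unspecified, and the description of the integrality step is not the right one. In particular, the $\tfrac{1}{2W}$ saving does \emph{not} come from a pigeon-hole bound on the average weight of $H$-edges at an endpoint; it comes from the fact that for certain edges of $M_G$ the integer $w(u,v)$ would otherwise be forced into an open interval of length strictly less than~$1$ determined by $\alpha_u+\alpha_v$ and $\beta/\beta^-$. Your ``average at most $w(u,v)-\tfrac{1}{2}$ on one side'' statement is neither what is needed nor what one can prove here.

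Concretely, the paper does not lift $(\alpha_v)$ to a cover of $M_G$. Instead it splits $M_G$ into \emph{good} edges (those with $\beta^- w(u,v)\le \beta(\alpha_u+\alpha_v)$, covered up to a $\beta/\beta^-$ factor by the existing $\alpha$'s) and \emph{bad} edges (the rest, necessarily in $G\setminus H$), and writes
\[
\beta^- w(M_G)\ \le\ \beta\, w(M_H)\ +\ \sum_{v\in V_{\mathrm{bad}}}\bigl(\wdeg_H(v)-\beta\alpha_v\bigr),
\]
where $V_{\mathrm{bad}}$ collects the bad endpoints with positive excess $\wdeg_H(v)-\beta\alpha_v$. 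The remaining term is bounded via an auxiliary bipartite graph $H_{\mathrm{bad}}$ on $V_{\mathrm{bad}}\cup\tilde V$, whose edges are the $H$-edges out of $V_{\mathrm{bad}}$ reweighted by $w(u,v)-\alpha_v$; Property~(i) then yields a quadratic inequality relating $w(\tilde E)$ to $|V_{\mathrm{bad}}|$ and $\sum_{\tilde u}\alpha_u\le w(M_H)$. The integrality enters only at the final step: one shows (Claim~\ref{claim:average} in the paper) that for every bad edge $(u,v)$ the excess $(\wdeg_H(u)-\beta\alpha_u)_++(\wdeg_H(v)-\beta\alpha_v)_+$ is at least $\beta^-/(1+\varepsilon/4)$, because otherwise $w(u,v)$ would lie in the open interval $\bigl(\tfrac{\beta}{\beta^-}(\alpha_u+\alpha_v),\ \tfrac{\beta}{\beta^-}(\alpha_u+\alpha_v)+\tfrac{1}{1+\varepsilon/4}\bigr)$, which contains no integer once $\beta/\beta^-\le 1+\varepsilon/(5W)$ and $\alpha_u+\alpha_v\le W$. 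This lower bound on the per-edge excess is what turns the na\"ive $2$ into $2-\tfrac{1}{2W}$; without it you will only recover $2+\varepsilon$. Your sketch does not contain this mechanism, and the alternative you describe would not produce it.
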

    
    \begin{proof}
        Using K\H{o}nig-Egerváry's theorem in the bipartite graph $H$, we know that there exist integers $(\alpha_v)_{v \in V}$ such that:
        \begin{itemize}
            \item $\sum_{v \in V}\alpha_v = w(M_H)$
            \item for all $(u,v) \in H$, $w(u,v) \leq \alpha_u + \alpha_v$
        \end{itemize}
        Now consider the optimal matching $M_G$ in $G$. The first idea is to use the duality theorem to relate $w(M_G)$ to $w(M_H)$, with a leftover term that will be analyzed in the second part of the proof. 
        
        We introduce the notion of \emph{good} and \emph{bad} edges:
        \begin{itemize}
            \item the edges $(u,v) \in M_G$ such that $\beta^- \cdot w(u,v) \leq \beta \cdot (\alpha_u + \alpha_v)$, which are called \emph{good edges}; the set of good edges is denoted as $M_{good}$
            \item the edges $(u,v) \in M_G$ such that $\beta^- \cdot w(u,v) > \beta \cdot (\alpha_u + \alpha_v)$, which are called \emph{bad edges}; the set of bad edges is denoted as $M_{bad}$
        \end{itemize}
        
        A key observation is that the edges in $M_G \cap H$ are necessarily good edges by the definition of the $w$-vertex-cover $(\alpha_v)_{v \in V}$ and the fact that $\beta^- < \beta$. Therefore, the bad edges $(u,v)$ are in $G \backslash H$ and as a consequence they satisfy Property~(ii) of Definition~\ref{def:w-edcs}, \emph{i.e.} $\beta^- \cdot w(u,v) \leq \wdeg_H(u) + \wdeg_H(v)$.
        
        Hence we can write the following:
        \begin{align*}
            \beta^- &\cdot w(M_G) = \sum_{(u,v) \in M_{good}}\beta^- \cdot w(u,v) + \sum_{(u,v) \in M_{bad}}\beta^- \cdot w(u,v)\\
            &\leq \sum_{(u,v) \in M_{good}}\beta \cdot (\alpha_u + \alpha_v) + \sum_{(u,v) \in M_{bad}}(\wdeg_H(u) + \wdeg_H(v))\\
            &= \sum_{(u,v) \in M_G} \beta \cdot (\alpha_u + \alpha_v) + \sum_{(u,v) \in M_{bad}}(\wdeg_H(u) + \wdeg_H(v) - \beta \cdot (\alpha_u + \alpha_v))\\
            &\leq \beta \cdot w(M_H) + \sum_{(u,v) \in M_{bad}}((\wdeg_H(u) - \beta \cdot \alpha_u)_+ + (\wdeg_H(v) - \beta \cdot \alpha_v)_+),
        \end{align*}
        where $(x)_+$ denotes the non-negative part $\max(x, 0)$. In the last inequality we also used the fact that $\sum_{(u,v) \in M_G} (\alpha_u + \alpha_v) \leq \sum_{v \in V}\alpha_v = w(M_H)$, as each vertex of $V$ is counted at most once in the left-hand sum (because $M_G$ is a matching). Now, denoting by $V_{bad}$ the set of vertices which are the endpoints of a bad edge and such that $\wdeg_H(u) - \beta \cdot \alpha_u > 0$, we get
        \begin{equation}\beta^- \cdot w(M_G) \leq \beta \cdot w(M_H) + \sum_{v \in V_{bad}} (\wdeg_H(v) - \beta \cdot \alpha_v). 
        \label{equ:MGMH}
        \end{equation}
        
        Naturally, we want to upper-bound the value of $\sum_{v \in V_{bad}} (\wdeg_H(v) - \beta \cdot \alpha_v)$ and we will do so \emph{via} a specially-constructed graph. Before we describe this graph, we can first easily 
        observe that for any $v \in V_{bad}$, for any $u \in N_H(v)$, we have $w(u,v) \geq \frac{\wdeg_H(v)}{\beta} > \alpha_v$ (by Property~(i) of Definition~\ref{def:w-edcs} and the definition of $V_{bad}$); moreover, as $(\alpha_v)_{v \in V}$ is a $w$-vertex-cover of $H$, we obtain that $\alpha_u > 0$. These observations 
        will be useful in the following. 
        
        The new graph is $H_{bad} = (V_{bad} \cup \tilde{V}, \tilde{E})$. The vertices in $H_{bad}$ are the vertices of $V_{bad}$ as well as copies of the vertices of $V$ such that $\alpha_v > 0$, \emph{i.e.} $\tilde{V} = \{\tilde{v}: v \in V, \alpha_v > 0\}$. We build the set of edges $\tilde{E}$ as follows. For each $v \in V_{bad}$, for each $u \in N_H(v)$, we create in $\tilde{E}$ an edge $(v, \tilde{u})$ such that $w(v, \tilde{u}) = w(v,u) - \alpha_v$ (note that if $u$ is also in $V_{bad}$, then $\tilde{E}$ will also contain another edge $(u,\tilde{v})$ such that $w(u, \tilde{v}) = w(u,v) - \alpha_u$). 
        Note that $w(v, \tilde{u}) \in \mathbb{Z}_{> 0}$, since $w(u,v) > \alpha_v$ as observed above. Therefore the graph $H_{bad}$ still has non-negative 
        integer-valued edge weights. We next remove some edges from $\tilde{E}$: 
        while there exists a vertex $v \in V_{bad}$ such that $\wdeg_{H_{bad}}(v) > \wdeg_H(v) - \beta \cdot \alpha_v + W$, we pick an arbitrary edge $(v, \tilde{u}) \in \tilde{E}$ incident to $v$ and remove it from $H_{bad}$. 
        This process guarantees the following property: 
        \begin{equation} 
        \forall v \in V_{bad}, \wdeg_H(v) - \beta \cdot \alpha_v \leq \wdeg_{H_{bad}}(v) \leq \wdeg_H(v) - \beta \cdot \alpha_v + W.
        \label{equ:HbadWeightedDegree}
        \end{equation}
        
        This finishes the description of the graph $H_{bad}$. By~(\ref{equ:HbadWeightedDegree}), for any $(v, \tilde{u}) \in \tilde{E}$ we have:
        \begin{align*}
            \beta \cdot w(v, \tilde{u}) + W = \beta \cdot (w(v,u) - \alpha_v) + W
            &\geq \wdeg_H(v) - \beta \cdot \alpha_v + W + \wdeg_H(u)\\
            &\geq \wdeg_{H_{bad}}(v) + \wdeg_{H_{bad}}(\tilde{u}).
        \end{align*}
        Summing this inequality over all the edges in $\tilde{E}$ we obtain:
        \begin{align*}
            \beta &\cdot w(\tilde{E}) + W \cdot |\tilde{E}| \geq \sum_{(v,\tilde{u}) \in \tilde{E}}(\wdeg_{H_{bad}}(v) + \wdeg_{H_{bad}}(\tilde{u}))\\
            &= \sum_{v \in V_{bad}}\deg_{H_{bad}}(v) \cdot \wdeg_{H_{bad}}(v) + \sum_{\tilde{u} \in \tilde{V}}\deg_{H_{bad}}(\tilde{u}) \cdot \wdeg_{H_{bad}}(\tilde{u})\\
            &\geq \sum_{v \in V_{bad}}\frac{\wdeg_{H_{bad}}(v)}{W} \cdot \wdeg_{H_{bad}}(v) + \sum_{\tilde{u} \in \tilde{V}}\frac{\wdeg_{H_{bad}}(\tilde{u})}{\alpha_u} \cdot \wdeg_{H_{bad}}(\tilde{u})\\
            &= \sum_{v \in V_{bad}}\frac{(\wdeg_{H_{bad}}(v))^2}{W} + \sum_{\tilde{u} \in \tilde{V}}\frac{(\wdeg_{H_{bad}}(\tilde{u}))^2}{\alpha_u}\\
            &\geq \sum_{v \in V_{bad}}\frac{1}{W}\cdot \left(\frac{w(\tilde{E})}{|V_{bad}|}\right)^2 + \sum_{\tilde{u} \in \tilde{V}}\frac{1}{\alpha_u}\cdot \left(\frac{w(\tilde{E}) \cdot \alpha_u}{\sum_{\tilde{u}' \in \tilde{V}} \alpha_{u'}}\right)^2\\
            &= \frac{w(\tilde{E})^2}{W \cdot |V_{bad}|} + \frac{w(\tilde{E})^2}{\sum_{\tilde{u}' \in \tilde{V}} \alpha_{u'}}.
        \end{align*}
        The second inequality comes from the fact that the degree of a vertex can be lower-bounded by the weighted degree of that vertex divided by the weight of the largest edge incident to it (for $v \in V_{bad}$ this weight is $W$, and for $\tilde{u} \in \tilde{V}$ it is $\alpha_u$, as $w(v, \tilde{u}) = w(v,u) - \alpha_v \leq \alpha_u$ for $v$ adjacent to $\tilde{u}$ in $H_{bad}$). The third inequality comes from the minimization of the function over the constraints $\sum_{v \in V_{bad}}\wdeg_{H_{bad}}(v) = \sum_{\tilde{u} \in \tilde{V}}\wdeg_{H_{bad}}(\tilde{u})=w(\tilde{E})$. Then we get
        \begin{equation}\beta + W \geq \frac{w(\tilde{E})}{W \cdot |V_{bad}|} + \frac{w(\tilde{E})}{\sum_{\tilde{u} \in \tilde{V}} \alpha_{u}}.
        \label{equ:betaW}
        \end{equation} 
        
        The following claim will help us lower bound the average weighted degree of the vertices of $V_{bad}$ in $H_{bad}$, namely, $w(\tilde{E})/|V_{bad}|$. For this part it is crucial that the weights are integers.
        
        \begin{claim} \label{claim:average}
            For all $(u,v) \in M_{bad}$,
            $(\wdeg_H(u) - \beta \cdot \alpha_u)_+ + (\wdeg_H(v) - \beta \cdot \alpha_v)_+ \geq \frac{\beta^-}{1 + \varepsilon/4}$
        \end{claim}
        
        \begin{proof}
        We proceed by contradiction. 
            Suppose that there exists $(u,v) \in M_{bad}$ such that $(\wdeg_H(u) - \beta \cdot \alpha_u)_+ + (\wdeg_H(v) - \beta \cdot \alpha_v)_+ < \frac{\beta^-}{1 + \varepsilon/4}$. Then, as $\beta^- \cdot w(u,v) \leq \beta \cdot (\alpha_u + \alpha_v) + (\wdeg_H(u) - \beta \cdot \alpha_u)_+ + (\wdeg_H(v) - \beta \cdot \alpha_v)_+$, it means that
            \[\beta \cdot (\alpha_u + \alpha_v) < \beta^- \cdot w(u,v) < \beta \cdot (\alpha_u + \alpha_v) + \frac{\beta^-}{1+\varepsilon/4}, \]
            and therefore by dividing by $\beta^-$ we obtain
            \[\frac{\beta}{\beta^-} \cdot (\alpha_u + \alpha_v) < w(u,v) < \frac{\beta}{\beta^-}  \cdot (\alpha_u + \alpha_v) + \frac{1}{1+\varepsilon/4}.\]
            As $(\alpha_u + \alpha_v)\in \{0,1,\dots, W\}$ (recall that $(u,v)$ is a bad edge) and because $\frac{\beta}{\beta^-} \leq 1 + \frac{\varepsilon}{5W} < 1 + \frac{\varepsilon}{4W \cdot (1 + \varepsilon/4)}$, there cannot be any integer in the open interval \[\left]\frac{\beta}{\beta^-} \cdot (\alpha_u + \alpha_v),\, \frac{\beta}{\beta^-}  \cdot (\alpha_u + \alpha_v) + \frac{1}{1+\varepsilon/4}\right[,\]
            implying that $w(u,v)$, which is an integer, cannot exist. The proof follows. 
        \end{proof}
        
        Recall that $u$ of $(u,v) \in M_{bad}$
        is part of $V_{bad}$ only if 
        $\wdeg_H(u) - \beta \cdot \alpha_u > 0$. 
        Claim~\ref{claim:average} then implies that given $(u,v) \in M_{bad}$, if both $u$ and $v$ are in $V_{bad}$, then $\wdeg_{H_{bad}}(u) + \wdeg_{H_{bad}}(v) \geq \frac{\beta^-}{1 + \varepsilon/4}$; if only $u$ is in $V_{bad}$, 
        then $\wdeg_{H_{bad}}(u) \geq \frac{\beta^-}{1 + \varepsilon/4}$. 
        We can thus infer that  $\frac{w(\tilde{E})}{|V_{bad}|} \geq \frac{\beta^-}{2 \cdot (1 + \varepsilon/4)}$ and we can rewrite (\ref{equ:HbadWeightedDegree}) as $\beta + W \geq \frac{\beta^-}{2W \cdot (1 + \varepsilon/4)} + \frac{w(\tilde{E})}{\sum_{\tilde{u} \in \tilde{V}} \alpha_{u}},$
        and therefore
        \begin{equation}
        \label{equ:betaWNew}
        \left(\beta + W - \frac{\beta^-}{2W \cdot (1 + \varepsilon/4)}\right) \cdot \sum_{\tilde{u} \in \tilde{V}} \alpha_{u} \geq w(\tilde{E}).
        \end{equation}
        We now can rebound the expression of~(\ref{equ:MGMH}) as follows: 
        \begin{align*}
            \beta^- \cdot w(M_G) &\leq \beta \cdot w(M_H) + \sum_{v \in V_{bad}} (\wdeg_H(v) - \beta \cdot \alpha_v)& \\
            &\leq \beta \cdot w(M_H) + \sum_{v \in V_{bad}} \wdeg_{H_{bad}}(v) & \mbox{(using~(\ref{equ:HbadWeightedDegree}))}\\
            &\leq \beta \cdot w(M_H) + w(\tilde{E})& \\
            &\leq \left(2\beta + W - \frac{\beta^-}{2W \cdot (1 + \varepsilon/4)}\right) \cdot w(M_H). & \mbox{(using~(\ref{equ:betaWNew}))}
        \end{align*}
        Re-arranging, 
        \[\left(2\frac{\beta}{\beta^-} + \frac{W}{\beta^-} - \frac{1}{2W \cdot (1 + \varepsilon/4)}\right) \cdot w(M_H) \geq w(M_G).\]
        As $\frac{\beta}{\beta^-} \leq 1 + \varepsilon/4$ and $\beta^- \geq \frac{4W}{\varepsilon}$ we obtain the desired result. 
    \end{proof}
    
    Now we can generalize this result to non-bipartite graphs.
    
    \begin{theorem} \label{thm:general-w-edcs}
        Let $0 < \varepsilon < 1/2$ and $W$ be an integer parameter. Set $\lambda = \frac{\varepsilon}{100W}$. For $\beta \geq \beta^- + 2$ integers such that $\frac{\beta}{\log(\beta)} \geq 2 W^2\lambda^{-2}$ and $\beta^- \geq (1 - \lambda) \cdot \beta$, we have that any $(\beta, \beta^-)$-$w$-EDCS $H$ of a graph $G$ (with integer edge weights bounded by $W$) contains a matching $M_H$ such that $\left(2 - \frac{1}{2W} + \varepsilon\right)\cdot w(M_H) \geq w(M_G)$.
    \end{theorem}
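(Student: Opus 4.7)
The plan is to reduce Theorem~\ref{thm:general-w-edcs} to the bipartite case (Lemma~\ref{lem:bipartite-w-edcs}) by a random bipartition argument in the spirit of Assadi and Bernstein~\cite{AssadiB19}. I would fix an arbitrary maximum weight matching $M_G$ of $G$ and build a random bipartition $V = L \sqcup R$ as follows: for each edge $(u,v) \in M_G$, flip an independent fair coin and place one endpoint in $L$ and the other in $R$; for vertices not covered by $M_G$, assign them independently to $L$ or $R$ with probability $1/2$. By construction, every edge of $M_G$ crosses the bipartition, so the bipartite subgraph $G_{LR}$ of $G$ contains $M_G$ in its entirety and therefore admits a matching of weight at least $w(M_G)$.

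Let $H_{LR}$ denote the bipartite subgraph of $H$ consisting of those edges that cross $L$--$R$. The central claim is that with positive probability over the bipartition, $H_{LR}$ is a $(\beta',\beta^{-\prime})$-$w$-EDCS of $G_{LR}$ with $\beta' \leq \tfrac{\beta}{2}(1+\lambda')$ and $\beta^{-\prime} \geq \tfrac{\beta^-}{2}(1-\lambda')$, for some parameter $\lambda' = \Theta(\lambda)$. This reduces to a two-sided multiplicative concentration: for each vertex $u$, $\wdeg_{H_{LR}}(u)$ is, up to a negligible $W/2$ bias from $u$'s $M_G$-partner, a sum of $\deg_H(u) \leq \beta$ independent contributions each bounded by $W$, with mean $\tfrac{1}{2}\wdeg_H(u)$. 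A Bernstein/Chernoff-type bound then gives multiplicative error $\lambda'$ with failure probability $\exp(-\Omega(\lambda'^2 \wdeg_H(u)/W))$, and the hypothesis $\beta/\log(\beta) \geq 2W^2\lambda^{-2}$ ensures this is at most $n^{-c}$ for every vertex with $\wdeg_H(u) = \Omega(\beta^-)$. A union bound over all $n$ vertices gives simultaneous concentration with high probability, after which Property~(i) and Property~(ii) of Definition~\ref{def:w-edcs} for $H_{LR}$ follow from upper-tail and lower-tail bounds applied to the inherited inequalities $\wdeg_H(u)+\wdeg_H(v) \leq \beta w(u,v)$ and $\wdeg_H(u)+\wdeg_H(v) \geq \beta^- w(u,v)$.

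It remains to feed these parameters into Lemma~\ref{lem:bipartite-w-edcs}. The ratio satisfies $\beta'/\beta^{-\prime} \leq (\beta/\beta^-)(1+\lambda')/(1-\lambda') \leq 1+O(\lambda)$, which fits under $1+\epsilon/(5W)=1+20\lambda$ for $\lambda' = \Theta(\lambda)$ chosen small enough; and $\beta^{-\prime} \geq \Omega(\beta^-) \geq \Omega(W^2/\lambda^2) = \Omega(W^4/\epsilon^2) \gg 4W/\epsilon$ by the hypothesis. Applying Lemma~\ref{lem:bipartite-w-edcs} to $H_{LR}$ then produces a matching $M_{H_{LR}} \subseteq H_{LR} \subseteq H$ of weight at least $w(M_{G_{LR}})/(2-\tfrac{1}{2W}+\epsilon) \geq w(M_G)/(2-\tfrac{1}{2W}+\epsilon)$, proving the theorem. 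The main obstacle is the concentration step for vertices with small $\wdeg_H$, where the multiplicative bound is too weak: for such $u$, I would observe that any edge $(u,v) \in G \setminus H$ triggering Property~(ii) forces the other endpoint to satisfy $\wdeg_H(v) \geq \beta^- w(u,v) - \wdeg_H(u)$, which is still large enough for concentration at $v$ to give the required lower bound on $\wdeg_{H_{LR}}(u)+\wdeg_{H_{LR}}(v)$ alone; the missing contribution of $u$ is at most $O(W^2\log(n)/\lambda^2) \ll \beta^-$ and can be absorbed into the $\lambda'$ slack.
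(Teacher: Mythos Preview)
Your overall plan---random bipartition preserving $M_G$, concentration of the crossing weighted degrees, then invoke Lemma~\ref{lem:bipartite-w-edcs}---matches the paper's. The genuine gap is the step where you pass from per-vertex concentration to simultaneous concentration by a union bound over the $n$ vertices.

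The hypothesis $\beta/\log\beta \geq 2W^2\lambda^{-2}$ only pins $\beta$ in terms of $W$ and $\varepsilon$; it says nothing about $n$. With additive deviation $\lambda\beta$ (which is what you actually need to transfer Properties~(i) and~(ii)), Hoeffding over $\deg_H(u)\leq\beta$ terms in $[0,W]$ gives failure probability at most $2\exp(-2\lambda^2\beta/W^2)\leq 2\beta^{-4}$, and your multiplicative Bernstein bound yields the same order $\beta^{-O(1)}$. Since $\beta$ is a fixed constant depending only on $W,\varepsilon$, the quantity $n\cdot\beta^{-O(1)}$ is unbounded, and the union bound fails. Your side remark that ``the missing contribution of $u$ is at most $O(W^2\log(n)/\lambda^2)\ll\beta^-$'' repeats the same confusion: $\beta^-$ does not grow with $n$, so this inequality is simply false for large $n$.

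The paper repairs exactly this point with the Lov\'asz Local Lemma rather than a union bound. The bad event $\mathcal{E}_v$ (deviation of $\wdeg_{\tilde H}(v)$ from $\wdeg_H(v)/2$ by more than $\lambda\beta+W/2$) depends only on the coin flips of the $\leq\beta$ neighbours of $v$ in $H$, hence on at most $\beta^2$ other events $\mathcal{E}_u$. Since $(\beta^2+1)\cdot 2\beta^{-4}<1/e$, LLL gives $\bigcap_v\overline{\mathcal{E}_v}$ with positive probability, which is all that is needed. With the uniform additive bound in hand, the paper also avoids your separate case analysis for vertices of small $\wdeg_H$: the additive slack $\lambda\beta$ transfers both EDCS inequalities directly, yielding $\tilde\beta=\lceil\beta/2\cdot(1+4\lambda+2W/\beta)\rceil$ and $\tilde\beta^-=\lfloor\beta/2\cdot(1-5\lambda-2W/\beta)\rfloor$, after which the verification that Lemma~\ref{lem:bipartite-w-edcs} applies proceeds as you sketched.
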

    
    Here we use the probabilistic method and Lovasz Local Lemma~\cite{Lovasz1973}, as in~\cite{AssadiB19}.
        
    \begin{proposition}[Lovasz Local Lemma, see~\cite{Lovasz1973}] \label{prop:lovasz}
        Let $0 < p < 1$ and $d \geq 1$. Suppose $\mathcal{E}_1, \dots, \mathcal{E}_t$ are $t$ events such that $\mathbb{P}(\mathcal{E}_i) \leq p$ for all $i \in \{1, 2,\dots, t\}$ and each $\mathcal{E}_i$ is mutually independent of all but (at most) $d$ other events $\mathcal{E}_j$. If $(d+1) \cdot p < 1/e$ then the event $\bigcap_{i=1}^t\overline{\mathcal{E}_i}$ occurs with non-zero probability.
    \end{proposition}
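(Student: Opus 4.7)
The plan is to adopt the classical Erdős–Lovász argument and deduce the symmetric statement from the more general asymmetric form via an induction on the size of the conditioning set. Fix $x = \tfrac{1}{d+1}$; the hypothesis $(d+1)p < 1/e$ combined with the elementary inequality $\bigl(1 - \tfrac{1}{d+1}\bigr)^d > 1/e$ yields $p < x(1-x)^d$, which is the only numerical fact about $p$ and $d$ that the inductive argument needs.

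The key step is to prove by strong induction on $|S|$ that for every subset $S \subseteq \{1, \ldots, t\}$ and every index $i \notin S$,
\[
\mathbb{P}\Bigl(\mathcal{E}_i \,\Big|\, \bigcap_{j \in S} \overline{\mathcal{E}_j}\Bigr) \leq x.
\]
The base case $S = \emptyset$ follows from $\mathbb{P}(\mathcal{E}_i) \leq p \leq x(1-x)^d \leq x$. For the inductive step, let $\Gamma(i)$ denote a set of at most $d$ indices such that $\mathcal{E}_i$ is mutually independent of $\{\mathcal{E}_j : j \notin \Gamma(i) \cup \{i\}\}$, and partition $S$ into $S_1 = S \cap \Gamma(i)$ and $S_2 = S \setminus \Gamma(i)$. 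Expanding the conditional probability as
\[
\frac{\mathbb{P}\bigl(\mathcal{E}_i \cap \bigcap_{j \in S_1}\overline{\mathcal{E}_j} \,\big|\, \bigcap_{j \in S_2}\overline{\mathcal{E}_j}\bigr)}{\mathbb{P}\bigl(\bigcap_{j \in S_1}\overline{\mathcal{E}_j} \,\big|\, \bigcap_{j \in S_2}\overline{\mathcal{E}_j}\bigr)},
\]
the numerator is bounded by $\mathbb{P}(\mathcal{E}_i \mid \bigcap_{j \in S_2}\overline{\mathcal{E}_j}) = \mathbb{P}(\mathcal{E}_i) \leq p$ using mutual independence, while the denominator is expanded by the chain rule into at most $|S_1| \leq d$ factors of the form $\mathbb{P}(\overline{\mathcal{E}_{j_\ell}} \mid \cdots)$, each of which is at least $1 - x$ by the induction hypothesis (each factor conditions on a strictly smaller set than $S$, so the hypothesis applies). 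The ratio is therefore at most $p / (1-x)^d \leq x$, closing the induction.

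The final conclusion then follows from one last application of the chain rule: $\mathbb{P}(\bigcap_i \overline{\mathcal{E}_i}) = \prod_{i=1}^t \mathbb{P}(\overline{\mathcal{E}_i} \mid \bigcap_{j<i}\overline{\mathcal{E}_j}) \geq (1 - x)^t > 0$. The main subtlety — rather than any computational difficulty — is the correct invocation of mutual (not merely pairwise) independence in the numerator step: because the conditioning event is an intersection of several events indexed by $S_2$, pairwise independence would not suffice, whereas the hypothesis that $\mathcal{E}_i$ is mutually independent of the events outside $\Gamma(i) \cup \{i\}$ gives exactly the required identity $\mathbb{P}(\mathcal{E}_i \mid \bigcap_{j \in S_2}\overline{\mathcal{E}_j}) = \mathbb{P}(\mathcal{E}_i)$.
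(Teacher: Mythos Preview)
Your proof is correct and is the standard Erd\H{o}s--Lov\'asz inductive argument for the symmetric Local Lemma. The paper does not supply its own proof of this proposition; it simply quotes the statement with a citation to~\cite{Lovasz1973}, so there is nothing to compare against beyond noting that your argument is the classical one the citation points to.
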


    \label{app:proof-general-w-edcs}
    \begin{proof}[Proof of Theorem~\ref{thm:general-w-edcs}]
        Let $M_G$ be an optimal matching in $G$. Consider a randomly chosen bipartite subgraph $\tilde{G}(L \cup R, \tilde{E})$ of $G$ such that:
        \begin{itemize}
            \item for any edge $(u, v) \in M_G$, with probability $1/2$ we have $u \in L$ and $v \in R$, and with probability $1/2$ the opposite;
            \item for any vertex $v \in V$ unmatched by $M_G$, $v$ is assigned to $L$ or $R$ uniformly at random;
            \item all these random choices are made independently;
            \item the set of edges $\tilde{E}$ contains the edges of $G$ that have one endpoint in $L$ and the other endpoint in $R$, \emph{i.e.} $\tilde{E} = E \cap (L \times R)$.
        \end{itemize}
        Then we define $\tilde{H} = H \cap \tilde{G}$.
        
        Take any vertex $v \in V$. Let us assume that $v$ is chosen in $L$ in $\tilde{G}$ (the other case being symmetric). Then we have $\wdeg_{\tilde{H}}(v) = \sum_{u \in N_H(v)} \mathbbm{1}_{u \in R} \cdot w(u,v)$. Therefore, $\mathbb{E}[\wdeg_{\tilde{H}}(v)] = \wdeg_{H}(v)/2$ (plus an additional $w(v,v')/2$ term for some $v'$ if $(v, v') \in H$ is an edge of $M_G$). Moreover, if $u$ and $u'$ in $N_H(v)$ are matched in $M_G$ then the random variables $\mathbbm{1}_{u \in R} \cdot w(v,u)$ and $\mathbbm{1}_{u' \in R} \cdot w(v,u')$ and negatively associated. The other random variables $\mathbbm{1}_{u \in R} \cdot w(v,u)$ are mutually independent. Thereby, using Hoeffding's inequality (see Appendix~\ref{app:hoeffding}), we get
        \begin{align*}
            \mathbb{P}\left[\left|\wdeg_{\tilde{H}}(v) - \frac{\wdeg_{H}(v)}{2}\right| \geq \lambda \cdot \beta + \frac{W}{2}\right] &\leq \mathbb{P}\left[\left|\wdeg_{\tilde{H}}(v) - \mathbb{E}[\wdeg_{\tilde{H}}(v)]\right| \geq \lambda \cdot \beta\right]\\
            &\leq 2\cdot \exp\left(-\frac{2\lambda^2\cdot \beta^2}{\deg_H(v) \cdot W^2}\right)\\
            &\leq 2\cdot \exp\left(-\frac{2\lambda^2\cdot \beta^2}{\beta \cdot W^2}\right)\\
            &\leq 2\cdot \exp\left(-4 \cdot \log \beta \right) = \frac{2}{\beta^4}
        \end{align*}
        where the first inequality comes from the inequality $|\wdeg_{H}(v)/2 - \mathbb{E}[\wdeg_{\tilde{H}}(v)]| \leq W/2$, the third inequality comes from $\deg_H(v) \leq \beta$ (see Proposition~\ref{prop:w-edcs-bounded-degree}), and the fourth one from $\beta \geq 2  \cdot W^2 \cdot \lambda^{-2} \cdot \log\beta$.
        
        Now define as $\mathcal{E}_v$ the event that $\left|\wdeg_{\tilde{H}}(v) - \frac{\wdeg_{H}(v)}{2}\right| \geq \lambda \cdot \beta + W/2$. Note that $\mathcal{E}_v$ only depends on the choices for vertices in $N_H(v)$ and therefore can only depend on at most $\beta^2$ other events $\mathcal{E}_u$ for vertices $u$ which are neighbours to $N_H(v)$ in $H$ (as $\deg_H(u) \leq \beta$ for all $u \in V$, see Proposition~\ref{prop:w-edcs-bounded-degree}). Hence we can apply Lovasz Local Lemma (Proposition~\ref{prop:lovasz}) to prove that the event $\bigcap_{v \in V}\overline{\mathcal{E}_v}$ occurs with non-zero probability. Then, conditioning on this event, we can show that $\tilde{H}$ is a $(\tilde{\beta},\tilde{\beta}^-)$-$w$-EDCS of $\tilde{G}$ for some parameters $\tilde{\beta}$ and $\tilde{\beta}^-$.
        
        First, to prove Property~(i) of $w$-EDCS, let $(u,v)$ be any edge in $\tilde{H}$. By events $\overline{\mathcal{E}_u}$ and $\overline{\mathcal{E}_v}$ we have the inequality
        \begin{align*}
            \wdeg_{\tilde{H}}(u) + \wdeg_{\tilde{H}}(v) &\leq \frac{1}{2} \cdot (\wdeg_H(u) + \wdeg_H(v)) + 2\lambda \beta + W\\&
            \leq \beta/2\cdot w(u,v) + 2\lambda\beta + W\\
            &\leq \beta/2 \cdot (1 + 4 \lambda + 2W/\beta) \cdot w(u,v).
        \end{align*}
        For Property~(ii) of $w$-EDCS $\tilde{H}$, let $(u,v)$ be any edge in $\tilde{G}\backslash\tilde{H}$. By events $\overline{\mathcal{E}_u}$ and $\overline{\mathcal{E}_v}$ we have the inequality
        \begin{align*}
            \wdeg_{\tilde{H}}(u) + \wdeg_{\tilde{H}}(v) &\geq \frac{1}{2} \cdot (\wdeg_H(u) + \wdeg_H(v)) - 2\lambda \beta - W\\
            &\geq \beta^-/2\cdot w(u,v) - 2\lambda\beta - W\\
            &\geq \beta/2 \cdot (1 - 5 \lambda - 2W/\beta) \cdot w(u,v),
        \end{align*}
        as $\beta^- \geq (1 - \lambda) \cdot \beta$. Hence $\tilde{H}$ is a $(\tilde{\beta},\tilde{\beta}^-)$-$w$-EDCS of $\tilde{G}$ for $\tilde{\beta} = \lceil \beta/2 \cdot (1 + 4 \lambda + 2W/\beta) \rceil$ and $\tilde{\beta}^- = \lfloor \beta/2 \cdot (1 - 5 \lambda - 2W/\beta) \rfloor$. 
        
        The choices of $\lambda$, $\beta$, and $\beta^-$ guarantee that we can apply Lemma~\ref{lem:bipartite-w-edcs} to the subgraph $\tilde{H}$ in $\tilde{G}$. In fact, 
        \begin{align*}
            \tilde{\beta}^- &\geq \beta/2 \cdot (1 - 5 \lambda - 2W/\beta) - 1\\
            &\geq W^2 \lambda^{-2} \cdot (1 - 5 \lambda) - W - 1 &\text{as $\beta \geq 2W^2\lambda^{-2}$}\\
            &=\frac{100^2W^4}{\varepsilon^2} - \frac{500W^3}{\varepsilon} - W - 1 &\text{as $\lambda = \frac{\varepsilon}{100W}$}\\
            &\geq \frac{4W}{\varepsilon}
        \end{align*}
        and
        \begin{align*}
            \frac{\tilde{\beta}}{\tilde{\beta}^-} &\leq \frac{\beta/2 \cdot (1 + 4 \lambda + 2W/\beta) +1}{\beta/2 \cdot (1 - 5 \lambda - 2W/\beta) - 1}\\
            &= \frac{1 + 4 \lambda + 2W/\beta + 2/\beta}{1 - 5 \lambda - 2W/\beta - 2/\beta}\\
            &\leq \frac{1 + 4 \lambda + \lambda^2/W + \lambda^2/W^2}{1 - 5 \lambda - \lambda^2/W - \lambda^2/W^2}  &\text{as $\beta \geq 2W^2\lambda^{-2}$}\\
            &\leq \frac{1 + 6\lambda}{1 - 7 \lambda}  &\text{as $\lambda \leq 1$ and $W \geq 1$}\\
            &\leq 1 + 20\lambda &\text{as $\lambda \leq \frac{1}{20}$ (standard analysis)}\\
            &= 1 + \frac{\varepsilon}{5W} &\text{as $\lambda = \frac{\varepsilon}{100W}$}
        \end{align*}
        
        As $M_{\tilde{G}} = M_G$, $w(M_{\tilde{H}}) \leq w(M_H)$ (by construction), and $\tilde{H}$ is a $(\tilde{\beta},\tilde{\beta}^-)$-$w$-EDCS of $\tilde{G}$, it follows from Lemma~\ref{lem:bipartite-w-edcs} that $H$ contains a matching $M_H$ such that $\left(2 - \frac{1}{2W} + \varepsilon\right)\cdot w(M_H) \geq w(M_G)$.
    \end{proof}
    
\section{EDCS for Weighted $b$-Matchings}
    \label{sec:weighted-b-matching}

    From now on we consider the problem of finding a maximum weight $b$-matching in an edge-weighted multi-graph $G = (V,E)$. Hence we will use the notations described in the introduction. Here we recall the generalization of edge-degree constrained subgraphs (EDCS) to an edge-weighted multi-graph $G= (V, E)$ in the context of the $b$-matching problem.

    \begin{definition}
        \label{def:w-b-edcs}
        Let $G = (V,E)$ be a weighted multi-graph, $\{b_v\}_{v \in V}$ be a set of constraints, and $H$ be a subgraph of $G$. Given any integer parameters $\beta \geq 3$ and $\beta^- \leq \beta - 2$, we say that $H$ is a $(\beta, \beta^-)$-$w$-$b$-EDCS of $G$ if $H$ satisfies the following properties:
        \begin{enumerate}[(i)]
            \item \makebox[14em][l]{For any edge $(u, v, w_{uv}) \in H$,} $\frac{\wdeg_H(u)}{b_u} + \frac{\wdeg_H(v)}{b_v} \leq \beta \cdot w_{uv}$
            \item \makebox[14em][l]{For any edge $(u, v, w_{uv}) \in G \backslash H$,} $\frac{\wdeg_H(u)}{b_u} + \frac{\wdeg_H(v)}{b_v} \geq \beta^- \cdot w_{uv}$. 
        \end{enumerate}
    \end{definition}
    
    As for a $w$-EDCS (Proposition~\ref{prop:w-edcs-bounded-degree}), we can bound the degree of a vertex in a $w$-$b$-EDCS $H$.
    
    \begin{proposition}
        \label{prop:w-b-edcs-bounded-degree}
        Let $H$ be a $(\beta,\beta^-)$-$w$-$b$-EDCS of a given graph $G$. Then, for all $v \in V$, we have $\deg_H(v) \leq \beta \cdot b_v$.
    \end{proposition}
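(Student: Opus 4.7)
The plan is to mirror the proof of Proposition~\ref{prop:w-edcs-bounded-degree} almost verbatim, extracting a minimum-weight incident edge and using Property~(i) of Definition~\ref{def:w-b-edcs} to bound the weighted degree of $v$, from which the degree bound follows. There is no essential new difficulty here; the only things to watch are (a) the capacity $b_v$ now appears in the denominator, so dropping the non-negative $\wdeg_H(u)/b_u$ term yields a bound of the form $\wdeg_H(v) \leq \beta \cdot b_v \cdot w_{\min}$ rather than $\beta \cdot w_{\min}$, and (b) $H$ is a multi-graph, so $\delta_H(v)$ is a multi-set, but the inequality $\wdeg_H(v) \geq \deg_H(v) \cdot w_{\min}$ still holds since it is just a sum over $\delta_H(v)$ of quantities bounded below by $w_{\min}$.

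Concretely, first I would handle the trivial case where $\delta_H(v) = \emptyset$, giving $\deg_H(v) = 0 \leq \beta \cdot b_v$. Otherwise I would pick an edge $(u, v, w_{uv}) \in \delta_H(v)$ whose weight $w_{\min} := w_{uv}$ is minimum over $\delta_H(v)$. Applying Property~(i) of Definition~\ref{def:w-b-edcs} to this edge and using $\wdeg_H(u)/b_u \geq 0$ yields
\[
\frac{\wdeg_H(v)}{b_v} \;\leq\; \beta \cdot w_{\min},
\]
hence $\wdeg_H(v) \leq \beta \cdot b_v \cdot w_{\min}$.

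Finally, since every edge counted in $\delta_H(v)$ has weight at least $w_{\min}$, we have $\wdeg_H(v) \geq \deg_H(v) \cdot w_{\min}$. Combining the two inequalities and dividing by $w_{\min} > 0$ gives $\deg_H(v) \leq \beta \cdot b_v$, as desired. The main (minor) subtlety is simply to remember that the definition of $\wdeg_H(v)$ and $\deg_H(v)$ in Section~1 is taken with respect to the multi-set $\delta_H(v)$, so the argument carries over unchanged from the simple-graph case.
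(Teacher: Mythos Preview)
Your proof is correct and follows exactly the approach the paper intends: it explicitly refers back to Proposition~\ref{prop:w-edcs-bounded-degree}, and your argument is precisely that proof with the capacity $b_v$ carried through and the multi-set $\delta_H(v)$ handled in place of $N_H(v)$. There is nothing to add.
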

    
    \begin{proposition}
        \label{prop:w-b-edcs-bounded-size}
        Let $H$ be a $(\beta,\beta^-)$-$w$-$b$-EDCS of a given graph $G$. Then $H$ contains at most $2 \beta \cdot |M_G|$ edges.
    \end{proposition}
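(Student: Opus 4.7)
The plan is to combine the per-vertex degree bound $\deg_H(v) \leq \beta \cdot b_v$ from Proposition~\ref{prop:w-b-edcs-bounded-degree} with the maximality of $M_G$ as a maximum weight $b$-matching in $G$. The first step is to observe that every edge $(u,v,w_{uv}) \in H$ is ``anchored'' at $M_G$ in the following sense: either it belongs to $M_G$, or, by the optimality of $M_G$ (and because all edge weights are at least $1$, so any legal augmentation of $M_G$ would strictly increase its weight), at least one of its endpoints must be \emph{saturated}, meaning $\deg_{M_G}(u) = b_u$ or $\deg_{M_G}(v) = b_v$. Note that here we use crucially the paper's assumption that the number of copies of any edge $(u,v)$ is at most $\min(b_u,b_v)$, which rules out the ``all copies already in $M_G$'' alternative.

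The second step is the key double counting. Let $V_{\mathrm{sat}} := \{v \in V : \deg_{M_G}(v) = b_v\}$. By definition of saturation,
\[
\sum_{v \in V_{\mathrm{sat}}} b_v \;=\; \sum_{v \in V_{\mathrm{sat}}} \deg_{M_G}(v) \;\leq\; \sum_{v \in V} \deg_{M_G}(v) \;=\; 2\,|M_G|,
\]
so the number of $H$-edges with at least one endpoint in $V_{\mathrm{sat}}$ is at most $\sum_{v \in V_{\mathrm{sat}}} \deg_H(v) \leq \beta \sum_{v \in V_{\mathrm{sat}}} b_v \leq 2 \beta \cdot |M_G|$. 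Combined with step~1, this already accounts for every edge of $H \setminus M_G$.

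The main technical subtlety — and what I expect to be the sole obstacle — is absorbing the remaining edges of $H \cap M_G$ whose endpoints happen to be both unsaturated, for which a naive accounting only gives $(2\beta + 1)\,|M_G|$ overall. The clean way to handle this, in line with the paper's stated strategy of reducing $b$-matchings to simple matchings, is to expand each vertex $v$ into $b_v$ copies and redistribute each edge of $G$ to a specific pair of copies so that the expansion $H'$ of $H$ is a simple graph with $\deg_{H'}(v^i) \leq \beta$ for every copy $v^i$; such a redistribution is feasible because $\sum_i \deg_{H'}(v^i) = \deg_H(v) \leq \beta \cdot b_v$, which reduces to a bipartite feasibility argument via König/Hall. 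Then $|E(H)| = |E(H')|$, $M_G$ lifts to a simple matching of size $|M_G|$ in the expanded graph, every edge of $H'$ is incident to a vertex of this simple matching's $\leq 2|M_G|$-vertex support, and the standard simple-EDCS counting immediately yields $|E(H')| \leq 2\beta \cdot |M_G|$, as desired.
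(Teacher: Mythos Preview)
Your route is correct in spirit and can be completed, but it is considerably heavier than the paper's. You share the paper's setup: the saturated set $V_{\mathrm{sat}}$, the observation that every edge of $H\setminus M_G$ touches $V_{\mathrm{sat}}$, and the per-vertex bound $\deg_H(v)\le\beta\, b_v$. You then hit the ``$+|M_G|$ leftover'' and resolve it by vertex expansion; this does work once one also checks (which you leave implicit) that the redistribution can be done so that the lifted $M_G$ is a simple matching \emph{under the same assignment} as $H'$, and that every copy of a saturated vertex is thereby matched, so that each $H'$-edge indeed hits the support of the lifted matching.

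The paper avoids the expansion entirely by a one-line sharpening of your Step~2. Instead of $\sum_{v\in V_{\mathrm{sat}}}b_v\le 2|M_G|$, it introduces $M_{\mathrm{sat}}\subseteq M_G$, the matching edges having an endpoint in $V_{\mathrm{sat}}$, and uses the tighter bound $\sum_{v\in V_{\mathrm{sat}}}b_v\le 2|M_{\mathrm{sat}}|$. The split is then
\[
|H|\;=\;|H\cap(M_G\setminus M_{\mathrm{sat}})|+|H\setminus(M_G\setminus M_{\mathrm{sat}})|\;\le\;(|M_G|-|M_{\mathrm{sat}}|)+2\beta\,|M_{\mathrm{sat}}|\;\le\;2\beta\,|M_G|,
\]
since every edge in the second piece touches $V_{\mathrm{sat}}$. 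So your ``sole obstacle'' dissolves by refining which matching you count against, rather than by passing to the simple-matching world. Your expansion is consonant with the paper's later reduction of $b$-matchings to simple matchings (Lemma~\ref{lem:distrib-edges}), but here the elementary $M_{\mathrm{sat}}$ trick is all that is needed.

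A minor point: the hypothesis that there are at most $\min(b_u,b_v)$ parallel edges is not actually used in Step~1. Since $M_G$ is a sub-multiset of $E$, any specific copy $e\in H\setminus M_G$ with both endpoints unsaturated could be added to $M_G$ and strictly increase the weight; maximality alone suffices.
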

    
    \begin{proof}
        A vertex $v \in V$ is called \emph{saturated} by $M_G$ if $|\delta_G(v) \cap M_G| = b_v$. We denote by $V_{sat}$ the set of vertices saturated by $M_G$.  As $M_G$ is a maximal matching in $G$, it means that for all $(u,v, w_{uv}) \in G \backslash M_G$, either $u$ or $v$ is in $V_{sat}$.
        We denote by $M_{sat} \subseteq M_{G}$ the subset of edges in $M_G$ that are incident to a vertex of $V_{sat}$. By this definition, we get:
        \begin{align*}
            |H| &= |H \cap (M_G \backslash M_{sat})| + |H \backslash (M_G \backslash M_{sat})| \leq |M_G| - |M_{sat}| + \sum_{v \in V_{sat}} \deg_H(v)\\
            &\leq |M_G| - |M_{sat}| + \sum_{v \in V_{sat}}\beta \cdot b_v
            \leq |M_G| - |M_{sat}| + 2 \cdot |M_{sat}| \cdot \beta
            \leq 2\beta \cdot |M_G|
        \end{align*}
        as for all $v \in V$, $\deg_H(v) \leq \beta \cdot b_v$ and $\sum_{v \in V_{sat}} b_v \leq 2\cdot |M_{sat}|$.
    \end{proof}
    
    We can also show that a $w$-$b$-EDCS can always be extracted from a graph.
    \begin{proposition} \label{prop:exist-w-b-edcs}
        Any multi-graph $G = (V, E)$, along with a set of constraints $\{b_v\}_{v \in V}$, contains a $(\beta, \beta^-)$-$w$-$b$-EDCS for any parameters $\beta \geq \beta^- + 2$. Such a  $(\beta, \beta^-)$-$w$-$b$-EDCS can also be found in $O(\beta^2 W^2 \cdot |M_G|)$ time.
    \end{proposition}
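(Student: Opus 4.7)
The plan is to mimic the constructive local-search proof of Proposition~\ref{prop:exist-w-edcs} and track a carefully chosen potential. Initialize $H$ as the empty subgraph, and repeatedly apply one of two corrections, giving priority to the first: if there exists $(u,v,w_{uv}) \in H$ with $\wdeg_H(u)/b_u + \wdeg_H(v)/b_v > \beta \cdot w_{uv}$ (a violation of (i)), remove it from $H$; otherwise, if there exists $(u,v,w_{uv}) \in G\setminus H$ with $\wdeg_H(u)/b_u + \wdeg_H(v)/b_v < \beta^- \cdot w_{uv}$ (a violation of (ii)), insert it into $H$. By the priority rule, whenever (i) fails at an edge it is because a single edge was just inserted, so in particular $\deg_H(v) \leq \beta \cdot b_v + 1$ holds at every moment (the analogue of the bound used in the proof of Proposition~\ref{prop:exist-w-edcs}, justified by Proposition~\ref{prop:w-b-edcs-bounded-degree}).

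For termination, I would use a potential function generalizing the one in the simple case, namely
\[
\Phi(H) \;=\; (2\beta - 2)\sum_{(u,v,w_{uv})\in H} w_{uv}^2 \;-\; \sum_{v\in V} \frac{\wdeg_H(v)^2}{b_v}.
\]
The computation of $\Delta \Phi$ for each move is analogous to the one in Proposition~\ref{prop:exist-w-edcs}, but the key step is to rewrite the violation of (i) in its integer form $b_v\wdeg_H(u) + b_u \wdeg_H(v) \geq \beta \cdot w_{uv}\cdot b_u b_v + 1$, which yields $\Delta \Phi \geq w_{uv}^2(2 - 1/b_u - 1/b_v) + 2w_{uv}/(b_u b_v) > 0$. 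A symmetric calculation for an insertion fixing (ii) uses $\beta \geq \beta^- + 2$ to obtain $\Delta \Phi \geq (2(\beta - \beta^-) - 4)\,w_{uv}^2 + 2w_{uv}/(b_u b_v) > 0$. The strict positive increase at each step gives termination, hence existence.

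For the running time, I would combine the per-step lower bound on $\Delta \Phi$ with an upper bound on $\Phi$. The upper bound follows from two facts: (a) throughout the algorithm, the argument of Proposition~\ref{prop:w-b-edcs-bounded-size} (each edge of $H$ is either in $M_G$ or incident to a $M_G$-saturated vertex, and each saturated vertex contributes at most $\beta b_v + 1$ edges) gives $|H| = O(\beta \cdot |M_G|)$, whence $(2\beta - 2)\sum w_{uv}^2 = O(\beta^2 W^2 |M_G|)$; (b) $\sum_v \wdeg_H(v)^2/b_v \leq \beta W \cdot \sum_v \wdeg_H(v) = 2\beta W \cdot \sum_{e\in H} w_e = O(\beta^2 W^2 |M_G|)$ using (i) and $\sum_v \wdeg_H(v) = 2\sum_e w_e$.

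The principal technical obstacle is reconciling the upper bound with the $1/(b_u b_v)$ factor in the per-step increase, which would naively inflate the step count by $b_{\max}^2$. To reach the stated $O(\beta^2 W^2 |M_G|)$ time bound I would rescale the potential, multiplying the edge-squared term by $b_u b_v$ and correspondingly weighting the degree term so the integer violation form $b_v\wdeg_H(u) + b_u\wdeg_H(v) \geq \beta w_{uv} b_u b_v + 1$ is used directly, thereby producing a per-step increment of at least $1$; the upper bound then has to be re-verified under the same scaling, which is the delicate bookkeeping step of the argument.
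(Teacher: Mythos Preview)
Your setup matches the paper's proof exactly: same local-search procedure, same potential
\[
\Phi(H) = (2\beta-2)\sum_{(u,v,w_{uv})\in H} w_{uv}^2 - \sum_{v\in V} \frac{\wdeg_H(v)^2}{b_v},
\]
and the same upper bound $\Phi(H) = O(\beta^2 W^2\,|M_G|)$ via the argument of Proposition~\ref{prop:w-b-edcs-bounded-size}.

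The gap is in your last two paragraphs. The ``principal technical obstacle'' you identify does not exist: your own expression
\[
\Delta\Phi \;\geq\; w_{uv}^2\Bigl(2 - \tfrac{1}{b_u} - \tfrac{1}{b_v}\Bigr) + \tfrac{2w_{uv}}{b_u b_v}
\]
is already bounded below by an absolute constant independent of $b_u,b_v$. Since $w_{uv}\ge 1$ and $2-\tfrac{1}{b_u}-\tfrac{1}{b_v}\ge 0$, it suffices to minimize $f(x,y)=2-x-y+2xy$ over $x=1/b_u,\,y=1/b_v \in (0,1]$; writing $f(x,y)=1+(1-x)(1-y)+xy$ shows $f\ge 1$ on all of $[0,1]^2$. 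The two terms trade off: when $b_u,b_v$ are large the $2/(b_ub_v)$ term is small but $2-1/b_u-1/b_v$ is close to $2$; when $b_u=b_v=1$ the first term vanishes but the second equals $2$.

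For the insertion step you lost precision by replacing the $-\tfrac{1}{b_u}w_{uv}^2-\tfrac{1}{b_v}w_{uv}^2$ contribution (from expanding $-\wdeg_H(\cdot)^2/b_\cdot$) with $-2w_{uv}^2$, which yields the coefficient $2(\beta-\beta^-)-4$ and collapses to $0$ at $\beta=\beta^-+2$. Keeping the finer term gives
\[
\Delta\Phi \;\ge\; \Bigl(2(\beta-\beta^-)-2-\tfrac{1}{b_u}-\tfrac{1}{b_v}\Bigr)w_{uv}^2 + \tfrac{2w_{uv}}{b_ub_v}
\;\ge\; \Bigl(2-\tfrac{1}{b_u}-\tfrac{1}{b_v}\Bigr)w_{uv}^2 + \tfrac{2w_{uv}}{b_ub_v},
\]
the same quantity as in the deletion case. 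Hence both moves satisfy $\Delta\Phi \ge 1$ (the paper states $\ge 3/2$; either constant yields the $O(\beta^2 W^2\,|M_G|)$ bound), and no rescaled potential is needed.
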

    
    \begin{proof}
        As in the proof of Proposition~\ref{prop:exist-w-edcs}, we follow closely the argument of~\cite{AssadiB19}. We use the same local-search algorithm as the one in Proposition~\ref{prop:exist-w-edcs}, except that the properties violated are those of Definition~\ref{def:w-b-edcs}. Here we also give the priority to the correction of violations of Property~(i), so that the at each step of the algorithm all the vertices $v \in V$ have degrees bounded by $\beta \cdot b_v + 1$. To prove that this algorithm terminates and show the existence of a $w$-$b$-EDCS, we introduce the following  potential function:
        \[\Phi(H) = (2\beta - 2) \sum_{(u,v, w_{uv}) \in H}w_{uv}^2 - \sum_{u \in V} \frac{(\wdeg_H(u))^2}{b_u}.\]
        Observe that by the same argument used in Proposition~\ref{prop:w-b-edcs-bounded-size} (where we only use the fact that the degrees are bounded), the value of $\Phi(H)$ is bounded by $2\beta W^2 \cdot 2 \beta |M_G|$. We can also show that after each local improvement, the value of $\Phi(H)$ increases by at least $3/2$. 
        
        In fact, when Property~(i) is not respected by some edge $(u, v, w_{uv}) \in H$,  $\frac{\wdeg_H(u)}{b_u} + \frac{\wdeg_H(v)}{b_v} \geq \beta \cdot w_{uv} + \frac{1}{b_u \cdot b_v}$. If we fix it by erasing this edge from $H$ the value of $\Phi(H)$ is increased by $-(2\beta - 2) \cdot w_{uv}^2 + 2 \cdot \frac{\wdeg_H(u)}{b_u}\cdot w_{uv} + 2 \cdot \frac{\wdeg_H(v)}{b_v} \cdot w_{uv} - \frac{1}{b_u} \cdot w_{uv}^2 - \frac{1}{b_v} \cdot w_{uv}^2 \geq -(2\beta - 2) \cdot w_{uv}^2 + 2 \cdot (\beta \cdot w_{uv} + \frac{1}{b_u \cdot b_v}) \cdot w_{uv} - \frac{1}{b_u} \cdot w_{uv}^2 - \frac{1}{b_v} \cdot w_{uv}^2 = (2 - \frac{1}{b_u} - \frac{1}{b_v}) \cdot w_{uv}^2 + \frac{2}{b_u \cdot b_v}\cdot w_{uv} \geq 2 - \frac{1}{b_u} - \frac{1}{b_v} + \frac{2}{b_u \cdot b_v} \geq 3/2$ (studying the function $(x,y) \rightarrow 2 + 2\cdot xy - x - y$ on the domain $[0,1]^2$).
            
        When Property (ii) is not respected by some edge $(u, v, w_{uv}) \in G \backslash H$, it means that $\frac{\wdeg_H(u)}{b_u} + \frac{\wdeg_H(v)}{b_v} \leq \beta^- \cdot w_{uv} - \frac{1}{b_u \cdot b_v}$. If we insert $(u, v)$ in $H$, the value of $\Phi(H)$ increases by $(2\beta - 2) \cdot w_{uv}^2 - 2 \cdot \frac{\wdeg_H(u)}{b_u} \cdot w_{uv} - 2 \cdot \frac{\wdeg_H(v)}{b_v} \cdot w_{uv} - \frac{1}{b_u}\cdot w_{uv}^2 - \frac{1}{b_v} \cdot w_{uv}^2 \geq (2\beta - 2) \cdot w_{uv}^2 - 2 \cdot (\beta^- \cdot w_{uv} - \frac{1}{b_u \cdot b_v}) \cdot w_{uv} - \frac{1}{b_u}\cdot w_{uv}^2 - \frac{1}{b_v} \cdot w_{uv}^2 \geq (2 \cdot (\beta - \beta^-) - 2 - \frac{1}{b_u} - \frac{1}{b_v}) \cdot w_{uv}^2 + \frac{2}{b_u \cdot b_v} \cdot w_{uv} \geq (2 - \frac{1}{b_u} - \frac{1}{b_v}) \cdot w_{uv}^2 + \frac{2}{b_u \cdot b_v} \cdot w_{uv} \geq 3/2$, as before and because $\beta \geq \beta^- + 2$.
        
        Hence the algorithm terminates in $O(\beta^2W^2 \cdot |M_G|)$ steps.
    \end{proof}
    
    The main interest of these $w$-$b$-EDCSes is that they contain an (almost) $2 - \frac{1}{2W}$ approximation, as in the case of $w$-EDCSes in simple graphs (Theorem~\ref{thm:general-w-edcs}).
    
    \generalwbedcs*
    
    \begin{proof}
        Consider a maximum weight $b$-matching $M_G$. We build from $H$ and $M_G$ new multi-graphs $H' = (V', E_H')$ and $G' = (V', E')$ as follows. The set of vertices $V'$ contains, for each vertex $v \in V$, $b_v$ vertices $v_1, \dots, v_{b_v}$, so that $V'$ contains $\sum_{v \in V}b_v$ vertices in total. To construct $E'$, for each $v \in V$, we will distribute the edges of $\delta(v) \cap (H \cup M_G)$ among the $b_v$ vertices $v_1, \dots, v_{b_v}$ in such a way so that the following three properties hold: 
        \begin{enumerate}[(i)]
            \item each $v_i$ has a most one edge of $M_G$ incident to it;
            \item $G'$ is a simple graph (\emph{i.e.} with no multiple edges between two given vertices);
            \item each $v_i$ has a weighted degree in the interval $\left[\frac{\wdeg_H(v)}{b_v} - 2W, \frac{\wdeg_H(v)}{b_v} + 3W \right]$.
        \end{enumerate}
        The existence of such a distribution is guaranteed by Lemma~\ref{lem:distrib-edges} (see Appendix~\ref{app:proof-distrib-edges}). For the second property, it is crucial that the graph $G$ has at most $\min(b_u,b_v)$ edges between any vertices $u$ and $v$. This property is important in the proof of Theorem~\ref{thm:general-w-edcs} (when negative association is used). Then, for $H'$, we just consider the restriction of $G'$ to the edges corresponding to $H$ (ignoring those from $M_G \backslash H$ in the preceding construction).
        
        Observe that $M_G$ corresponds a simple matching in $G'$, and that any simple matching in $H'$ corresponds to a $b$-matching in $H$. Next we show that $H'$ is a $(\beta +6W, \beta^- - 6W)$-$w$-EDCS in the simple graph $G'$. Consider an edge $(u_i, v_j) \in H'$. It corresponds to an edge $(u,v, w_{uv})$ of $H$ so $\wdeg_{H'}(u_i) + \wdeg_{H'}(v_j) \leq \frac{\wdeg_H(u)}{b_u} + \frac{\wdeg_H(v)}{b_v} + 6W \leq (\beta  + 6W) \cdot w_{uv}$, so Property (i) of Definition~\ref{def:w-b-edcs} holds. Consider next an edge $(u_i, v_j) \in G' \backslash H'$. It corresponds to an edge $(u, v, w_{uv})$ of $M_G \backslash H$, so $\wdeg_{H'}(u_i) + \wdeg_{H'}(v_j) \geq \frac{\wdeg_H(u)}{b_u} + \frac{\deg_H(v)}{b_v} - 6W \geq (\beta^- - 6W) \cdot w_{uv}$ (as there can be a difference of at most $b_u \cdot W$ between the weighted degree of $u$ in $G'$ and in $H'$). Thus Property (ii) of Definition~\ref{def:w-b-edcs} holds as well. To conclude, $H'$ is a $(\beta +6W, \beta^- - 6W)$-$w$-EDCS of $G'$, so by Theorem~\ref{thm:general-w-edcs} we have that $(2 - \frac{1}{2W} + \varepsilon) \cdot w(M_{H'}) \geq w(M_{G'}) = w(M_G)$. As $w(M_H) \geq w(M_{H'})$ (because any matching in $H'$ corresponds to a $b$-matching of the same weight in $H$), we complete the proof. 
    \end{proof}

\section{Application to $b$-Matchings in Random-Order Streams}

    \label{sec:random-order}
    In this section we consider the random-order semi-streaming model and we 
    show how our results in the preceding section can be adapted 
    to get a $2 - \frac{1}{2W} + \varepsilon$ approximation.
    
    As our algorithm builds on that of Bernstein~\cite{bernstein:LIPIcs:2020:12419} for the unweighted simple matching, let us briefly summarize his approach. In the first phase 
    of the streaming, he constructs a subgraph that satisfies only a weaker definition 
    of EDCS in Definition~\ref{def:intro-edcs} (only Property (i) holds). 
    In the second phase of the streaming, he collects the ``underfull'' edges, which are those edges that violate Property (ii).
    He shows that in the end, the union of the subgraph built in the first phrase and the underfull edges collected in the second phase, 
    with high probability, contains a $3/2+\varepsilon$ approximation and that the total memory used is in the order of $O(n \cdot \log n)$. As we will show below, this approach can be properly adapted to our context of edge-weighted $b$-matching. Our main technical challenge lies in the fact that unlike the simple matching, the size of $M_{G}$ can vary a lot. We need a ``guessing'' strategy to ensure that the required memory is proportional to $|M_G|$. 
    
    \begin{definition}
        We say that a graph $H$ has bounded weighted edge-degree $\beta$
        if for every edge $(u,v, w_{uv}) \in H$, $\frac{\wdeg_H(u)}{b_u} + \frac{\wdeg_H(v)}{b_v} \leq \beta \cdot w_{uv}$. 
    \end{definition}
    
    \begin{definition} 
        Let $G$ be any edge-weighted multi-graph, and let $H$ be a subgraph of $G$ with bounded weighted edge-degree $\beta$. For any parameter $\beta^-$, we say that an edge $(u,v, w_{uv}) \in G \backslash H$ is $(H,\beta,\beta^-)$-underfull if  $\frac{\wdeg_H(u)}{b_u} + \frac{\wdeg_H(v)}{b_v} < \beta^- \cdot w_{uv}$.
    \end{definition}
    
    These two parts combined always contain a large matching:
    
    \begin{lemma}
        \label{lem:underfull-matching}
        Let $0 < \varepsilon < 1/2$ be any parameter and $W$ be an integer parameter. Set $\lambda = \frac{\varepsilon}{100W}$. Suppose that $\beta^-$ and $\beta \geq \beta^- + 2$ are integers so that $\frac{\beta + 8W}{\log(\beta + 8W)} \geq 2 W^2\lambda^{-2}$ and $\beta^- - 6W \geq (1 - \lambda) \cdot (\beta + 8W)$. Then for any edge-weighted multi-graph $G$ with integer weights in ${1,\dots,W}$, and for any subgraph $H$ with bounded weighted edge-degree $\beta$, if $X$ contains all edges in $G \backslash H$ that are $(H,\beta,\beta^-)$-underfull, then $(2 - \frac{1}{2W} + \varepsilon) \cdot w(M_{H \cup X}) \geq w(M_G)$.
    \end{lemma}
    
    \begin{proof}
        First, observe that $H \cup X$ is not necessarily a $w$-$b$-EDCS of $G$. Thereby we use another argument coming from~\cite{bernstein:LIPIcs:2020:12419}. Let $M_G$ be a maximum weight $b$-matching in $G$, let $M_G^H = M_G \cap H$ and $M_G^{G \backslash H} = M_G \cap (G \backslash H)$. Let $X^M = X \cap M_G^{G \backslash H}$. Clearly we have $w(M_G) = w(M_{H \cup M_G^{G \backslash H}})$. Then we can show that $H \cup X^M$ is a $(\beta + 2W, \beta^-)$-$w$-$b$-EDCS of $H \cup M_G^{G \backslash H}$.
        
        To show that $H \cup X^M$ is a $(\beta + 2W, \beta^-)$-$w$-$b$-EDCS of $H \cup M_G^{G \backslash H}$, a first observation is that for all $v \in V$, we have the inequalities $\wdeg_H(v) \leq \wdeg_{H \cup X^M}(v) \leq \wdeg_H(v) + b_v \cdot W$. We start with Property~(ii) of Definition~\ref{def:w-b-edcs}. By construction, $X^M$ contains all edges $(u, v, w_{uv})$ in $M_G^{G \backslash H}$, where the inequality $\frac{\wdeg_H(u)}{b_u} + \frac{\wdeg_H(v)}{b_v} < \beta^- \cdot w_{uv}$ is satisfied. Therefore, the remaining edges $(u,v, w_{uv}) \in (H \cup M_G^{G \backslash H}) \backslash (H \cup X^M) = M_G^{G \backslash H} \backslash X^M$ satisfy the inequality $\frac{\wdeg_H(u)}{b_u} + \frac{\wdeg_H(v)}{b_v} \geq \beta^- \cdot w_{uv}$. For Property~(i), for $(u,v,w_{uv}) \in H$, we have $\frac{\wdeg_{H \cup X^M}(u)}{b_u} + \frac{\wdeg_{H \cup X^M}(v)}{b_v} \leq (\beta + 2W)\cdot w_{uv}$. And for $(u, v, w_{uv}) \in X^M$, $\frac{\wdeg_{H \cup X^M}(u)}{b_u} + \frac{\wdeg_{H \cup X^M}(v)}{b_v} < (\beta^- + 2W)\cdot w_{uv} < (\beta + 2W)\cdot w_{uv}$, so Property~(i) is also satisfied.
        
        As a result, Theorem~\ref{thm:intro-general-w-b-edcs} can be applied in this case and we get that $\left(2 - \frac{1}{2W} + \varepsilon\right) \cdot w\left(M_{H \cup X^M}\right) \geq w\left(M_{H \cup M_G^{G \backslash H}}\right) = w(M_G)$, thus concluding the proof.
    \end{proof}

    \begin{remark} \label{rem:magnitude-beta}
        One can easily notice that there exist integers $\beta$ and $\beta^-$ that are $O(poly(W,1/\varepsilon))$ satisfying the conditions of Lemma~\ref{lem:underfull-matching}. From now on, we will use the parameters $\lambda$, $\beta$, and $\beta^-$ that satisfy those conditions and that are $O(poly(W,1/\varepsilon))$.
    \end{remark}
   
    \begin{algorithm}[h]
	\caption{Main algorithm computing a weighted $b$-matching for a random-order stream}\label{algo:w-b-matching}
	\begin{algorithmic}[1]
	\State $H \gets \emptyset$
	\State $\forall\,0 \leq i \leq \log_2 m,\, \alpha_i \gets \left\lfloor\frac{\varepsilon\cdot m }{\log_2(m) \cdot (2^{i+2}\beta^2W^2 + 1)}\right\rfloor$
	\For{$i = 0 \dots \log_2 m$}
	    \State $\textsc{ProcessStopped} \gets \textsc{False}$
	    \For{$2^{i+2}\beta^2W^2 + 1$ iterations}
	        \State $\textsc{FoundUnderfull} \gets \textsc{False}$
	        \For{$\alpha_i$ iterations}
	            \State let $(u,v, w_{uv})$ be the next edge in the stream
	            \If{$\frac{\wdeg_H(u)}{b_u} + \frac{\wdeg_H(v)}{b_v} < \beta^- \cdot w_{uv}$}
	                \State add edge $(u,v, w_{uv})$ to $H$
	                \State $\textsc{FoundUnderfull} \gets \textsc{True}$
                    \While{there exists $(u',v', w_{u'v'}) \in H : \frac{\wdeg_H(u')}{b_{u'}} + \frac{\wdeg_H(v')}{b_{v'}} > \beta \cdot w_{u'v'}$}
                        \State remove $(u', v', w_{u'v'})$ from $H$
                    \EndWhile
	            \EndIf
	        \EndFor
	        \If{$\textsc{FoundUnderfull} = \textsc{False}$}
	            \State $\textsc{ProcessStopped} \gets \textsc{True}$
	            \State \textbf{break} from the loop
	        \EndIf
	    \EndFor
	    \If{$\textsc{ProcessStopped} = \textsc{True}$}
	        \State \textbf{break} from the loop
	    \EndIf
	\EndFor
	\State $X \gets \emptyset$
	\For{each $(u,v,w_{uv})$ remaining edge in the stream}
	    \If{$\frac{\wdeg_H(u)}{b_u} + \frac{\wdeg_H(v)}{b_v} < \beta^- \cdot w_{uv}$}
	        \State add edge $(u,v, w_{uv})$ to $X$
	    \EndIf
	\EndFor
	\State \Return a maximum weight $b$-matching in $H \cup X$
	\end{algorithmic}
	\end{algorithm}
    
    The algorithm, formally described in Algorithm~\ref{algo:w-b-matching}, can be separated into two different phases of streaming. The first phase, corresponding to Lines~3-18, constructs some a subgraph $H$ of bounded weighted edge-degree $\beta$ using only a $\varepsilon$ fraction of the stream $E^{early}$. In the second phase, the algorithm collects the underfull edges in the remaining part of the stream $E^{late}$. As in~\cite{bernstein:LIPIcs:2020:12419} we use the idea that if no underfull edge was found in an interval of size $\alpha$ (see Lines~6-13), it means that with high probability the number of underfull edges remaining in the stream is bounded by some value $\gamma = 4 \log(m) \frac{m}{\alpha}$. The issue is therefore to choose the right size of interval $\alpha$, because we do not know the order of magnitude of $|M_G|$ in the $b$-matching problem: if we do as in~\cite{bernstein:LIPIcs:2020:12419} by choosing only one fixed size of intervals $\alpha$, then if $\alpha$ is too small, the value of $\gamma$ will be too big compared to $|M_G|$, whereas if the value of $\alpha$ is too large we will not be able to terminate the first phase of the algorithm within the early fraction of size $\varepsilon m$.
    Therefore, the idea in the first phase of the algorithm is to ``guess'' the value of $\log_2|M_G|$ by trying successively larger and larger values of $i$ (see Line~3). In fact, by setting $i_0 = \lceil \log_2 |M_G| \rceil$, we know that the number of operations that can be performed on a $w$-$b$-EDCS is bounded by $2^{i_0+2}\beta^2W^2$ (see the proof of Proposition~\ref{prop:exist-w-b-edcs}). As a result we know that the first phase should always stop at a time where $i$ is smaller or equal to $i_0$, and therefore at a time when $\alpha_i \geq \alpha_{i_0}$. Then we can prove that with high probability the number of remaining underfull edges in the stream is at most $\gamma_i = 4 \log(m) \frac{m}{\alpha_i}$.
	
	Algorithm~\ref{algo:w-b-matching} works when $M_G$ is neither too small nor too big. Here we will first argue that 
	the other border cases can be handled anyway. We first have this easy lemma (its proof is very similar to that of Proposition~\ref{prop:w-b-edcs-bounded-size}).
	
	\begin{lemma}
	    \label{lem:bound-graph-size}
	    We have the inequality $|G| \leq  2n \cdot |M_G|$.
	\end{lemma}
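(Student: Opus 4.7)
The plan is to bound $|G|$ by splitting its edges into those that lie in $M_G$ and those that do not, then using the maximality of $M_G$ to argue that every edge outside $M_G$ must be incident to a ``saturated'' vertex, in the sense already introduced in the proof of Proposition~\ref{prop:w-b-edcs-bounded-size}. Let $V_{sat} = \{v \in V : |\delta_G(v) \cap M_G| = b_v\}$. Since $M_G$ is a maximum (and hence maximal) $b$-matching, any edge $(u,v,w_{uv}) \in G \setminus M_G$ must have at least one endpoint in $V_{sat}$; otherwise adding it to $M_G$ would still yield a valid $b$-matching, contradicting maximality.

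From this observation, the first step is the inequality
\[
|G \setminus M_G| \;\leq\; \sum_{v \in V_{sat}} \deg_G(v).
\]
Next I would use the standing assumption that between any two vertices $u$ and $v$ of $G$ there are at most $\min(b_u,b_v) \leq b_v$ parallel edges, which immediately gives $\deg_G(v) \leq (n-1) \cdot b_v$ for every $v \in V$.

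The third step is to bound $\sum_{v \in V_{sat}} b_v$ by counting endpoints of edges in $M_G$: by definition of saturation, $\sum_{v \in V_{sat}} b_v = \sum_{v \in V_{sat}} |\delta_G(v) \cap M_G| \leq 2 |M_G|$, since each edge of $M_G$ contributes to the sum through at most its two endpoints. Combining these bounds yields
\[
|G| \;=\; |M_G| + |G \setminus M_G| \;\leq\; |M_G| + (n-1) \sum_{v \in V_{sat}} b_v \;\leq\; |M_G| + 2(n-1)|M_G| \;\leq\; 2n \cdot |M_G|,
\]
which is the claim. I do not expect any real obstacle here; the lemma is essentially a bookkeeping corollary of maximality of $M_G$ together with the parallel-edges assumption, and each of the three steps above is a one-line inequality.
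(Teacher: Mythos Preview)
Your proof is correct and follows essentially the same route as the paper: define $V_{sat}$, use maximality of $M_G$ to push every non-matching edge onto a saturated endpoint, bound $\deg_G(v)$ by (roughly) $n\cdot b_v$ via the parallel-edges assumption, and bound $\sum_{v\in V_{sat}} b_v$ by twice the matching size. The only cosmetic difference is that the paper refines the count by separating out $M_{sat}\subseteq M_G$ (the matching edges touching $V_{sat}$) and uses $\sum_{v\in V_{sat}} b_v \leq 2|M_{sat}|$, but this extra precision is not needed for the stated bound and your simpler version with $2|M_G|$ and $(n-1)b_v$ works just as well.
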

	
	\begin{proof}
	    We denote by $V_{sat}$ the set of vertices saturated by $M_G$ and $M_{sat} \subseteq M_G$ the subset of edges that are incident to a vertex of $V_{sat}$. 
	    Recall that as $M_G$ is a maximal matching in $G$, for all $(u,v, w_{uv}) \in G \backslash M_G$, either $u$ or $v$ is in $V_{sat}$. Therefore, 
	    \begin{align*}
	        |G| = |M_G \backslash M_{sat}| + |G \backslash (M_G \backslash M_{sat})| &\leq |M_G| - |M_{sat}| + \sum_{v \in V_{sat}} \deg_{G}(v)\\
            &\leq |M_G| - |M_{sat}| + \sum_{v \in V_{sat}}n \cdot b_v\\
            &\leq |M_G| - |M_{sat}| + 2 \cdot |M_{sat}| \cdot n\\
            &\leq 2n \cdot |M_G|,
	    \end{align*}
	    as for all $v \in V$, $\deg_{G}(v) \leq n \cdot b_v$ and $\sum_{v \in V_{sat}} b_v \leq 2\cdot |M_{sat}|$.
	\end{proof}
	
	Then we use it to handle the case of small $b$-matchings.
	
	\begin{claim} \label{claim:small-output}
	    We can assume that $w(M_G) \geq \frac{3W^2}{2\varepsilon^2}\log(m)$.
	\end{claim}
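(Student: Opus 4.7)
My plan is to show that the ``small'' case $w(M_G) < \frac{3W^2}{2\varepsilon^2}\log(m)$ can be handled by a trivial parallel subroutine that simply stores the entire edge set, thereby justifying that without loss of generality we may focus on the regime covered by Algorithm~\ref{algo:w-b-matching}.

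Concretely, the plan is to exhibit a fallback procedure running alongside the main algorithm. First I would observe that, since every edge weight lies in $\{1,\dots,W\}$ and every edge contributes weight at least $1$ to any $b$-matching containing it, we have the trivial inequality $|M_G|\le w(M_G)$. Therefore, under the hypothesis $w(M_G) < \frac{3W^2}{2\varepsilon^2}\log(m)$, we automatically get $|M_G| < \frac{3W^2}{2\varepsilon^2}\log(m)$. Combining this with Lemma~\ref{lem:bound-graph-size}, which bounds $|G|\le 2n\cdot|M_G|$, yields
\[
|G| \;<\; \frac{3 n\,W^2}{\varepsilon^2}\log(m) \;=\; n\cdot poly(\log m, W, 1/\varepsilon).
\]
In particular, the entire stream fits in $O(\max(|M_G|,n)\cdot poly(\log m, W, 1/\varepsilon))$ memory, which is within the budget declared in Theorem~\ref{thm:intro-streaming}.

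Then I would describe the fallback: in parallel with Algorithm~\ref{algo:w-b-matching}, maintain a counter of the number of edges seen so far and, as long as this counter stays below the threshold $\frac{3 n\,W^2}{\varepsilon^2}\log(m)$, simply store every arriving edge in an auxiliary buffer $G'$. If the counter exceeds the threshold at some point, discard the buffer (we have certified that $w(M_G) \ge \frac{3W^2}{2\varepsilon^2}\log(m)$ by the contrapositive of the above inequalities, so Algorithm~\ref{algo:w-b-matching} handles the instance). Otherwise, at the end of the stream $G'=G$ and we return an exact maximum weight $b$-matching computed on $G'$, which is certainly a $(2-\tfrac{1}{2W}+\varepsilon)$-approximation. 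Finally, among the output of the fallback and that of Algorithm~\ref{algo:w-b-matching}, take the one of larger weight.

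There is no real obstacle here: the only subtlety is making sure the memory charged to the fallback is still $O(\max(|M_G|,n)\cdot poly(\log m, W, 1/\varepsilon))$, which follows directly from the edge-count threshold above; and that the fallback is only invoked in the small-$w(M_G)$ regime, which is guaranteed by the contrapositive of $|M_G|\le w(M_G)$ combined with Lemma~\ref{lem:bound-graph-size}. This justifies restricting the analysis in the remainder of the section to the case $w(M_G) \ge \frac{3W^2}{2\varepsilon^2}\log(m)$.
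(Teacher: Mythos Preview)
Your proposal is correct and follows essentially the same approach as the paper: both use $|M_G|\le w(M_G)$ and Lemma~\ref{lem:bound-graph-size} to conclude that in the small-$w(M_G)$ regime the entire graph has only $O(n\cdot poly(\log m,W,1/\varepsilon))$ edges and can therefore be stored in full. Your version is more explicit about the operational details of the fallback (the threshold counter and discarding the buffer), whereas the paper simply notes that the whole graph fits in the allotted memory, but the argument is the same.
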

	
	\begin{proof}
	    In fact, if $w(M_G) < \frac{3W^2}{2\varepsilon^2}\log(m)$, then $|M_G| < \frac{3W^2}{2\varepsilon^2}\log(m)$ and by Lemma~\ref{lem:bound-graph-size} the graph has only $m = O(n \cdot \frac{3W^2}{2\varepsilon^2} \cdot \log(m))$ edges, so the whole graph can be stored so we can compute an exact solution only using $O(n \cdot poly(\log(m),W,1/\varepsilon))$ memory.
	\end{proof}
	
	\begin{claim} \label{claim:late-part}
	    Consider a maximum weight matching $M_G$. Assuming Claim~\ref{claim:small-output}, with probability at least $1 - m^{-3}$ the late part of the stream $E^{late}$ contains at least a $(1 - 2 \varepsilon)$ fraction of that optimal $b$-matching, \emph{i.e.}, $w(E^{late} \cap M_G) \geq (1 - 2 \varepsilon) \cdot w(M_G)$.
	\end{claim}
	
	\begin{proof}
	    Consider a maximum weight $b$-matching $M_G = \{f_1,\ldots, f_{|M_G|}\}$. We define the random variables $X_i = \mathbbm{1}_{f_i \in E^{early}} \cdot w(f_i)$. Hence we have $\mathbb{E}[\sum X_i] = \varepsilon \cdot w(M_G)$. Moreover, the random variables $X_i$ are negatively associated, so we can use Hoeffding's inequality (see Appendix~\ref{app:hoeffding}) to get
	    \[\mathbb{P}\left[\sum_{i = 1}^{|M_G|} X_i \geq 2 \varepsilon \cdot w(M_G)\right] \leq \exp\left(-\frac{2 \cdot \varepsilon^2 \cdot w(M_G)^2}{|M_G| \cdot W^2}\right) \leq \exp\left(-\frac{2 \cdot \varepsilon^2 \cdot w(M_G)}{W^2}\right) \leq m^{-3},\]
	    as we now assume that $w(M_G) \geq \frac{3W^2}{2\varepsilon^2}\log(m)$ (see Claim~\ref{claim:small-output}).
	\end{proof}
	
	Recall that we defined $i_0 = \lceil \log_2 |M_G| \rceil$. 
	
	\begin{claim} \label{claim:big-output}
	    We can assume that $\frac{\varepsilon\cdot m}{\log_2(m) \cdot (2^{i_0+2}\beta^2W^2 + 1)} \geq 1$.
	\end{claim}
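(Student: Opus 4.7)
My plan is to prove the claim by a ``border-case'' argument, analogous to Claim~\ref{claim:small-output}: I will show that if $\frac{\varepsilon\cdot m}{\log_2(m) \cdot (2^{i_0+2}\beta^2W^2 + 1)} < 1$, then the total number of edges in the stream is already small enough to store in the allowed memory budget, so we can bypass Algorithm~\ref{algo:w-b-matching} entirely and compute a maximum weight $b$-matching exactly from the stored stream.

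Concretely, suppose the inequality fails. Then $\varepsilon m < \log_2(m) \cdot (2^{i_0+2}\beta^2W^2 + 1)$. Since $i_0 = \lceil \log_2 |M_G| \rceil$, we have $2^{i_0} \leq 2|M_G|$, which plugged in gives $m \leq \log_2(m) \cdot (8|M_G|\,\beta^2 W^2 + 1)/\varepsilon$. By Remark~\ref{rem:magnitude-beta}, $\beta$ and $\beta^-$ can be chosen of magnitude $O(\mathrm{poly}(W, 1/\varepsilon))$, so this rewrites as $m = O(|M_G| \cdot \mathrm{poly}(\log m, W, 1/\varepsilon))$, which is within the target memory bound of $O(\max(|M_G|,n) \cdot \mathrm{poly}(\log m, W, 1/\varepsilon))$.

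In this regime, the algorithm simply stores every edge as it arrives and outputs an optimal maximum weight $b$-matching offline; the approximation guarantee is then trivial and no randomness is needed. Therefore we may freely assume henceforth that $\frac{\varepsilon\cdot m}{\log_2(m) \cdot (2^{i_0+2}\beta^2W^2 + 1)} \geq 1$.

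I do not expect a genuine obstacle here: the only conceptual step is recognizing that the failure regime of the assumption forces $m$ to be linear in $|M_G|$ up to $\mathrm{poly}(\log m, W, 1/\varepsilon)$ factors, which is exactly the regime in which the streaming machinery is unnecessary. The rest is routine arithmetic, and mirrors the treatment of Claim~\ref{claim:small-output}.
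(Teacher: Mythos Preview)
Your arithmetic is exactly the paper's: if the inequality fails, then $m \leq \log_2(m)\cdot(2^{i_0+2}\beta^2W^2+1)/\varepsilon = O(|M_G|\cdot \mathrm{poly}(\log m, W, 1/\varepsilon))$, so the whole stream fits in the memory budget. That part is fine and matches the paper verbatim.

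The gap is in the sentence ``the algorithm simply stores every edge as it arrives and outputs an optimal maximum weight $b$-matching offline; the approximation guarantee is then trivial and no randomness is needed.'' The algorithm does not know $|M_G|$ (hence does not know $i_0$), so it cannot decide \emph{at the outset} that it is in the failure regime and safely commit to buffering the entire stream; if the assumption actually holds, such a buffer would blow the memory budget. Unlike Claim~\ref{claim:small-output}, where the relevant threshold $2n\cdot\frac{3W^2}{2\varepsilon^2}\log(m)$ is computable from known quantities, here the threshold depends on $|M_G|$.

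The paper handles exactly this point. It observes that the failure regime is \emph{detectable on the fly}: during the first phase the algorithm can check whether $\alpha_i = 0$ (all ingredients of $\alpha_i$ are known). If the first phase has not stopped and $\alpha_i$ hits $0$, the algorithm switches to storing all \emph{remaining} edges. Because part of the stream has already been consumed, one cannot claim an exact solution; instead the paper invokes Claim~\ref{claim:late-part} to conclude that $E^{late}$ still contains a $(1-2\varepsilon)$-fraction of $w(M_G)$ with probability at least $1-m^{-3}$. So the fallback yields a $(1-2\varepsilon)$-approximation with high probability, not an exact deterministic answer as you assert. Apart from this detection issue, your argument is the same as the paper's.
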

	
	\begin{proof}
	    If this is not the case, then we can just store all the edges of $G$ as the number of edges $m$  is bounded by $\frac{\log_2(m) \cdot (2^{i_0+2}\beta^2W^2 + 1)}{\varepsilon} = O(|M_G|\cdot poly(\log(m),W,1/\varepsilon))$ (as $\beta$ is $O(poly(W,1/\varepsilon))$, see Remark~\ref{rem:magnitude-beta}). As a result, if at some point of the first phase we have not stopped and we have $\alpha_i = 0$, then we store all the remaining edges of $E^{late}$ and we will be able to get a $(1 - 2\varepsilon)$ approximation with high probability (because of Claim~\ref{claim:late-part}) using $O(|M_G|\cdot poly(\log(m),W,1/\varepsilon))$ memory.
	\end{proof}
	
	Then we can move to our main algorithm. The following lemma is very similar to the one used in~\cite{bernstein:LIPIcs:2020:12419} (see the proof in the Appendix). It can then be combined with previous lemmas and claims to prove that a $2 - \frac{1}{2W} + \varepsilon$ approximation can be achieved with high probability.
	
	\begin{lemma} \label{lem:early-properties}
	    The first phase of Algorithm~\ref{algo:w-b-matching} uses $O(\beta\cdot |M_G|)$ memory and constructs a subgraph $H$ of $G$, satisfying the following properties:
	    \begin{enumerate}
	        \item The first phase terminates within the first $\varepsilon m$ edges of the stream.
	        \item When the first phase terminates after processing some edge, we have:
	            \begin{enumerate}
	                \item $H$ has bounded weighted edge degree $\beta$, and  contains at most $O(\beta \cdot |M_G|)$ edges.
	                \item With probability at least $1 - m^{-3}$, the total number of $(H, \beta, \beta^-)$-underfull edges in the remaining part of the stream is at most $\gamma = O(|M_G|\cdot (\log(m))^2 \cdot \beta^2 W^2 \cdot 1/\varepsilon)$.
	            \end{enumerate}
	    \end{enumerate}
	\end{lemma}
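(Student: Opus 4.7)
My plan is to treat the three claims in turn, leveraging the potential-function analysis from Proposition~\ref{prop:exist-w-b-edcs} together with a concentration argument over the random ordering of the stream.

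I would start with property~2(a) together with the memory bound. The inner while loop at Lines~11-12 repairs Property~(i) of Definition~\ref{def:w-b-edcs} after every insertion, so at every moment of the first phase---in particular at termination---$H$ has bounded weighted edge-degree $\beta$. An argument identical to that of Proposition~\ref{prop:w-b-edcs-bounded-degree}, which only uses Property~(i), then gives $\deg_H(v)\le\beta\cdot b_v$, and the counting argument of Proposition~\ref{prop:w-b-edcs-bounded-size} (which uses nothing beyond that degree bound) yields $|H|\le 2\beta\cdot|M_G|$. Only $H$ is stored during the first phase, so the memory used is $O(\beta\cdot|M_G|)$.

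Next I would establish property~1. Set $i_0:=\lceil\log_2|M_G|\rceil$. Every inner iteration that flips FoundUnderfull to True performs at least one modification to $H$, and the potential-function argument of Proposition~\ref{prop:exist-w-b-edcs} bounds the total number of modifications during the phase by $O(\beta^2 W^2\cdot|M_G|)\le 2^{i_0+2}\beta^2 W^2$. Since round $i_0$ schedules $2^{i_0+2}\beta^2 W^2+1$ inner iterations, at least one of them must encounter no underfull edge and trigger ProcessStopped, so the phase terminates no later than round $i_0$. The total number of stream edges consumed is then bounded by $\sum_{i=0}^{i_0}(2^{i+2}\beta^2 W^2+1)\cdot\alpha_i\le(i_0+1)\cdot\varepsilon m/\log_2 m\le\varepsilon m$, by the definition of $\alpha_i$.

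The main obstacle is property~2(b), which I would address by a union bound over all candidate termination windows. Fix a round $i\le i_0$ and an inner iteration ending at stream position $t$, and condition on the multiset of edges that occupy positions $1,\dots,t-\alpha_i$. This freezes the state $\tilde H$ of $H$ at the start of the window, and by the random-order assumption the remaining $m-(t-\alpha_i)$ edges are uniformly permuted. Let $U$ denote the number of $(\tilde H,\beta,\beta^-)$-underfull edges among them. The algorithm terminates at this window precisely when $H$ is not modified during it, that is, when none of the $U$ underfull edges falls in the first $\alpha_i$ positions of the suffix; a standard hypergeometric bound then gives probability at most $(1-\alpha_i/(m-t+\alpha_i))^U\le\exp(-U\alpha_i/m)$, which is at most $m^{-5}$ once $U\ge 5\log(m)\cdot m/\alpha_i$. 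A union bound over the $O(2^{i_0+2}\beta^2 W^2)=O(m)$ candidate windows yields total failure probability at most $m^{-3}$, and using $\alpha_{i^*}\ge\alpha_{i_0}=\Omega(\varepsilon m/(\log m\cdot|M_G|\cdot\beta^2 W^2))$ converts the threshold into $\gamma=O(|M_G|\cdot\log^2(m)\cdot\beta^2 W^2/\varepsilon)$, matching the claim.

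The subtlety is that the final $H$ is itself random and cannot be fixed in advance; this is resolved precisely by union-bounding per window, since conditioning on the prefix of the random permutation freezes $\tilde H$, leaving only the ordering of the suffix as remaining randomness, which is exactly what the hypergeometric estimate controls.
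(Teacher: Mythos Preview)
Your proposal is correct and follows essentially the same approach as the paper: the potential-function bound from Proposition~\ref{prop:exist-w-b-edcs} to force termination by round $i_0=\lceil\log_2|M_G|\rceil$, Proposition~\ref{prop:w-b-edcs-bounded-size} for the size of $H$, and a per-epoch bad-event analysis with a union bound for property~2(b). The paper bounds the probability that no epoch edge is underfull as $\prod_{k}(1-\gamma/m)$, whereas you bound the probability that no underfull edge lands in the window as $(1-\alpha_i/N)^U$; these are dual views of the same hypergeometric event and both valid.

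One small imprecision worth fixing: conditioning on the \emph{multiset} of edges in positions $1,\dots,t-\alpha_i$ does not freeze $\tilde H$, since $H$ depends on the \emph{order} of the prefix. You should condition on the full ordered prefix (equivalently, on the sequence $e_1,\dots,e_{t-\alpha_i}$); this determines $\tilde H$ and $U$, and the suffix is then a uniformly random permutation of the complement, which is exactly what your hypergeometric estimate needs. The paper is equally informal at this step, so this is a cosmetic rather than a substantive gap.
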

	
	\begin{proof}
	    First, in each interval of size $\alpha_i$ processed until the first phase terminates (except the last interval) there is at least one insertion/deletion operation that is performed (as described in the proof of Proposition~\ref{prop:exist-w-b-edcs}) and therefore the total number of such processed intervals is bounded by $4 \beta^2W^2\cdot |M_G| + 1$. As a result, the first phase ends with some $i \leq i_0 = \lceil \log_2 |M_G| \rceil$, and the total number of edges processed in the first phase is therefore bounded by $\varepsilon m \cdot \frac{i_0}{\log_2(m)} \leq \varepsilon m$. For Property~2.a, as the subgraph $H$ built always keeps a bounded weighed edge-degree $\beta$, Proposition~\ref{prop:w-b-edcs-bounded-size} implies that  $H$ uses $O(\beta \cdot |M_G|)$ memory.

        Now we turn to the last property. As mentioned previously, the intuition is simple: the algorithm only exits the first phase if it fails to find a single underfull edge in an entire interval (Line~14-16), and since the stream is random, such an event implies that there are most likely few underfull edges left in the stream.

        To formalize this, we call the $j$-th time that Lines 7-13 are processed the \emph{epoch} $j$. Let $\mathcal{A}_j$ be the event that $\textsc{FoundUnderfull}$ is set to $\textsc{False}$ in epoch $j$. Let $\mathcal{B}_j$ be the event that the number of $(H,\beta, \beta^-)$-underfull edges in the remaining part of the stream is larger than some $\gamma$. Note that the last property fails to hold if and only if we have $\mathcal{A}_j \land \mathcal{B}_j$ for some $j$, so we want to upper bound $\mathbb{P}[\mathcal{A}_j \land \mathcal{B}_j]$. Let $E_j^r$ be the set of all edges in the graph that have not yet appeared in the stream at the \emph{beginning} of epoch $j$ ($r$ for remaining). Let $E^e_j$ be the edges that appear in epoch $j$ ($e$ for epoch), and note that $E^e_j$ is a subset of size $\alpha_i \geq \alpha_{i_0} = \alpha_{\lceil\log_2|M_G|\rceil} = \alpha$ chosen uniformly at random from $E^r_j$. Define $H_j$ to be the subgraph $H$ at the beginning of epoch $j$, and define $E^u_j \subseteq E^r_j$ to be the set of remaing underfull edges with respect to $H_j$, $\beta$, and $\beta^-$. Observe that because of event $\mathcal{A}_j$, the graph $H$ does not change throughout epoch $j$, so an edge that is underfull at any point during the epoch will be underfull at the end as well. Thus, $\mathcal{A}_j \land \mathcal{B}_j$ is equivalent to the event that $|E^u_j| > \gamma$ but $E^u_j \cap E^e_j = \emptyset$. 

        Let $\mathcal{A}^k_j$ be the event that the $k$-th edge of epoch $j$ is not in $E^u_j$. We have that \[\mathbb{P}[\mathcal{B}_j \land \mathcal{A}_j] \leq \mathbb{P}[\mathcal{A}_j \,|\, \mathcal{B}_j] \leq \mathbb{P}[\mathcal{A}^1_j \, | \, \mathcal{B}_j] \prod_{k=2}^\alpha \mathbb{P}[\mathcal{A}^k_j \, | \, \mathcal{B}_j, \mathcal{A}^1_j, \ldots, \mathcal{A}^{k-1}_j],\] the second inequality coming from the fact that $E^e_j$ is of size larger or equal to $\alpha = \alpha_{\lceil\log_2|M_G|\rceil}$.

        Now, observe that $\mathbb{P}[\mathcal{A}^1_j \,|\, \mathcal{B}_j] < 1 - \frac{\gamma}{m}$ because the first edge of the epoch is chosen uniformly at random from the set of $\leq m$ remaining edges, and the event fails if the chosen edge is in $E^u_j$, where $|E^u_j| > \gamma$ by definition of $\mathcal{B}_j$. Similarly, for any $k$,
        $\mathbb{P}[\mathcal{A}^k_j \, | \, \mathcal{B}_j, \mathcal{A}^1_j, \ldots, \mathcal{A}^{k-1}_j] < 1 - \frac{\gamma}{m}$ because conditioning on the previous events $\mathcal{A}^t_j$ implies that no edge from $E^u_j$ has yet appeared in this epoch, so there are still at least $\gamma$ edges from $E^u_j$ left in the stream.

        We now set 
        \[\gamma = 4\log(m)\cdot \frac{m}{\alpha} = 4\log(m)\cdot m \cdot \left\lfloor\frac{\varepsilon\cdot m }{\log_2(m) \cdot (2^{i_0+2}\beta^2W^2 + 1)}\right\rfloor^{-1},\] which is $O(|M_G|\cdot (\log(m))^2 \cdot \beta^2 W^2 \cdot 1/\varepsilon)$, as we assumed that $\frac{\varepsilon\cdot m }{\log_2(m) \cdot (2^{i_0+2}\beta^2W^2 + 1)} \geq 1$ (and there is at most a factor $2$ between $\lfloor x \rfloor$ and $x$ if $x \geq 1$).
        
        Combining the above equations yields \[\mathbb{P}[\mathcal{B}_j \land \mathcal{A}_j] \leq \left(1-\frac{\gamma}{m}\right)^{\alpha} = \left(1 - \frac{4\log(m)}{\alpha}\right)^\alpha \leq m^{-4}.\] There are clearly at most $m$ epochs, so union bounding over all of them shows that the last property fails with probability at most $m^{-3}$, as desired.
	\end{proof}
	
	\begin{theorem}
	    \label{thm:streaming-approx}
	    Let $\varepsilon > 0$. Using Algorithm~\ref{algo:w-b-matching}, with probability $1 - 2m^{-3}$, one can extract from a randomly-ordered stream of edges a weighted $b$-matching with an approximation ratio of $2 - \frac{1}{2W} + \varepsilon$, using $O(\max(|M_G|,n) \cdot poly(\log(m), W, 1/\varepsilon))$ memory.
	\end{theorem}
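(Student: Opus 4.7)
The plan is to assemble the building blocks already established. By Lemma~\ref{lem:early-properties}, the first phase of Algorithm~\ref{algo:w-b-matching} terminates within the first $\varepsilon m$ edges of the stream and produces a subgraph $H$ with bounded weighted edge-degree $\beta$ consisting of at most $O(\beta \cdot |M_G|)$ edges; the second phase then scans the remaining stream $E^{\mathrm{late}}$ and inserts into $X$ every edge that is $(H, \beta, \beta^-)$-underfull with respect to this final $H$.

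For the approximation guarantee, the key idea is to apply Lemma~\ref{lem:underfull-matching} not to $G$ itself but to the auxiliary graph $G' := H \cup E^{\mathrm{late}}$. Since $H$ has bounded weighted edge-degree $\beta$ and, by construction of Lines~24--26, $X$ contains every $(H, \beta, \beta^-)$-underfull edge of $G' \setminus H = E^{\mathrm{late}} \setminus H$, Lemma~\ref{lem:underfull-matching} yields
\[
\left(2 - \tfrac{1}{2W} + \varepsilon\right) \cdot w(M_{H \cup X}) \;\geq\; w(M_{G'}) \;\geq\; w(M_G \cap E^{\mathrm{late}}),
\]
where the last inequality holds because $M_G \cap E^{\mathrm{late}}$ is a valid $b$-matching contained in $G'$. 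Claim~\ref{claim:late-part} then gives $w(M_G \cap E^{\mathrm{late}}) \geq (1 - 2\varepsilon) \cdot w(M_G)$ with probability at least $1 - m^{-3}$, and rescaling $\varepsilon$ by a fixed constant (which only blows up the polynomial factors by a constant) turns the resulting ratio into the claimed $2 - \tfrac{1}{2W} + \varepsilon$.

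The memory bound is then direct: Property~2.a of Lemma~\ref{lem:early-properties} stores $O(\beta \cdot |M_G|)$ edges in $H$, and Property~2.b guarantees, with probability at least $1 - m^{-3}$, that $|X| \leq \gamma = O(|M_G| \cdot (\log m)^2 \cdot \beta^2 W^2 / \varepsilon)$. Since $\beta = O(poly(W, 1/\varepsilon))$ by Remark~\ref{rem:magnitude-beta}, and since Claims~\ref{claim:small-output} and \ref{claim:big-output} already handle the border cases (very small or very large $|M_G|$) within the same asymptotic budget, the total memory consumption is $O(\max(|M_G|, n) \cdot poly(\log m, W, 1/\varepsilon))$. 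Union-bounding the two high-probability failure events (Claim~\ref{claim:late-part} and Property~2.b of Lemma~\ref{lem:early-properties}) yields the stated failure probability of at most $2 m^{-3}$.

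The main obstacle is conceptual rather than computational: the hypothesis of Lemma~\ref{lem:underfull-matching} cannot be checked directly against $G$, because underfull edges in the early stream are not captured anywhere. Introducing the auxiliary graph $G' = H \cup E^{\mathrm{late}}$ sidesteps this issue, and Claim~\ref{claim:late-part} ensures that restricting attention from $G$ to $G'$ only costs a $(1 - 2\varepsilon)$ factor in the optimum; the rest is a straightforward book-keeping of constants and a union bound.
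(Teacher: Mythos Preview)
Your proposal is correct and follows essentially the same approach as the paper: apply Lemma~\ref{lem:underfull-matching} to the auxiliary graph $H \cup E^{\mathrm{late}}$, use Claim~\ref{claim:late-part} to absorb the loss from the early part, and union-bound the two failure events from Claim~\ref{claim:late-part} and Lemma~\ref{lem:early-properties}. If anything, your write-up is more explicit than the paper's (which compresses the entire argument into a single sentence) in articulating why one must pass to $G' = H \cup E^{\mathrm{late}}$ rather than $G$, and in noting the final rescaling of~$\varepsilon$.
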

	
	\begin{proof}
	    Applying Lemma~\ref{lem:underfull-matching} to the graph $H \cup G^{late}$ we can get, choosing the right values $\beta$ and $\beta^-$ (which are $O(poly(W, 1/\varepsilon))$), $H \cup X$ contains a $(1 - 2\varepsilon)^{-1}\cdot (2 - \frac{1}{2W} + \varepsilon)$ approximation of the optimal $b$-matching (with probability at least $1 - m^{-3}$, see Claim~\ref{claim:late-part}), and with  a memory consumption of $O(|M_G| \cdot poly(\log(m), W, 1/\varepsilon))$ (with probability at least $1 - m^{-3}$, see Lemma~\ref{lem:early-properties}), with probability at least $1 - 2m^{-3}$ (union bound). Hence the proof.
	\end{proof}

\bibliography{library}

\begin{thebibliography}{10}

\bibitem{AssadiB21}
Sepehr Assadi and Soheil Behnezhad.
\newblock Beating two-thirds for random-order streaming matching.
\newblock In Nikhil Bansal, Emanuela Merelli, and James Worrell, editors, {\em
  48th International Colloquium on Automata, Languages, and Programming,
  {ICALP} 2021, July 12-16, 2021, Glasgow, Scotland (Virtual Conference)},
  volume 198 of {\em LIPIcs}, pages 19:1--19:13. Schloss Dagstuhl -
  Leibniz-Zentrum f{\"{u}}r Informatik, 2021.

\bibitem{AssadiB19}
Sepehr Assadi and Aaron Bernstein.
\newblock Towards a unified theory of sparsification for matching problems.
\newblock In Jeremy~T. Fineman and Michael Mitzenmacher, editors, {\em 2nd
  Symposium on Simplicity in Algorithms, {SOSA} 2019, January 8-9, 2019, San
  Diego, CA, {USA}}, volume~69 of {\em {OASICS}}, pages 11:1--11:20. Schloss
  Dagstuhl - Leibniz-Zentrum f{\"{u}}r Informatik, 2019.

\bibitem{bernstein:LIPIcs:2020:12419}
Aaron Bernstein.
\newblock Improved bounds for matching in random-order streams.
\newblock In Artur Czumaj, Anuj Dawar, and Emanuela Merelli, editors, {\em 47th
  International Colloquium on Automata, Languages, and Programming, {ICALP}
  2020, July 8-11, 2020, Saarbr{\"{u}}cken, Germany (Virtual Conference)},
  volume 168 of {\em LIPIcs}, pages 12:1--12:13. Schloss Dagstuhl -
  Leibniz-Zentrum f{\"{u}}r Informatik, 2020.

\bibitem{BernsteinDL21}
Aaron Bernstein, Aditi Dudeja, and Zachary Langley.
\newblock A framework for dynamic matching in weighted graphs.
\newblock In Samir Khuller and Virginia~Vassilevska Williams, editors, {\em
  {STOC} '21: 53rd Annual {ACM} {SIGACT} Symposium on Theory of Computing,
  Virtual Event, Italy, June 21-25, 2021}, pages 668--681. {ACM}, 2021.

\bibitem{BernsteinS15}
Aaron Bernstein and Cliff Stein.
\newblock Fully dynamic matching in bipartite graphs.
\newblock In Magn{\'{u}}s~M. Halld{\'{o}}rsson, Kazuo Iwama, Naoki Kobayashi,
  and Bettina Speckmann, editors, {\em Automata, Languages, and Programming -
  42nd International Colloquium, {ICALP} 2015, Kyoto, Japan, July 6-10, 2015,
  Proceedings, Part {I}}, volume 9134 of {\em Lecture Notes in Computer
  Science}, pages 167--179. Springer, 2015.

\bibitem{BernsteinS16}
Aaron Bernstein and Cliff Stein.
\newblock Faster fully dynamic matchings with small approximation ratios.
\newblock In Robert Krauthgamer, editor, {\em Proceedings of the Twenty-Seventh
  Annual {ACM-SIAM} Symposium on Discrete Algorithms, {SODA} 2016, Arlington,
  VA, USA, January 10-12, 2016}, pages 692--711. {SIAM}, 2016.

\bibitem{Egervary1931matrixok}
Jeno Egerv{\'a}ry.
\newblock Matrixok kombinatorius tulajdons{\'a}gair{\'o}l.
\newblock {\em Matematikai {\'e}s Fizikai Lapok}, 38(1931):16--28, 1931.

\bibitem{Lovasz1973}
Paul Erd{\H{o}}s and L{\'a}szl{\'o} Lov{\'a}sz.
\newblock Problems and results on 3-chromatic hypergraphs and some related
  questions.
\newblock In {\em Colloquia Mathematica Societatis Janos Bolyai 10. Infinite
  and Finite Sets, Keszthely (Hungary)}. Citeseer, 1973.

\bibitem{FarhadiHMRR20}
Alireza Farhadi, Mohammad~Taghi Hajiaghayi, Tung Mai, Anup Rao, and Ryan~A.
  Rossi.
\newblock Approximate maximum matching in random streams.
\newblock In Shuchi Chawla, editor, {\em Proceedings of the 2020 {ACM-SIAM}
  Symposium on Discrete Algorithms, {SODA} 2020, Salt Lake City, UT, USA,
  January 5-8, 2020}, pages 1773--1785. {SIAM}, 2020.

\bibitem{FKMSZ2005}
Joan Feigenbaum, Sampath Kannan, Andrew McGregor, Siddharth Suri, and Jian
  Zhang.
\newblock On graph problems in a semi-streaming model.
\newblock {\em Theor. Comput. Sci.}, 348(2-3):207--216, 2005.

\bibitem{GamlathSSS2019}
Buddhima Gamlath, Sagar Kale, Slobodan Mitrovic, and Ola Svensson.
\newblock Weighted matchings via unweighted augmentations.
\newblock In Peter Robinson and Faith Ellen, editors, {\em Proceedings of the
  2019 {ACM} Symposium on Principles of Distributed Computing, {PODC} 2019,
  Toronto, ON, Canada, July 29 - August 2, 2019}, pages 491--500. {ACM}, 2019.

\bibitem{GW2019}
Mohsen Ghaffari and David Wajc.
\newblock Simplified and space-optimal semi-streaming (2+epsilon)-approximate
  matching.
\newblock In Jeremy~T. Fineman and Michael Mitzenmacher, editors, {\em 2nd
  Symposium on Simplicity in Algorithms, {SOSA} 2019, January 8-9, 2019, San
  Diego, CA, {USA}}, volume~69 of {\em OASIcs}, pages 13:1--13:8. Schloss
  Dagstuhl - Leibniz-Zentrum f{\"{u}}r Informatik, 2019.

\bibitem{GuruganeshS17}
Guru~Prashanth Guruganesh and Sahil Singla.
\newblock Online matroid intersection: Beating half for random arrival.
\newblock In Friedrich Eisenbrand and Jochen K{\"{o}}nemann, editors, {\em
  Integer Programming and Combinatorial Optimization - 19th International
  Conference, {IPCO} 2017, Waterloo, ON, Canada, June 26-28, 2017,
  Proceedings}, volume 10328 of {\em Lecture Notes in Computer Science}, pages
  241--253. Springer, 2017.

\bibitem{HuangS21}
Chien{-}Chung Huang and Fran{\c{c}}ois Sellier.
\newblock Semi-streaming algorithms for submodular function maximization under
  b-matching constraint.
\newblock In Mary Wootters and Laura Sanit{\`{a}}, editors, {\em Approximation,
  Randomization, and Combinatorial Optimization. Algorithms and Techniques,
  {APPROX/RANDOM} 2021, August 16-18, 2021, University of Washington, Seattle,
  Washington, {USA} (Virtual Conference)}, volume 207 of {\em LIPIcs}, pages
  14:1--14:18. Schloss Dagstuhl - Leibniz-Zentrum f{\"{u}}r Informatik, 2021.

\bibitem{Huang_et_al:LIPIcs.ESA.2022.68}
Chien{-}Chung Huang and Fran{\c{c}}ois Sellier.
\newblock Maximum weight b-matchings in random-order streams.
\newblock In Shiri Chechik, Gonzalo Navarro, Eva Rotenberg, and Grzegorz
  Herman, editors, {\em 30th Annual European Symposium on Algorithms, {ESA}
  2022, September 5-9, 2022, Berlin/Potsdam, Germany}, volume 244 of {\em
  LIPIcs}, pages 68:1--68:14. Schloss Dagstuhl - Leibniz-Zentrum f{\"{u}}r
  Informatik, 2022.

\bibitem{Kapralov2021}
Michael Kapralov.
\newblock Space lower bounds for approximating maximum matching in the edge
  arrival model.
\newblock In D{\'{a}}niel Marx, editor, {\em Proceedings of the 2021 {ACM-SIAM}
  Symposium on Discrete Algorithms, {SODA} 2021, Virtual Conference, January 10
  - 13, 2021}, pages 1874--1893. {SIAM}, 2021.

\bibitem{Konrad18}
Christian Konrad.
\newblock A simple augmentation method for matchings with applications to
  streaming algorithms.
\newblock In Igor Potapov, Paul~G. Spirakis, and James Worrell, editors, {\em
  43rd International Symposium on Mathematical Foundations of Computer Science,
  {MFCS} 2018, August 27-31, 2018, Liverpool, {UK}}, volume 117 of {\em
  LIPIcs}, pages 74:1--74:16. Schloss Dagstuhl - Leibniz-Zentrum f{\"{u}}r
  Informatik, 2018.

\bibitem{KonradMM12}
Christian Konrad, Fr{\'{e}}d{\'{e}}ric Magniez, and Claire Mathieu.
\newblock Maximum matching in semi-streaming with few passes.
\newblock In Anupam Gupta, Klaus Jansen, Jos{\'{e}} D.~P. Rolim, and Rocco~A.
  Servedio, editors, {\em Approximation, Randomization, and Combinatorial
  Optimization. Algorithms and Techniques - 15th International Workshop,
  {APPROX} 2012, and 16th International Workshop, {RANDOM} 2012, Cambridge, MA,
  USA, August 15-17, 2012. Proceedings}, volume 7408 of {\em Lecture Notes in
  Computer Science}, pages 231--242. Springer, 2012.

\bibitem{LW2021}
Roie Levin and David Wajc.
\newblock Streaming submodular matching meets the primal-dual method.
\newblock In D{\'{a}}niel Marx, editor, {\em Proceedings of the 2021 {ACM-SIAM}
  Symposium on Discrete Algorithms, {SODA} 2021, Virtual Conference, January 10
  - 13, 2021}, pages 1914--1933. {SIAM}, 2021.

\bibitem{PS2017}
Ami Paz and Gregory Schwartzman.
\newblock A (2+{\(\epsilon\)})-approximation for maximum weight matching in the
  semi-streaming model.
\newblock volume~15, pages 18:1--18:15, 2019.

\bibitem{tennenholtz2000some}
Moshe Tennenholtz.
\newblock Some tractable combinatorial auctions.
\newblock In Henry~A. Kautz and Bruce~W. Porter, editors, {\em Proceedings of
  the Seventeenth National Conference on Artificial Intelligence and Twelfth
  Conference on on Innovative Applications of Artificial Intelligence, July 30
  - August 3, 2000, Austin, Texas, {USA}}, pages 98--103. {AAAI} Press / The
  {MIT} Press, 2000.

\end{thebibliography}

\appendix

\section{Details from Earlier Sections}

    \subsection{Hoeffding's Inequality}
        \label{app:hoeffding}
        We recall the following probabilistic tool we will use in this paper.
        \begin{proposition}[Hoeffding's inequality] \label{prop:hoeffding}
            Let $X_1, \dots, X_t$ be $t$ negatively associated random variables that take values in $[0,W]$. Let $X := \sum_{i = 1}^tX_i$. Then, for all $\lambda > 0$ we have:
            \[\mathbb{P}(X - \mathbb{E}[X] \geq \lambda) \leq \exp\left(-\frac{2\lambda^2}{t \cdot W^2}\right) \,\text{ and }\, \mathbb{P}(|X - \mathbb{E}[X]| \geq \lambda) \leq 2 \cdot \exp\left(-\frac{2\lambda^2}{t \cdot W^2}\right).\]
        \end{proposition}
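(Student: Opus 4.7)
The plan is to follow the standard exponential-moment (Chernoff) argument, adapted to the negatively associated setting. First I would introduce the centred variables $Y_i := X_i - \mathbb{E}[X_i]$, so that $Y_i$ takes values in an interval of length at most $W$ with $\mathbb{E}[Y_i]=0$, and $X-\mathbb{E}[X] = \sum_{i=1}^t Y_i$. For any $s>0$, Markov's inequality applied to $e^{s(X-\mathbb{E}[X])}$ gives
\[
\mathbb{P}\!\left(X-\mathbb{E}[X]\geq \lambda\right)\;\leq\; e^{-s\lambda}\,\mathbb{E}\!\left[\exp\!\Big(s\sum_{i=1}^t Y_i\Big)\right].
\]

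The key step is to factorise the joint moment generating function. For independent variables this would be immediate, but here I only have negative association. I would invoke the standard property of negatively associated random variables: for any collection of nondecreasing functions $f_1,\dots,f_t$, one has $\mathbb{E}\!\left[\prod_i f_i(X_i)\right]\leq \prod_i \mathbb{E}[f_i(X_i)]$. Applying this to $f_i(x)=e^{s(x-\mathbb{E}[X_i])}$ (which is increasing for $s>0$) yields
\[
\mathbb{E}\!\left[\exp\!\Big(s\sum_i Y_i\Big)\right]\;\leq\; \prod_{i=1}^t \mathbb{E}\!\left[e^{sY_i}\right].
\]
Then I would bound each factor by Hoeffding's lemma: since $Y_i$ lies in an interval of length $W$ and has mean $0$, $\mathbb{E}[e^{sY_i}]\leq \exp(s^2 W^2/8)$. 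Combining these steps gives
\[
\mathbb{P}(X-\mathbb{E}[X]\geq\lambda)\;\leq\; \exp\!\left(-s\lambda + \tfrac{s^2 t W^2}{8}\right),
\]
and optimising over $s$ by taking $s=4\lambda/(tW^2)$ yields the one-sided bound $\exp(-2\lambda^2/(tW^2))$.

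For the two-sided bound, the symmetric inequality $\mathbb{P}(X-\mathbb{E}[X]\leq -\lambda)\leq \exp(-2\lambda^2/(tW^2))$ follows by the same argument applied to the variables $W - X_i$ (which are also negatively associated and take values in $[0,W]$), or equivalently by using $s<0$ and noting that $x\mapsto e^{sx}$ is then decreasing, so negative association again gives the required product bound. A union bound over the two tails gives the factor of $2$ in the second inequality.

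The main obstacle, and essentially the only nonroutine point, is justifying the factorisation of the moment generating function under negative association rather than independence; everything else is the textbook Chernoff--Hoeffding calculation. Since the defining property of negatively associated variables gives exactly the needed inequality for products of monotone functions of the individual variables, this step is clean once the property is cited, and the rest of the proof is a direct computation.
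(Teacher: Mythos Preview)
Your proof is correct and follows the standard Chernoff--Hoeffding argument, with the appropriate use of the product inequality for monotone functions of negatively associated variables to replace independence. Note, however, that the paper does not actually prove this proposition: it is simply recalled in the appendix as a known probabilistic tool, without proof. So there is no ``paper's own proof'' to compare against; your argument is exactly the textbook derivation one would cite for this fact.
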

        
    \subsection{Existence of the Edge Distribution for Theorem~\ref{thm:intro-general-w-b-edcs}}
        \label{app:proof-distrib-edges}
        \begin{lemma} \label{lem:distrib-edges}
            In the proof of Theorem~\ref{thm:intro-general-w-b-edcs}, for $v \in V$, the edges of $\delta_G(v) \cap (H \cup M_G)$ can be distributed among the $b_v$ vertices $v_1, \dots, v_{b_v}$ so that:
            \begin{enumerate}[(i)]
                \item each $v_i$ has a most one edge of $M_G$ incident to it;
                \item $G'$ is a simple graph (\emph{i.e.} with no multiple edge between two given vertices);
                \item each $v_i$ has a weighted degree in the interval $\left[\frac{\wdeg_H(v)}{b_v} - 2W, \frac{\wdeg_H(v)}{b_v} + 3W \right]$.
            \end{enumerate}
        \end{lemma}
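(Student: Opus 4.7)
The plan is a two-stage construction. First, produce any assignment of the edges in $\delta_G(v)\cap(H\cup M_G)$ to the copies $v_1,\ldots,v_{b_v}$ satisfying the combinatorial constraints~(i) and~(ii). Second, refine this assignment by local exchanges to additionally meet the weighted-degree bound~(iii).

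For the feasibility stage, I would model the assignment as an edge-colouring problem on an auxiliary bipartite multigraph $K$. The two parts of $K$ are $A=\{u:u\text{ is adjacent to }v\text{ in }H\cup M_G\}$ and $B=\{\ast_M\}\cup\{\ast_u:u\in A\}$. Each edge $e=(u,v,w_{uv})\in\delta_G(v)\cap(H\cup M_G)$ is represented by an edge of $K$ incident to $u$: joined to $\ast_M$ if $e\in M_G$ and to $\ast_u$ otherwise. The degree of any $u\in A$ equals the number of $u$-$v$ multi-edges in $H\cup M_G$, which is at most $\min(b_u,b_v)\leq b_v$ by the standing assumption on $G$; the degree of $\ast_M$ equals $|\delta(v)\cap M_G|\leq b_v$; and every $\ast_u$ has degree at most $b_v$. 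Since $K$ is bipartite, K\H{o}nig's edge-colouring theorem gives a proper colouring with $b_v$ colours, which we identify with $v_1,\ldots,v_{b_v}$. The induced assignment satisfies~(i) (edges at $\ast_M$ receive distinct colours) and~(ii) (edges at every $u\in A$ receive distinct colours).

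For the balancing stage I would use the potential $\Phi=\sum_{i=1}^{b_v}(L(v_i)-\wdeg_H(v)/b_v)^2$, where $L(v_i)$ denotes the weighted degree of $v_i$ in the resulting graph. A move transfers a single edge $e$ from an overloaded copy $v_i$ to an underloaded copy $v_j$ whenever~(i) and~(ii) permit, or else swaps $e$ with the conflicting edge $e'$ at $v_j$ (the one sharing endpoint $u$, or, for $M_G$-edges, a conflicting $M_G$-edge). Since the weights are positive integers in $\{1,\dots,W\}$, a direct computation shows that such an exchange strictly decreases $\Phi$ as soon as the load gap $L(v_i)-L(v_j)$ exceeds the weight of the moved edge, hence as soon as the imbalance exceeds $W$, so the procedure terminates.

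The main obstacle is certifying that a $\Phi$-decreasing exchange always exists whenever some copy lies outside the target window $\left[\wdeg_H(v)/b_v-2W,\,\wdeg_H(v)/b_v+3W\right]$. The core combinatorial claim is that if no direct transfer from $v_i$ to $v_j$ is possible then every $e\in E_{v_i}$ has a partner $\phi(e)\in E_{v_j}$ sharing the conflicting class (same endpoint $u$ or both in $M_G$), and $\phi$ is injective by~(ii); since $L(v_i)>L(v_j)$ one has $\sum_e w(e)>\sum_e w(\phi(e))$, so some $e$ must satisfy $w(e)>w(\phi(e))$, and the corresponding swap strictly decreases $\Phi$ while preserving (i) and (ii). The asymmetry of the window $[-2W,+3W]$ reflects the fact that each copy may carry at most one $M_G\setminus H$ edge of weight up to $W$, contributing extra slack on the upper side of the total load relative to the $H$-only target $\wdeg_H(v)/b_v$.
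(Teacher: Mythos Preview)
Your feasibility stage via K\H{o}nig edge-colouring is correct and elegant: constraints~(i) and~(ii) are exactly ``edges incident to $\ast_M$ get distinct colours'' and ``edges incident to each $u$ get distinct colours'', so a proper $b_v$-edge-colouring of your auxiliary bipartite multigraph $K$ does the job. This is a genuinely different route from the paper, which uses a one-shot greedy algorithm: it processes the neighbours $u$ of $v$ one by one, sorts the parallel $u$--$v$ edges by decreasing weight (with $M_G$-edges first), and assigns them to the currently lightest admissible copies, maintaining throughout the invariant that any two copies differ in weight by at most $W$ (or $2W$ once both have received an $M_G$-edge). Your decomposition into ``feasibility then balancing'' is conceptually cleaner, but the paper's monolithic greedy yields the balance invariant directly, with no second phase to analyse.

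The gap is in your balancing stage. The claim that $\phi$ is injective ``by~(ii)'' only covers endpoint-type conflicts; when the unique $M_G$-edge $e^*\in E_{v_i}$ is blocked solely by the $M_G$-edge $f^*\in E_{v_j}$, and some other $e_1\in E_{v_i}$ happens to share the endpoint of $f^*$, both $e^*$ and $e_1$ map to $f^*$. More seriously, even when a partner $\phi(e)$ exists, the single swap $e\leftrightarrow\phi(e)$ need not preserve both constraints simultaneously: swapping $e^*\in M_G$ with the same-endpoint non-$M_G$ edge at $v_j$ can give $v_j$ two $M_G$-edges, and swapping the two $M_G$-edges $e^*\leftrightarrow f^*$ (different endpoints) can create a parallel pair at $v_i$ if $v_i$ already holds a non-$M_G$ edge with the endpoint of $f^*$. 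In this configuration no single-edge transfer or two-edge swap is available, so your potential argument stalls. A repair would require longer exchanges (Kempe chains in the two-colour subgraph of $K$) together with a bound on the per-chain weight imbalance, but that analysis is not sketched here.
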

        
        \begin{proof}
            
        Fix $v \in V$. 
        Let us introduce some notation. We write $S = \delta_G(v) \cap (H \cup M_G)$. For all $u \in V$, $u \neq v$, we denote by $S_u$ the multi-set of weighted edges between $u$ and $v$ in $H \cup M_G$, \emph{i.e.} $S_u = \delta_G(u) \cap \delta_G(v) \cap (H \cup M_G)$. As $|\delta_G(u) \cap \delta_G(v)| \leq \min(b_u,b_v)$, $|S_u| \leq b_v$. Setting $s_u=|S_u|$, we denote by $S_u = \{e_{u,1},\ldots,e_{u,s_u}\}$ the multi-set of weighted edges between $u$ and $v$, assuming that $w(e_{u,1}) \geq \ldots \geq w(e_{u,s_u})$. Setting $m_u = |M_G \cap S_u|$, we can assume that these $m_u$ edges from $M_G$ are $e_{u,1},\ldots, e_{u,m_u}$ (as these are the heaviest edges in $S_u$).
            
        Hence rephrasing the desired result, we want to distribute the edges of $S$ to $b_v$ sets $E_1, \ldots, E_{b_v}$, where each $E_i$ represents the set of adjacent edges of vertex $v_i$ in $G'$, in such a way that the following holds: 
        \begin{enumerate}[(i)]
            \item for all $1 \leq i \leq b_v$, $|M_G \cap E_i| \leq 1$;
            \item for all $1 \leq i \leq b_v$, for all $u \in V$, $|S_u \cap E_i| \leq 1$;
            \item the difference of the weights of these sets is upper-bounded by $2W$. 
        \end{enumerate}
            
        We present an algorithm (Algorithm~\ref{algo:distribution-edges}) to 
        obtain the desired distribution. Roughly speaking, we consider all nodes $u \neq v$ one by one. For each $u$, we use a greedy strategy to  distribute the edges of $S_u$ into $s_u$ sets of $E_i$s, where the sets $E_i$ are ordered according to increasing weights, with the exception that the $E_i$s that have not received an edge in $M_G$ so far are the only ones able to receive an edge of $M_G$. 
        
        \begin{algorithm}[h]
        \caption{Algorithm to distribute the edges}\label{algo:distribution-edges}
        \begin{algorithmic}[1]
        \State $\forall\, 1 \leq i \leq b_v, E_i \gets \emptyset$
        \State $\forall\, 1 \leq i \leq b_v, \textsc{UsedForMatching}_i \gets \textsc{False}$
        \For{$u \in V$}
            \State let $i_1,\ldots,i_{m_u}$ be the indices of the $m_u$ sets $E_i$ not $\textsc{UsedForMatching}$ and having the smallest weights, ordered such that $w(E_{i_1}) \leq \ldots \leq w(E_{i_{m_u}})$
            \For{$1 \leq j \leq m_u$}
                \State add the edge $e_{u,j}$ to $E_{i_j}$
                \State $\textsc{UsedForMatching}_{i_j} \gets \textsc{True}$
            \EndFor
            \State let $i_{m_u+1},\ldots,i_{s_u}$ be the indices of the $s_u - m_u$ sets $E_i$ having the smallest weights (excluding the indices $i_1,\ldots,i_{m_u}$), ordered such that $w(E_{i_{m_u + 1}}) \leq \ldots \leq w(E_{i_{s_u}})$
            \For{$m_u < j \leq s_u$}
                \State add the edge $e_{u,j}$ to $E_{i_j}$
            \EndFor
        \EndFor
        \end{algorithmic}
        \end{algorithm}
        
        By construction, it is clear that the two first properties are satisfied.
        We prove the third property by establishing the following invariants:
        
        \begin{itemize}
            \item For any two sets $E_i$ and $E_j$ both with $\textsc{UsedForMatching}$ being \textsc{false}, $|w(E_j)-w(E_i)|\leq W$.  
            \item Given a set $E_i$ with the variable $\textsc{UsedForMatching}$ being \textsc{false} and another set $E_j$ with the variable $\textsc{UsedForMatching}$ being \textsc{true}, $|w(E_j)-w(E_i)|\leq W$. 
            \item For any two sets $E_i$ and $E_j$ both with $\textsc{UsedForMatching}$ being \textsc{true}, $|w(E_j)-w(E_i)|\leq 2W$.  
        \end{itemize}
        
        We can prove it by induction on the number of vertices $u \neq v$ processed so far. To facilitate the proof, we can imagine that we add into $b_v-s_u$ edges of weight 0 into $S_u$ so that these edges are added into the sets $E_{i_{s_u+1}}, \cdots E_{i_{b_v}}$ immediately before the end of the loop. In other words, all sets $E_i$ receive an edge each. Notice that by doing this, we still ensure that the edges of $S_u$ are ordered in their non-increasing weights while the sets $E_{i_1}, \cdots, E_{i_{m_u}}$ and the sets $E_{i_{m_u}+1}, \cdots, E_{i_{b_v}}$ are ordered by non-decreasing weights.
        
        Consider two sets $E_i$ and $E_j$, then by induction hypothesis:
        \begin{itemize}
            \item If $\textsc{UsedForMatching}_i = \textsc{UsedForMatching}_j = \textsc{False}$ before $u \in V$ is processed, then $|w(E_i) - w(E_j)| \leq W$. Suppose that during this step, an edge $e_i$ is added to $E_i$ and an edge $e_j$ is added to $E_j$. If $w(E_i) = w(E_j)$ then we are done, has the difference cannot increase by more than $W$ (this happens if one set receives $W$ and the other receives $0$). Now suppose for instance that $w(E_i) > w(E_j)$. Then, as $E_i$ and $E_j$ have both not received edges from $M_G$ before $u$ is processed, it means that we must have $w(e_i) \leq w(e_j)$, hence we also get that $|w(E_i \cup \{e_i\}) - w(E_j \cup \{e_j\})| \leq W$.
            \item Similarly, if $\textsc{UsedForMatching}_i = \textsc{UsedForMatching}_j = \textsc{True}$ before $u \in V$ is processed, then $|w(E_i) - w(E_j)| \leq 2W$. Suppose that during this step, an edge $e_i$ is added to $E_i$ and an edge $e_j$ is added to $E_j$. If $w(E_i) = w(E_j)$ then we are done, as the difference cannot increase by more than $W$ (this happens if one set receives $W$ and the other receives $0$). Now suppose for instance that $w(E_i) > w(E_j)$. Then, as $E_i$ and $E_j$ have both received edges from $M_G$ before $u$ is processed, it means that we must have $w(e_i) \leq w(e_j)$, hence we also get that $|w(E_i \cup \{e_i\}) - w(E_j \cup \{e_j\})| \leq 2W$.
            \item Finally, if $\textsc{UsedForMatching}_i = \textsc{True}$ and $\textsc{UsedForMatching}_j = \textsc{False}$ before $u$ is processed, then $|w(E_i) - w(E_j)| \leq W$. If processing $u$ the variables $\textsc{UsedForMatching}$ have not changed, then it means that the edges for $E_i$ and $E_j$ have been added during the phase of Lines~8-10 of the algorithm. As a result, the argument for the previous cases also works here. Otherwise, it means that $\textsc{UsedForMatching}_j$ is set to $\textsc{True}$ as well during the process. As a result the edge $e_j$ for $E_j$ has been added during the phase of Lines~4-7 while the edge $e_i$ for $E_i$ has been added during the phase of Lines~8-10. The worst case happens when $w(E_j) - w(E_i) = W$, $w(e_j) = W$, and $w(e_i) = 0$, which leads to $|w(E_j \cup \{e_j\}) - w(E_i \cup \{e_i\})| = 2W$.
        \end{itemize}
        
        To conclude, the maximum weight difference between two sets $E_i$ and $E_j$ in the end is at most $2W$. Moreover, the total sum of the weights is within the interval $[\wdeg_H(v), \wdeg_H(v) + b_v \cdot W]$ (because at most $b_v$ edges for $M_G$ have been added), therefore the weights of all these sets are within the interval $[\frac{\wdeg_H(v)}{b_v} - 2W, \frac{\wdeg_H(v)}{b_v} + 3W]$.
    \end{proof}

\section{Tightness of the Bound for Weighted EDCSes}

    \label{app:tight-bound-edcs}
    
    \begin{proposition}
        Let $W$ be an integer. Let $\beta \geq \beta^- + 2$ be integer parameters. Suppose that $\beta^-$ is divisible by $2W$. Then, there exists a graph $G$ and a $(\beta,\beta^-)$-$w$-EDCS $H$ of $G$ such that the heaviest weighted matching in $H$ is at best a $2 + \frac{1}{\beta^-} - \frac{1}{2W}$ approximation of the largest weighted matching in $G$.
    \end{proposition}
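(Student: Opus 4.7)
My plan is to construct an explicit family of tight examples via an ``odd-clique-plus-pendant-matching'' gadget. Let $k = \beta^-/(2W)$, which is a positive integer by the divisibility assumption $2W \mid \beta^-$. Take $H$ to be a copy of the odd clique $K_{2k+1}$ with all edges of weight $W$; each clique vertex then has $\wdeg_H(v) = 2kW = \beta^- W$. Form $G$ by attaching a pendant weight-$W$ edge from each clique vertex to a fresh outer vertex, so $G \setminus H$ is a matching $M^*$ of $2k+1$ edges.

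Next, I would verify the two $w$-EDCS properties. Property~(ii) is saturated on every edge of $M^*$, since $\wdeg_H(v) + \wdeg_H(b) = \beta^- W + 0 = \beta^- W$. Property~(i) on clique edges requires $4kW \leq \beta W$, i.e., $\beta \geq 2\beta^-/W$, which is implied by $\beta \geq \beta^- + 2$ whenever $W \geq 2$, as well as in the base case $W = 1, \beta^- = 2, \beta \geq 4$. Then the maximum matching in $K_{2k+1}$ has exactly $k$ edges (total weight $kW$), while $M^*$ itself is a matching of $2k+1$ edges (total weight $(2k+1)W$), so $M_G = M^*$ and the ratio is
\[
\frac{w(M_G)}{w(M_H)} = \frac{2k+1}{k} = 2 + \frac{2W}{\beta^-}.
\]
A direct comparison gives $\frac{2W}{\beta^-} - \frac{1}{\beta^-} + \frac{1}{2W} = \frac{2W-1}{\beta^-} + \frac{1}{2W} > 0$ for all $W \geq 1$, so this exceeds the target $2 + \frac{1}{\beta^-} - \frac{1}{2W}$, confirming the claim in this regime.

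For the remaining exceptional range (namely $W = 1$ with $\beta^- \geq 4$ and $\beta < 2\beta^-$), where the odd-clique gadget would violate Property~(i), I would use a tripartite construction generalizing the classical unweighted-EDCS tight example. Take three vertex groups $V_1, V_2, V_3$ with $|V_1| = |V_2| = N$ and $|V_3|$ chosen slightly below $N$ (to introduce the extra $1/\beta^-$ term), with a weight-$W$ perfect matching between $V_1$ and $V_2$ placed in $G \setminus H$ and $H$ consisting of weight-$W$ complete bipartite edges from $V_3$ to $V_1 \cup V_2$. The ratio $w(M_G)/w(M_H)$ is then computed by combining the $M^{12}$-matching with bipartite edges to $V_3$, and direct algebra yields at least the target bound.

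The main obstacle will be stitching together the two constructions to cover every valid $(\beta, \beta^-, W)$ triple, respecting the integrality from $2W \mid \beta^-$; in intermediate regimes one may take a disjoint union of gadgets of different sizes, noting that a disjoint union of $(\beta_i, \beta_i^-)$-$w$-EDCSes is a $(\max_i \beta_i, \min_i \beta_i^-)$-$w$-EDCS. Once the construction is fixed in each regime, the verification of Properties~(i) and (ii) and the computation of the matching sizes are routine calculations.
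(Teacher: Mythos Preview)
Your odd-clique construction contains an arithmetic slip that invalidates it for every $W \geq 2$. With $k = \beta^-/(2W)$ and all clique edges of weight $W$, a clique vertex has
\[
\wdeg_H(v) \;=\; 2k\cdot W \;=\; \frac{\beta^-}{2W}\cdot 2W \;=\; \beta^-,
\]
not $\beta^- W$ as you wrote. Property~(ii) on a pendant weight-$W$ edge $(v,b)$ therefore reads
\[
\wdeg_H(v)+\wdeg_H(b)\;=\;\beta^-+0 \;\geq\; \beta^-\cdot W,
\]
which fails for all $W\geq 2$. So the gadget is \emph{not} a $(\beta,\beta^-)$-$w$-EDCS except when $W=1$, and even then your own Property~(i) check $\beta\geq 4k=2\beta^-$ shows it works only for the single value $\beta^-=2$. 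In other words, the ``main'' regime you believed you were covering (all $W\geq 2$) is exactly where the construction breaks, and the ``exceptional'' tripartite fallback you sketch only vaguely is in fact what is needed almost everywhere.

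The paper avoids this by using a single bipartite construction that works for all admissible $(\beta,\beta^-,W)$ with $2W\mid\beta^-$: six vertex blocks $A,B,C,D,E,F$ with $|A|=|B|=|E|=|F|=k=\beta^-/(2W)$ and $|C|=|D|=l=\beta-k-1$; $H$ consists of weight-$W$ perfect matchings $A\!-\!B$ and $E\!-\!F$ together with weight-$W$ complete bipartite graphs $B\!-\!C$ and $D\!-\!E$, and $G\setminus H$ is a weight-$1$ perfect matching $C\!-\!D$. Then $\wdeg_H(c)=kW=\beta^-/2$ for $c\in C$ so Property~(ii) holds with equality, while Property~(i) is tight on the edges inside $B\!-\!C$ (and $D\!-\!E$). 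One gets $w(M_H)=2kW$ and $w(M_G)=2kW+l$, giving the ratio $1+(\beta-1)/\beta^- - 1/(2W)\geq 2+1/\beta^- - 1/(2W)$. If you want to salvage your approach, this is the gadget your ``tripartite'' paragraph should be fleshed out into; the odd-clique idea cannot be repaired because the high clique degree forces $\wdeg_H$ to scale like $\beta^-$ rather than $\beta^- W$.
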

    
    \begin{proof}
        Consider the example depicted in Figure~\ref{fig:example-approx-ratio}. Let $\beta^- = 2kW$. The dashed lines (not part of $H$) satisfy Property~(ii) of Definition~\ref{def:w-edcs}. 
        We set $l$ such that $\beta = l + k + 1$ so that the solid lines representing the edges $(c_i,b_i)$ and $(d_i,e_i)$ satisfy Property~(i) of Definition~\ref{def:w-edcs}. Here $l$ can be chosen to be large enough so that $\beta = \beta^- + 2$. 
        
        By this construction, we have $w(M_H) = 2kW$ and \[\frac{w(M_G)}{w(M_H)} = \frac{2kW + l}{2kW} = 1 + \frac{\beta - k - 1}{2kW} = 1 + \frac{\beta - 1}{\beta^-} - \frac{1}{2W} \geq 2 + \frac{1}{\beta^-} - \frac{1}{2W}.\]

        \begin{figure}[h]
            \centering
            \begin{tikzpicture}
                \node[draw, circle] (A1) at (0,-2) {$a_1$};
                \node[draw, circle] (A2) at (1,-2) {$a_2$};
                \node[circle] (Ap) at (2,-2) {$\cdots$};
                \node[draw, circle] (Ak) at (3,-2) {$a_k$};
                
                \node[draw, circle] (B1) at (0,2) {$b_1$};
                \node[draw, circle] (B2) at (1,2) {$b_2$};
                \node[circle] (Bp) at (2,2) {$\cdots$};
                \node[draw, circle] (Bk) at (3,2) {$b_k$};
                
                \node[draw, circle] (C1) at (5,-2) {$c_1$};
                \node[draw, circle] (C2) at (6,-2) {$c_2$};
                \node[circle] (Cp) at (7,-2) {$\cdots$};
                \node[draw, circle] (Cl) at (8,-2) {$c_l$};
                
                \node[draw, circle] (D1) at (5,2) {$d_1$};
                \node[draw, circle] (D2) at (6,2) {$d_2$};
                \node[circle] (Dp) at (7,2) {$\cdots$};
                \node[draw, circle] (Dl) at (8,2) {$d_l$};
                
                \node[draw, circle] (E1) at (10,-2) {$e_1$};
                \node[draw, circle] (E2) at (11,-2) {$e_2$};
                \node[circle] (Ep) at (12,-2) {$\cdots$};
                \node[draw, circle] (Ek) at (13,-2) {$e_k$};
                
                \node[draw, circle] (F1) at (10,2) {$f_1$};
                \node[draw, circle] (F2) at (11,2) {$f_2$};
                \node[circle] (Fp) at (12,2) {$\cdots$};
                \node[draw, circle] (Fk) at (13,2) {$f_k$};
                
                \draw [-,red] (A1) -- (B1);
                \draw [-,red] (A2) -- (B2);
                \draw [-,red] (Ak) -- (Bk);
                
                \draw [-] (C1) -- (B1);
                \draw [-] (C1) -- (B2);
                \draw [-] (C1) -- (Bk);
                
                \draw [-] (C2) -- (B1);
                \draw [-] (C2) -- (B2);
                \draw [-] (C2) -- (Bk);
                
                \draw [-] (Cl) -- (B1);
                \draw [-] (Cl) -- (B2);
                \draw [-] (Cl) -- (Bk);
                
                \draw [-,dashed, red] (C1) -- (D1);
                \draw [-,dashed, red] (C2) -- (D2);
                \draw [-,dashed, red] (Cl) -- (Dl);
                
                \draw [-] (D1) -- (E1);
                \draw [-] (D1) -- (E2);
                \draw [-] (D1) -- (Ek);
                
                \draw [-] (D2) -- (E1);
                \draw [-] (D2) -- (E2);
                \draw [-] (D2) -- (Ek);
                
                \draw [-] (Dl) -- (E1);
                \draw [-] (Dl) -- (E2);
                \draw [-] (Dl) -- (Ek);
                
                \draw [-,red] (E1) -- (F1);
                \draw [-,red] (E2) -- (F2);
                \draw [-,red] (Ek) -- (Fk);

            \end{tikzpicture}
            \caption{The $w$-EDCS $H$ is represented by solid lines. All the solid edges have weight $W$. The dashed edges are those that were not taken in the $w$-EDCS $H$. All the dashed edges have weight $1$, so that we must have $k \cdot W \geq \beta^-$. Edges in red are part of the optimal matching.}
            \label{fig:example-approx-ratio}
        \end{figure}
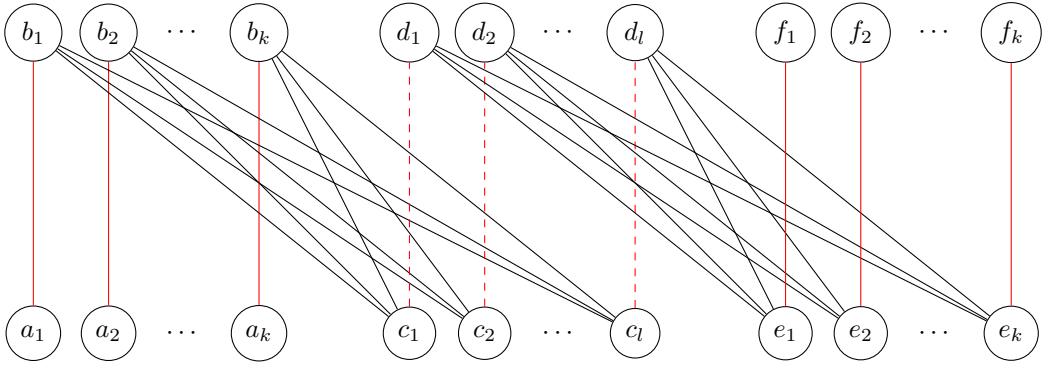
    \end{proof}

\section{Using $W$ Copies Simple EDCSes does not Work}

    \label{app:copies-edcs}
    
    One could think that using $W$ copies of edge-degree constrained subgraphs, one for each weight class, could lead to a structure containing a good approximation of the optimal matching. Here we show that this idea does not work. In fact, consider the simple example where the EDCS lead to a $3/2$ approximation, as depicted in Figure~\ref{fig:example-approx-ratio-2-3}, where $\beta =  2k + 1$ and $\beta^- = 2k$. We represent this situation in a more compacted form, as in Figure~\ref{fig:compact-graph}.
    
    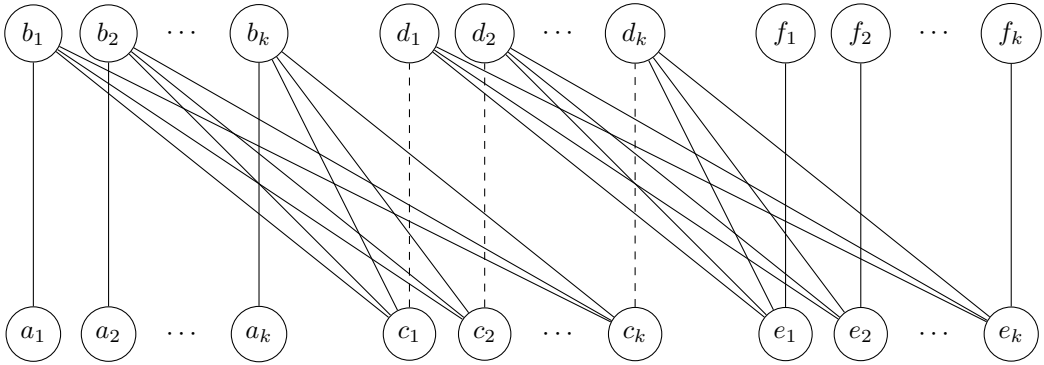
\begin{figure}[h]
        \centering
        \begin{tikzpicture}
            \node[draw, circle] (A1) at (0,-2) {$a_1$};
            \node[draw, circle] (A2) at (1,-2) {$a_2$};
            \node[circle] (Ap) at (2,-2) {$\cdots$};
            \node[draw, circle] (Ak) at (3,-2) {$a_k$};
            
            \node[draw, circle] (B1) at (0,2) {$b_1$};
            \node[draw, circle] (B2) at (1,2) {$b_2$};
            \node[circle] (Bp) at (2,2) {$\cdots$};
            \node[draw, circle] (Bk) at (3,2) {$b_k$};
            
            \node[draw, circle] (C1) at (5,-2) {$c_1$};
            \node[draw, circle] (C2) at (6,-2) {$c_2$};
            \node[circle] (Cp) at (7,-2) {$\cdots$};
            \node[draw, circle] (Cl) at (8,-2) {$c_k$};
            
            \node[draw, circle] (D1) at (5,2) {$d_1$};
            \node[draw, circle] (D2) at (6,2) {$d_2$};
            \node[circle] (Dp) at (7,2) {$\cdots$};
            \node[draw, circle] (Dl) at (8,2) {$d_k$};
            
            \node[draw, circle] (E1) at (10,-2) {$e_1$};
            \node[draw, circle] (E2) at (11,-2) {$e_2$};
            \node[circle] (Ep) at (12,-2) {$\cdots$};
            \node[draw, circle] (Ek) at (13,-2) {$e_k$};
            
            \node[draw, circle] (F1) at (10,2) {$f_1$};
            \node[draw, circle] (F2) at (11,2) {$f_2$};
            \node[circle] (Fp) at (12,2) {$\cdots$};
            \node[draw, circle] (Fk) at (13,2) {$f_k$};
            
            \draw [-] (A1) -- (B1);
            \draw [-] (A2) -- (B2);
            \draw [-] (Ak) -- (Bk);
            
            \draw [-] (C1) -- (B1);
            \draw [-] (C1) -- (B2);
            \draw [-] (C1) -- (Bk);
            
            \draw [-] (C2) -- (B1);
            \draw [-] (C2) -- (B2);
            \draw [-] (C2) -- (Bk);
            
            \draw [-] (Cl) -- (B1);
            \draw [-] (Cl) -- (B2);
            \draw [-] (Cl) -- (Bk);
            
            \draw [-,dashed] (C1) -- (D1);
            \draw [-,dashed] (C2) -- (D2);
            \draw [-,dashed] (Cl) -- (Dl);
            
            \draw [-] (D1) -- (E1);
            \draw [-] (D1) -- (E2);
            \draw [-] (D1) -- (Ek);
            
            \draw [-] (D2) -- (E1);
            \draw [-] (D2) -- (E2);
            \draw [-] (D2) -- (Ek);
            
            \draw [-] (Dl) -- (E1);
            \draw [-] (Dl) -- (E2);
            \draw [-] (Dl) -- (Ek);
            
            \draw [-] (E1) -- (F1);
            \draw [-] (E2) -- (F2);
            \draw [-] (Ek) -- (Fk);

        \end{tikzpicture}
        \caption{All edges are of the same weight. The EDCS $H$ is represented by solid lines. The dashed edges are those that were not taken in the EDCS $H$. Here $\frac{w(M_G)}{w(M_H)}=\frac{3}{2}$}
        \label{fig:example-approx-ratio-2-3}
    \end{figure}
    
    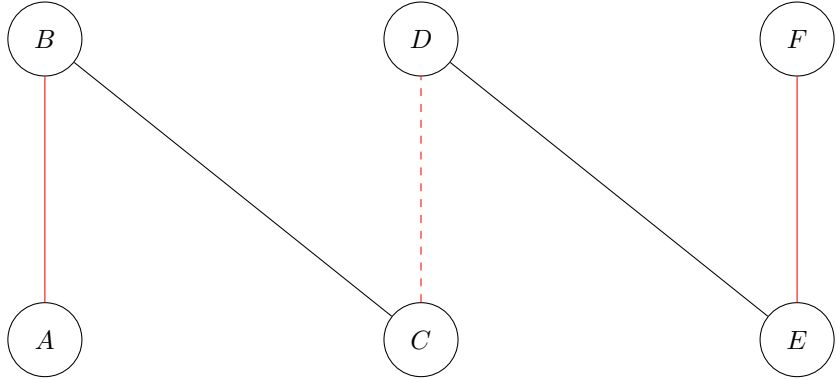
\begin{figure}[h]
        \centering
        \begin{tikzpicture}
            \node[draw, circle, minimum size=28pt] (A1) at (0,-2) {$A$};
            
            \node[draw, circle, minimum size=28pt] (B1) at (0,2) {$B$};
            
            \node[draw, circle, minimum size=28pt] (C1) at (5,-2) {$C$};
            
            \node[draw, circle, minimum size=28pt] (D1) at (5,2) {$D$};
            
            \node[draw, circle, minimum size=28pt] (E1) at (10,-2) {$E$};

            \node[draw, circle, minimum size=28pt] (F1) at (10,2) {$F$};
            
            \draw [-,red] (A1) -- (B1);
            
            \draw [-] (C1) -- (B1);
            
            \draw [-,dashed, red] (C1) -- (D1);
            
            \draw [-,] (D1) -- (E1);

            \draw [-, red] (E1) -- (F1);

        \end{tikzpicture}
        \caption{Each set is of size $k$. The EDCS $H$ is represented by solid lines. The dashed edges are those that were not taken in the EDCS $H$. Red lines represent perfect matchings between sets whereas black lines represent complete bipartite graphs between sets.}
        \label{fig:compact-graph}
    \end{figure}
    
    Using this compacted form, we can build a graph as in Figure~\ref{fig:example-multiple-copies}. For each weight class we have a situation where the approximation ratio $3/2$ is reached, while all the matchings in the EDCS retained are intersecting. As a result, the maximum weight of a matching in the union of EDCS is $2kW$ whereas the actual maximum weight matching is of weight $2kW + \sum_{i = 1}^W k \cdot i = 2kW + k \cdot \frac{W(W+1)}{2}$. Then the approximation ratio for $W \geq 3$ is larger than $2$ and for $W = 2$ the approximation ratio is $\frac{7}{4} = 2 - \frac{1}{2 \cdot 2}$ so for $W = 2$ the approximation ratio is at least the same as the one we can get with a $w$-EDCS.
    
    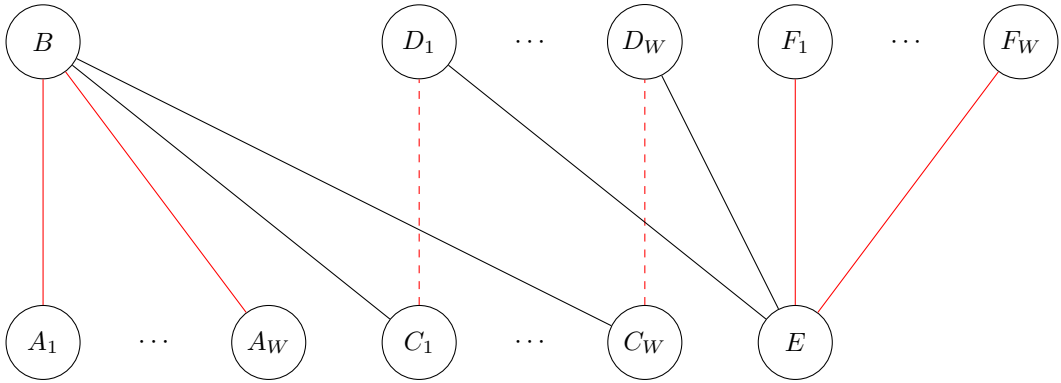
\begin{figure}[h]
        \centering
        \begin{tikzpicture}
            \node[draw, circle, minimum size=28pt] (A1) at (0,-2) {$A_1$};
            \node[circle, minimum size=28pt] (Ap) at (1.5,-2) {$\cdots$};
            \node[draw, circle, minimum size=28pt] (Ak) at (3,-2) {$A_W$};
            
            \node[draw, circle, minimum size=28pt] (B1) at (0,2) {$B$};
            
            \node[draw, circle, minimum size=28pt] (C1) at (5,-2) {$C_1$};
            \node[circle, minimum size=28pt] (Cp) at (6.5,-2) {$\cdots$};
            \node[draw, circle, minimum size=28pt] (Cl) at (8,-2) {$C_W$};
            
            \node[draw, circle, minimum size=28pt] (D1) at (5,2) {$D_1$};
            \node[circle, minimum size=28pt] (Dp) at (6.5,2) {$\cdots$};
            \node[draw, circle, minimum size=28pt] (Dl) at (8,2) {$D_W$};
            
            \node[draw, circle, minimum size=28pt] (E1) at (10,-2) {$E$};
            
            \node[draw, circle, minimum size=28pt] (F1) at (10,2) {$F_1$};
            \node[circle, minimum size=28pt] (Fp) at (11.5,2) {$\cdots$};
            \node[draw, circle, minimum size=28pt] (Fk) at (13,2) {$F_W$};
            
            \draw [-,red] (A1) -- (B1);
            \draw [-,red] (Ak) -- (B1);
            
            \draw [-] (C1) -- (B1);
            \draw [-] (Cl) -- (B1);
            
            \draw [-,dashed,red] (C1) -- (D1);
            \draw [-,dashed,red] (Cl) -- (Dl);
            
            \draw [-] (D1) -- (E1);
            \draw [-] (Dl) -- (E1);
            
            \draw [-,red] (E1) -- (F1);
            \draw [-,red] (E1) -- (Fk);

        \end{tikzpicture}
        \caption{The edge-degree constrained subgraphs for each weight class are represented by solid lines. The dashed edges are those that were not taken in the EDCS. Red lines represent perfect matchings between sets while black lines represent complete bipartite graphs between sets. The edges incident to $A_i$, $C_i$, $D_i$, or $F_i$ are all of weight $i$.}
        \label{fig:example-multiple-copies}
    \end{figure}
    
\section{Streams Containing Irrelevant Edges}

    \label{app:relevant-edges}
    
    Here we no longer consider that the number of edges between two vertices $u$ and $v$ is bounded by $\min(b_u,b_v)$. We therefore introduce the notion of \emph{relevant edges} in a multi-graph for a given set of constraints $\{b_v\}_{v \in V}$.
    \begin{definition}
        In a multi-graph $G$, the multi-set of relevant edges between two given vertices $u$ and $v$ is the multi-set of the $\min(b_u,b_v)$ edges in $G$ between $u$ and $v$ having the largest weights. The relevant subgraph $\overline{G}$ of $G$ is the subgraph of $G$ made of these relevant edges.
    \end{definition}
    Clearly, the only edges relevant to build a $b$-matching are these so-called ``relevant edges'', so we can always assume that $M_G = M_{\overline{G}}$. Moreover, the notion of $w$-$b$-EDCS is relevant with respect to that relevant subgraph.
    
    We can then rephrase Theorem~\ref{thm:intro-general-w-b-edcs} using this notion of relevant subgraph:
    \begin{theorem} \label{thm:intro-general-w-b-edcs-relevant}
        Let $0 < \varepsilon < 1/2$ and let $W$ be an integer parameter. Set $\lambda = \frac{\varepsilon}{100W}$. Let $\beta \geq \beta^- + 2$ be integers such that $\frac{\beta + 6W}{\log(\beta + 6W)} \geq 2 W^2\lambda^{-2}$ and $\beta^- - 6W \geq (1 - \lambda) \cdot (\beta + 6W)$. Then any $(\beta, \beta^-)$-$w$-$b$-EDCS $H$ of the relevant subgraph $\overline{G}$ of an edge-weighted multi-graph $G$ with integer edge weights bounded by $W$ contains a $b$-matching $M_H$ such that $\left(2 - \frac{1}{2W} + \varepsilon\right)\cdot w(M_H) \geq w(M_G)$.
    \end{theorem}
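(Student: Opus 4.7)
The plan is to reduce Theorem~\ref{thm:intro-general-w-b-edcs-relevant} directly to Theorem~\ref{thm:intro-general-w-b-edcs} by arguing that passing from $G$ to its relevant subgraph $\overline{G}$ does not change the value of the optimum $b$-matching. Once this reduction is justified, the inequality follows immediately by applying the original theorem inside $\overline{G}$, which by construction satisfies the hypothesis that there are at most $\min(b_u,b_v)$ parallel edges between any two vertices $u$ and $v$.

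The first step is to prove the equality $w(M_G) = w(M_{\overline{G}})$ via a simple exchange argument. Take an arbitrary maximum weight $b$-matching $M_G$ in $G$. For each pair of vertices $\{u,v\}$, $M_G$ uses at most $\min(b_u,b_v)$ parallel edges between $u$ and $v$, since at most $b_u$ edges of $M_G$ can be incident to $u$ and at most $b_v$ to $v$. If some edge of $M_G$ between $u$ and $v$ is not among the $\min(b_u,b_v)$ heaviest ones, we may swap it for an unused relevant edge of at least equal weight without violating any $b$-constraint; this only increases the weight, contradicting optimality (or giving another optimal solution). Applying this exchange pair-by-pair yields a maximum weight $b$-matching contained entirely in $\overline{G}$, so $w(M_G) \leq w(M_{\overline{G}})$, and the reverse inequality is trivial since $\overline{G}\subseteq G$.

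The second step is simply to invoke Theorem~\ref{thm:intro-general-w-b-edcs} on the multi-graph $\overline{G}$ with the same parameters $\beta$, $\beta^-$, $\varepsilon$, and $W$. The hypotheses of that theorem are satisfied: the edge weights are still integers in $[1,W]$, the hypothesis that there are at most $\min(b_u,b_v)$ edges between each pair of vertices holds by definition of $\overline{G}$, and $H$ is by assumption a $(\beta,\beta^-)$-$w$-$b$-EDCS of $\overline{G}$. We conclude that $H$ contains a $b$-matching $M_H$ with $(2 - \tfrac{1}{2W}+\varepsilon)\cdot w(M_H) \geq w(M_{\overline{G}}) = w(M_G)$, as desired.

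I do not anticipate a genuine technical obstacle here, since all of the analytic content has already been absorbed into Theorem~\ref{thm:intro-general-w-b-edcs}; the only thing that must be handled with some care is the exchange argument for $M_G = M_{\overline{G}}$ in the multi-edge setting. In particular, one should note that the exchange is between parallel edges on the same vertex pair, so the $b$-matching constraints are automatically preserved, and the integrality of weights plays no special role in this reduction (it is needed only downstream, inside the proof of Theorem~\ref{thm:intro-general-w-b-edcs}).
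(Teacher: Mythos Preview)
Your proposal is correct and matches the paper's approach: the paper simply remarks that ``clearly'' $M_G = M_{\overline{G}}$ and then treats Theorem~\ref{thm:intro-general-w-b-edcs-relevant} as a direct rephrasing of Theorem~\ref{thm:intro-general-w-b-edcs} applied to $\overline{G}$. Your explicit exchange argument for $w(M_G)=w(M_{\overline{G}})$ is more detailed than what the paper provides, but the overall reduction is identical.
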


    We can also adapt the arguments of Section~\ref{sec:random-order} to handle the more general case of a stream containing relevant and irrelevant edges, introducing some small variations in our previous argumentation.
    
    \begin{definition}
        We say that a graph $H$ has $b$-bounded weighted edge-degree $\beta$ if for every edge $(u,v, w_{uv}) \in H$, $\frac{\wdeg_H(u)}{b_u} + \frac{\wdeg_H(v)}{b_v} \leq \beta \cdot w_{uv}$ and $|\delta_H(u) \cap \delta_H(v)| \leq \min(b_u,b_v)$. 
    \end{definition}
    
    \begin{definition} 
        Let $G$ be any edge-weighted multi-graph, and let $H$ be a subgraph of $G$ with $b$-bounded weighted edge-degree $\beta$. For any parameter $\beta^-$, we say that an edge $(u,v, w_{uv}) \in G \backslash H$ is $(H,\beta,\beta^-)$-underfull if any of the following conditions is satisfied:
        \begin{enumerate}[(i)]
            \item $|\delta_H(u) \cap \delta_H(v)| < \min(b_u,b_v)$ and $\frac{\wdeg_H(u)}{b_u} + \frac{\wdeg_H(v)}{b_v} < \beta^- \cdot w_{uv}$
            \item $|\delta_H(u) \cap \delta_H(v)| = \min(b_u,b_v)$ and $\min_{e'\in \delta_H(u) \cap \delta_H(v)} w(e') < w_{uv}$
        \end{enumerate}
    \end{definition}
    
    We now show that one can always construct a large matching from the combination of these two parts as well.
    
    \begin{lemma}
        \label{lem:underfull-matching-relevant}
        Let $0 < \varepsilon < 1/2$ be any parameter and $W$ be an integer parameter. Set $\lambda = \frac{\varepsilon}{100W}$. For $\beta \geq \beta^- + 2$ integers such that $\frac{\beta + 8W}{\log(\beta + 8W)} \geq 2 W^2\lambda^{-2}$ and $\beta^- - 6W \geq (1 - \lambda) \cdot (\beta + 8W)$, for any edge-weighted multi-graph $G$ with integer weights in ${1,\dots,W}$, and for any subgraph $H$ with bounded weighted edge-degree $\beta$,  if $X$ contains all edges in $G \backslash H$ that are $(H,\beta,\beta^-)$-underfull, then $(2 - \frac{1}{2W} + \varepsilon) \cdot w(M_{H \cup X}) \geq w(M_G)$.
    \end{lemma}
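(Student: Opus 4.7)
The plan is to adapt the argument of Lemma~\ref{lem:underfull-matching}: I will exhibit a subgraph of $H\cup X$ that forms a $w$-$b$-EDCS of a suitably constructed relevant graph and then apply Theorem~\ref{thm:intro-general-w-b-edcs-relevant}. The new difficulty is handling condition~(ii) of the underfull definition, which covers the pairs $(u,v)$ where $H$ already contains the maximum allowed $\min(b_u,b_v)$ edges between $u$ and $v$; in the original lemma, every edge of $M_G\setminus H$ that is not in $X$ automatically satisfies the weighted-degree lower bound, but here such an edge could instead be forbidden from $H$ because $H$ is already saturated between $u$ and $v$.

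First, I construct a modified $b$-matching $M^*_G$ of weight at least $w(M_G)$ (where $M_G=M_{\overline{G}}$) whose edges all lie in $H\cup X$. For every edge $(u,v,w_{uv})\in M_G\setminus H$ with $|\delta_H(u)\cap\delta_H(v)|=\min(b_u,b_v)$: if the edge is underfull via condition~(ii) I retain it (it already lies in $X$); otherwise $\min_{e'\in\delta_H(u)\cap\delta_H(v)} w(e')\geq w_{uv}$, and I replace the edge by an arbitrary unused $H$-edge between $u$ and $v$ (such an edge exists, since at the moment of replacement $M^*_G$ uses strictly fewer than $\min(b_u,b_v)$ of the $\min(b_u,b_v)$ $H$-edges between $u$ and $v$, and the replacement has weight at least $w_{uv}$). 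All other edges of $M_G$ are kept unchanged. Then I set $G^*=H\cup M^*_G$, $X^*=X\cap M^*_G$, and consider the relevant subgraph $\overline{G^*}$ of $G^*$.

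The key claim is that $H\cup X^*$ is a $(\beta+2W,\beta^-)$-$w$-$b$-EDCS of $\overline{G^*}$. Property~(i) is inherited from $H$ having $b$-bounded weighted edge-degree $\beta$, up to the $2W$ slack contributed by the at most $b_v$ additional unit-weight-bounded incident edges of $X^*$ at any vertex $v$. For Property~(ii), any edge of $\overline{G^*}\setminus(H\cup X^*)$ lies in $M^*_G\setminus(H\cup X)$; by the construction of $M^*_G$, such an edge can only come from a pair $(u,v)$ with $|\delta_H(u)\cap\delta_H(v)|<\min(b_u,b_v)$ for which condition~(i) of underfullness is not triggered, so $\frac{\wdeg_H(u)}{b_u}+\frac{\wdeg_H(v)}{b_v}\geq\beta^-\cdot w_{uv}$ holds directly. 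Finally, since $M^*_G$ itself may not lie entirely in $\overline{G^*}$ (some light $H$-edges it uses could be crowded out by heavier $X^*$-edges or heavier unused $H$-edges between the same pair of vertices), I would exchange each such excluded edge for an in-$\overline{G^*}$ edge of at least equal weight between the same endpoints, producing a $b$-matching $\tilde{M}_G\subseteq\overline{G^*}$ with $w(\tilde{M}_G)\geq w(M^*_G)\geq w(M_G)$; applying Theorem~\ref{thm:intro-general-w-b-edcs-relevant} to $H\cup X^*$ as a $w$-$b$-EDCS of $\overline{G^*}$ then yields the claimed $(2-\tfrac{1}{2W}+\varepsilon)$-approximation. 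The subtlest part of the argument is verifying that for any pair $(u,v)$ with $|\delta_H(u)\cap\delta_H(v)|=\min(b_u,b_v)$ one has $\overline{G^*}\setminus H\subseteq X^*$, which is precisely what condition~(ii) of the new underfull definition is engineered to guarantee.
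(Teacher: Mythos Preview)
Your approach is essentially the paper's: exhibit a $(\beta+2W,\beta^-)$-$w$-$b$-EDCS inside $H\cup X$ of the relevant subgraph of $H\cup M_G^{G\backslash H}$ and invoke Theorem~\ref{thm:intro-general-w-b-edcs-relevant}. The paper does this directly with $\overline{H\cup X^M}$ inside $\overline{H\cup M_G^{G\backslash H}}$, without any preliminary edge replacement; your replacement step is in fact redundant, since the edges you replace (non-underfull edges between saturated pairs, hence no heavier than every parallel $H$-edge) are precisely the ones the relevant-subgraph operation would discard anyway, so your $\overline{G^*}$ coincides with the paper's $\overline{H\cup M_G^{G\backslash H}}$ and your $X^*$ with the paper's $X^M=X\cap M_G^{G\backslash H}$.

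Two small corrections are needed. First, your opening claim that all edges of $M^*_G$ lie in $H\cup X$ is false---an edge of $M_G\setminus H$ between an unsaturated pair that fails condition~(i) of underfullness is kept unchanged and lies in neither $H$ nor $X$---and indeed you contradict it yourself when you later analyze $M^*_G\setminus(H\cup X)$ for Property~(ii). Second, you should take $\overline{H\cup X^*}$ rather than $H\cup X^*$ as the EDCS (as the paper does): between a pair $(u,v)$ with $|\delta_H(u)\cap\delta_H(v)|=\min(b_u,b_v)$, the set $H\cup X^*$ may contain more than $\min(b_u,b_v)$ parallel edges and hence is not a subgraph of $\overline{G^*}$. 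Passing to $\overline{H\cup X^*}$ fixes this without affecting your verification of Properties~(i) and~(ii).
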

    
    \begin{proof}
        Let $M_G$ be the maximum $b$-matching in $G$, let $M_G^H = M_G \cap H$ and $M_G^{G \backslash H} = M_G \cap (G \backslash H)$. Let $X^M = X \cap M_G^{G \backslash H}$. We can observe that $w(M_G) = w(M_{H \cup M_G^{G \backslash H}})$.
        
        Now we show that $\overline{H \cup X^M}$ is a $(\beta + 2W, \beta^-)$-$w$-$b$-EDCS of $\overline{H \cup M_G^{G \backslash H}}$. A first observation is that for all $v \in V$, we have the inequalities $\wdeg_H(v) \leq \wdeg_{\overline{H \cup X^M}}(v) \leq \wdeg_{H \cup X^M}(v) \leq \wdeg_H(v) + b_v \cdot W$. We start with Property~(ii). By construction, $X^M$ contains all edges $(u, v, w_{uv})$ in $M_G^{G \backslash H}$ such that $\frac{\wdeg_H(u)}{b_u} + \frac{\wdeg_H(v)}{b_v} < \beta^- \cdot w_{uv}$. Therefore, the remaining edges $(u,v, w_{uv}) \in (H \cup M_G^{G \backslash H}) \backslash (H \cup X^M) = M_G^{G \backslash H} \backslash X^M$ satisfy the inequality $\frac{\wdeg_H(u)}{b_u} + \frac{\wdeg_H(v)}{b_v} \geq \beta^- \cdot w_{uv}$. For Property~(i), for $(u,v,w_{uv}) \in H$, we have $\frac{\wdeg_{H \cup X^M}(u)}{b_u} + \frac{\wdeg_{H \cup X^M}(v)}{b_v} \leq (\beta + 2W)\cdot w_{uv}$ (unless that edge is not relevant in $H \cup M_G^{G \backslash H}$). And for $(u, v, w_{uv}) \in X^M$, $\frac{\wdeg_{H \cup X^M}(u)}{b_u} + \frac{\wdeg_{H \cup X^M}(v)}{b_v} < (\beta^- + 2W)\cdot w_{uv} < (\beta + 2W)\cdot w_{uv}$, so Property~(i) is also satisfied.
        
        As a result, Theorem~\ref{thm:intro-general-w-b-edcs} can be applied in this case and we get that
        $\left(2 - \frac{1}{2W} + \varepsilon\right) \cdot w\left(M_{H \cup X^M}\right) \geq w\left(M_{H \cup M_G^{G \backslash H}}\right) = w(M_G)$, thus concluding the proof.
    \end{proof}
   
    \begin{algorithm}[h]
	\caption{Main algorithm computing a weighted $b$-matching for a random-order stream}\label{algo:w-b-matching-relevant}
	\begin{algorithmic}[1]
	\State $H \gets \emptyset$
	\State $\forall\,0 \leq i \leq \log_2 m,\, \alpha_i \gets \left\lfloor\frac{\varepsilon\cdot m }{\log_2(m) \cdot (2^{i+2}\beta^2W^2 + 1)}\right\rfloor$
	\For{$i = 0 \dots \log_2 m$}
	    \State $\textsc{ProcessStopped} \gets \textsc{False}$
	    \For{$2^{i+2}\beta^2W^2 + 1$ iterations}
	        \State $\textsc{FoundUnderfull} \gets \textsc{False}$
	        \For{$\alpha_i$ iterations}
	            \State let $(u,v, w_{uv})$ be the next edge in the stream
	            \If{$\frac{\wdeg_H(u)}{b_u} + \frac{\wdeg_H(v)}{b_v} < \beta^- \cdot w_{uv}$}
	                \If{$|\delta_H(u) \cap \delta_H(v)| = \min(b_u, b_v)$}
	                    \If{$w_{uv} \leq \min_{e' \in \delta_H(u) \cap \delta_H(v)}w(e')$}
	                        \State \textbf{continue} \Comment{irrelevant edge, we ignore it}
	                    \EndIf
	                    \State remove the smallest edge linking $u$ and $v$
	                \EndIf
	                \State add edge $(u,v, w_{uv})$ to $H$
	                \State $\textsc{FoundUnderfull} \gets \textsc{True}$
                    \While{there exists $(u',v', w_{u'v'}) \in H : \frac{\wdeg_H(u')}{b_{u'}} + \frac{\wdeg_H(v')}{b_{v'}} > \beta \cdot w_{u'v'}$}
                        \State remove $(u', v', w_{u'v'})$ from $H$
                    \EndWhile
	            \EndIf
	        \EndFor
	        \If{$\textsc{FoundUnderfull} = \textsc{False}$}
	            \State $\textsc{ProcessStopped} \gets \textsc{True}$
	            \State \textbf{break} from the loop
	        \EndIf
	    \EndFor
	    \If{$\textsc{ProcessStopped} = \textsc{True}$}
	        \State \textbf{break} from the loop
	    \EndIf
	\EndFor
	\State $X \gets \emptyset$
	\For{each $(u,v,w_{uv})$ remaining edge in the stream}
	    \If{$\frac{\wdeg_H(u)}{b_u} + \frac{\wdeg_H(v)}{b_v} < \beta^- \cdot w_{uv}$}
	        \If{$|\delta_H(u) \cap \delta_H(v)| = \min(b_u, b_v)$ and $w_{uv} \leq \min_{e' \in \delta_H(u) \cap \delta_H(v)}w(e')$}
	                \State \textbf{continue} \Comment{irrelevant edge, we ignore it}
	            \EndIf
	        \State add edge $(u,v, w_{uv})$ to $X$
	    \EndIf
	\EndFor
	\State \Return a maximum weight $b$-matching in $H \cup X$
	\end{algorithmic}
	\end{algorithm}
    
    The algorithm, formally described in Algorithm~\ref{algo:w-b-matching-relevant}, is very similar to the one in Section~\ref{sec:random-order}. However, when there are too many edges between two vertices $u$ and $v$, then another edge cannot be added, unless we remove another one strictly smaller before that (see Lines~10-13). This ``replacement'' operation increases at least by $1$ the value of the potential function used in the construction of a $w$-$b$-EDCS and therefore this operation can be treated like the insertion/deletion operations when bounding the number of such operations. In the later phase of the stream, we also use the new defition of underfull edges (see Lines~26-27).
	
	As we will see later, Algorithm~\ref{algo:w-b-matching-relevant} works when $M_G$ is neither too small nor too big, therefore we have show that these border cases can be handled anyway.
	
	\begin{lemma}
	    \label{lem:bound-graph-size-relevant}
	    Let $\overline{G}$ by the set of relevant edges in $G$. Then, we have $|\overline{G}| \leq  2n \cdot |M_G|$.
	\end{lemma}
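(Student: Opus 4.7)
The plan is to essentially mimic the proof of Lemma~\ref{lem:bound-graph-size} verbatim, after observing that the relevant subgraph $\overline{G}$ is exactly the setting in which the original lemma applies. The only property of $G$ used in the proof of Lemma~\ref{lem:bound-graph-size} was the degree bound $\deg_G(v) \leq n \cdot b_v$, which held because the original paper assumed at most $\min(b_u,b_v)$ edges between any two vertices $u,v$. By the very definition of a relevant subgraph, $\overline{G}$ contains at most $\min(b_u,b_v)$ edges between any pair $u,v$, so the same bound $\deg_{\overline{G}}(v) \leq (n-1)\cdot b_v \leq n \cdot b_v$ holds.

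First I would recall that $M_G = M_{\overline{G}}$ (noted right before the lemma), so $M_G$ is a maximum, hence maximal, $b$-matching in $\overline{G}$. Therefore for every edge $(u,v,w_{uv}) \in \overline{G} \setminus M_G$, at least one endpoint must be saturated by $M_G$; otherwise one could extend $M_G$ by adding this edge. Let $V_{sat} \subseteq V$ denote the set of vertices $v$ with $|\delta_{\overline{G}}(v) \cap M_G| = b_v$, and let $M_{sat} \subseteq M_G$ denote the edges of $M_G$ incident to at least one vertex of $V_{sat}$. A standard double-counting gives $\sum_{v \in V_{sat}} b_v \leq 2\cdot |M_{sat}|$ since every edge of $M_{sat}$ contributes at most $2$ to the left-hand side.

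Then I would reproduce the chain of inequalities from Lemma~\ref{lem:bound-graph-size}, replacing $G$ by $\overline{G}$. Concretely,
\begin{align*}
    |\overline{G}| &= |M_G \setminus M_{sat}| + |\overline{G} \setminus (M_G \setminus M_{sat})| \\
    &\leq |M_G| - |M_{sat}| + \sum_{v \in V_{sat}} \deg_{\overline{G}}(v) \\
    &\leq |M_G| - |M_{sat}| + \sum_{v \in V_{sat}} n \cdot b_v \\
    &\leq |M_G| - |M_{sat}| + 2n \cdot |M_{sat}| \\
    &\leq 2n \cdot |M_G|,
\end{align*}
using that every edge of $\overline{G}$ outside $M_G \setminus M_{sat}$ touches a saturated vertex, the degree bound $\deg_{\overline{G}}(v) \leq n \cdot b_v$ valid in the relevant subgraph, and $\sum_{v \in V_{sat}} b_v \leq 2|M_{sat}|$.

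There is no real obstacle here: the lemma is essentially a direct transcription of Lemma~\ref{lem:bound-graph-size} to the relevant subgraph, and the only subtle point is to make sure that $M_G$ is still maximal inside $\overline{G}$, which follows immediately from $M_G = M_{\overline{G}}$.
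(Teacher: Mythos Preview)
Your proposal is correct and follows exactly the approach the paper intends: the paper's own proof simply reads ``Similar to the proof of Lemma~\ref{lem:bound-graph-size},'' and your write-up is precisely that transcription, using the degree bound $\deg_{\overline{G}}(v) \leq n \cdot b_v$ guaranteed by the definition of the relevant subgraph together with $M_G = M_{\overline{G}}$.
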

	
	\begin{proof}
	    Similar to the proof of Lemma~\ref{lem:bound-graph-size}.
	\end{proof}
	
	\begin{claim} \label{claim:small-output-relevant}
	    We can assume that $w(M_G) \geq \frac{3W^2}{2\varepsilon^2}\log(m)$.
	\end{claim}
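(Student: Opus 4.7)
My plan is to mirror the proof of Claim~\ref{claim:small-output}, but use the appropriate bound on the relevant subgraph rather than on the whole edge set, and add the extra step of maintaining the relevant subgraph on the fly.

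First I would assume, for the purpose of contradiction with the claim's conclusion (i.e.\ to handle the ``border case''), that $w(M_G) < \frac{3W^2}{2\varepsilon^2}\log(m)$. Since every edge weight is at least $1$, this immediately yields $|M_G| < \frac{3W^2}{2\varepsilon^2}\log(m)$. By Lemma~\ref{lem:bound-graph-size-relevant}, the relevant subgraph $\overline{G}$ then has at most $|\overline{G}| \leq 2n \cdot |M_G| = O(n \cdot \mathrm{poly}(\log(m), W, 1/\varepsilon))$ edges. Since $M_G = M_{\overline{G}}$, if we can store $\overline{G}$ and compute the optimum inside it, we are done and the claim is vacuously true.

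Next I would explain how to maintain $\overline{G}$ in one pass with that much memory. The strategy is: for each pair of vertices $(u,v)$ we have already seen, keep only the top $\min(b_u, b_v)$ heaviest edges between $u$ and $v$ observed so far; whenever a new edge $(u,v,w_{uv})$ arrives, insert it into the current bucket for the pair $(u,v)$ and, if the bucket exceeds $\min(b_u, b_v)$, evict the lightest edge. At every point in the stream the stored graph is exactly the relevant subgraph of the prefix seen so far, and at the end of the stream it equals $\overline{G}$. The total space used is bounded by $|\overline{G}|$ (plus $O(1)$ per bucket for bookkeeping), which is $O(n \cdot \mathrm{poly}(\log(m), W, 1/\varepsilon))$ as required.

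Finally I would note that once $\overline{G}$ is in memory, we can compute $M_{\overline{G}}$ offline by any exact maximum weight $b$-matching algorithm; since $M_G = M_{\overline{G}}$, this yields the optimal solution exactly. The memory usage is within the stated $O(\max(|M_G|,n) \cdot \mathrm{poly}(\log(m), W, 1/\varepsilon))$ budget. No step here is really difficult: the only thing to verify carefully is that the greedy ``keep-the-heaviest'' bucketing at each pair really does produce the relevant subgraph at the end, which is immediate because the definition of relevant edges selects exactly the heaviest $\min(b_u,b_v)$ edges between $u$ and $v$.
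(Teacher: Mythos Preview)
Your argument follows the same lines as the paper's proof: bound $|M_G|$ from $w(M_G)$, apply Lemma~\ref{lem:bound-graph-size-relevant} to bound $|\overline{G}|$, and maintain $\overline{G}$ on the fly by keeping only the $\min(b_u,b_v)$ heaviest edges per pair.

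There is one point the paper makes explicit that you gloss over. Since we do not know in advance whether $w(M_G) < \frac{3W^2}{2\varepsilon^2}\log(m)$, the auxiliary procedure that stores the relevant subgraph must be run \emph{in parallel} with the main algorithm, and it must be aborted as soon as its memory exceeds the threshold $2n\cdot \frac{3W^2}{2\varepsilon^2}\log(m)$. Without such a cap, in the regime where $|M_G|$ is large the relevant subgraph can have up to $2n\cdot|M_G|$ edges, and storing it is \emph{not} within the $O(\max(|M_G|,n)\cdot poly(\log m,W,1/\varepsilon))$ budget; your final sentence asserting the memory bound is therefore unjustified as stated. With the cap, if the auxiliary procedure aborts we can conclude $|M_G|\geq \frac{3W^2}{2\varepsilon^2}\log(m)$ (hence also $w(M_G)\geq \frac{3W^2}{2\varepsilon^2}\log(m)$) and rely on the main algorithm; if it runs to completion, we have $\overline{G}$ stored and can solve exactly. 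This is the only missing ingredient; the rest of your proposal is correct and matches the paper.
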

	
	\begin{proof}
	    In fact, if $w(M_G) < \frac{3W^2}{2\varepsilon^2}\log(m)$, then $|M_G| < \frac{3W^2}{2\varepsilon^2}\log(m)$ and by Lemma~\ref{lem:bound-graph-size} a semi-streaming algorithm maintaining a \emph{relevant} graph (as defined in Lemma~\ref{lem:bound-graph-size}) would only consume $O(n \cdot \frac{3W^2}{2\varepsilon^2} \cdot \log(m))$ memory. Therefore, such an auxiliary algorithm can be run in parallel of our main algorithm and if the number of edge stored by that auxiliary algorithm reaches $2n \cdot \frac{3W^2}{2\varepsilon^2} \cdot \log(m)$ then we stop that auxiliary algorithm (as it means that $|M_G| \geq \frac{3W^2}{2\varepsilon^2}\log(m)$). Otherwise, the whole graph can be stored so we can compute an exact solution. In both cases we only used $O(n \cdot poly(\log(m),W,1/\varepsilon))$ memory for that auxiliary algorithm.
	\end{proof}
	
	\begin{claim} \label{claim:late-part-relevant}
	    Assuming Claim~\ref{claim:small-output-relevant}, with probability $1 - m^{-3}$ the late part of the stream $E^{late}$ contains at least a $(1 - 2 \varepsilon)$ fraction of an optimal $b$-matching.
	\end{claim}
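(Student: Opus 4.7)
The proof will mirror that of Claim~\ref{claim:late-part} almost verbatim, since the argument is entirely about the distribution of the edges of an optimal $b$-matching across the early and late portions of the random-order stream, and does not depend on whether the graph contains irrelevant edges. The structural difference in Algorithm~\ref{algo:w-b-matching-relevant} only concerns how edges are processed inside the early phase, not the size of that phase, so the bound on $|E^{early}|$ still gives each edge an $\varepsilon$-probability of falling in it.

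The plan is as follows. Fix a maximum weight $b$-matching $M_G = \{f_1,\ldots,f_{|M_G|}\}$ and, for each $i$, define the random variable $X_i = \mathbbm{1}_{f_i \in E^{early}} \cdot w(f_i)$, taking values in $[0,W]$. Since a uniformly random permutation sends any fixed edge into the first $\varepsilon m$ positions with probability exactly $\varepsilon$, we have $\mathbb{E}\left[\sum_i X_i\right] = \varepsilon \cdot w(M_G)$. The indicators $\mathbbm{1}_{f_i \in E^{early}}$ (over distinct edges) are negatively associated under the random-order model, and multiplying each by the deterministic constant $w(f_i) \in [0,W]$ preserves that property, so the $X_i$ are negatively associated.

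I then apply Hoeffding's inequality (Proposition~\ref{prop:hoeffding}) with threshold $\lambda = \varepsilon \cdot w(M_G)$ to obtain
\[
\mathbb{P}\!\left[\sum_{i=1}^{|M_G|} X_i \geq 2\varepsilon \cdot w(M_G)\right] \leq \exp\!\left(-\frac{2\varepsilon^2 w(M_G)^2}{|M_G|\cdot W^2}\right) \leq \exp\!\left(-\frac{2\varepsilon^2 w(M_G)}{W^2}\right) \leq m^{-3},
\]
where the second inequality uses $w(M_G) \geq |M_G|$ (since every weight is at least $1$) and the last one uses the assumption from Claim~\ref{claim:small-output-relevant} that $w(M_G) \geq \frac{3W^2}{2\varepsilon^2}\log(m)$. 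On the complementary event, the total weight of edges of $M_G$ lost to $E^{early}$ is at most $2\varepsilon \cdot w(M_G)$, so $E^{late}$ retains a $b$-matching of weight at least $(1-2\varepsilon)\cdot w(M_G)$.

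There is no genuine obstacle here: the only point worth checking is that negative association of the indicators $\mathbbm{1}_{f_i \in E^{early}}$ really transfers to the $X_i$, which follows because each $X_i$ is a non-decreasing function of a single one of the indicators. Everything else is the same Hoeffding calculation as in Claim~\ref{claim:late-part}, and the Claim~\ref{claim:small-output-relevant} lower bound on $w(M_G)$ is exactly what is needed to kill the exponential down to $m^{-3}$.
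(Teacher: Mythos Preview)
Your proof is correct and follows essentially the same approach as the paper, which simply refers back to the proof of Claim~\ref{claim:late-part}. You have reproduced that argument faithfully (even adding the justification $w(M_G)\geq |M_G|$ and the remark on why negative association is preserved), so there is nothing to add.
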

	
	\begin{proof}
	    Similar to the proof of Claim~\ref{claim:late-part}.
	\end{proof}
	
	\begin{claim} \label{claim:big-output-relevant}
	    We can assume that $\frac{\varepsilon\cdot m}{\log_2(m) \cdot (2^{i_0+2}\beta^2W^2 + 1)} \geq 1$.
	\end{claim}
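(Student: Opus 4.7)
The plan is to follow the argument of Claim~\ref{claim:big-output} essentially verbatim, since the border-case handling turns on the overall length of the stream rather than on the structural distinction between relevant and irrelevant edges. First I would negate the statement, i.e.\ assume that $\frac{\varepsilon\cdot m}{\log_2(m) \cdot (2^{i_0+2}\beta^2W^2 + 1)} < 1$, and rearrange this inequality to obtain
\[
m \;\leq\; \frac{\log_2(m) \cdot (2^{i_0+2}\beta^2W^2 + 1)}{\varepsilon}.
\]
Substituting $2^{i_0+2} \leq 8|M_G|$ coming from the definition $i_0 = \lceil \log_2|M_G|\rceil$, and using Remark~\ref{rem:magnitude-beta} which gives $\beta = O(poly(W,1/\varepsilon))$, this collapses to $m = O(|M_G|\cdot poly(\log m, W, 1/\varepsilon))$, i.e.\ the entire stream fits within the memory budget of Theorem~\ref{thm:intro-streaming}.

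Next I would modify Algorithm~\ref{algo:w-b-matching-relevant} exactly as in the main setting: if during the first phase we ever reach an outer iteration with $\alpha_i = 0$ (so that no window of the prescribed size can even be formed), we abandon the guessing loop and, from that point on, simply append every remaining stream edge to $X$. We still perform the on-the-fly irrelevance test (ignoring an edge whenever $|\delta_H(u)\cap \delta_H(v)| = \min(b_u,b_v)$ with a smaller-or-equal weight), but even without this test the total memory is bounded by $m$, which by the previous display is $O(|M_G|\cdot poly(\log m, W, 1/\varepsilon))$ and therefore within our budget. The output is still the maximum weight $b$-matching of $H \cup X$.

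Finally I would invoke Claim~\ref{claim:late-part-relevant}, which guarantees that with probability at least $1 - m^{-3}$ the late part of the stream contains a $(1-2\varepsilon)$ fraction of $w(M_G)$; since in this degenerate case $X$ contains all of $E^{late}$ (restricted to relevant edges, which is exactly what counts toward $M_G$), we obtain a $(1-2\varepsilon)$ approximation, and the multiplicative slack $(1-2\varepsilon)^{-1}$ is folded into the $\varepsilon$ term of the final approximation ratio as elsewhere in Section~\ref{sec:random-order}. I do not anticipate any genuine obstacle: the change relative to Claim~\ref{claim:big-output} is purely bookkeeping to skip provably irrelevant edges on the fly, and this does not affect either the bound on $m$ or the late-phase concentration argument.
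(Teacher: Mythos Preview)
Your proposal is correct and follows exactly the approach the paper intends: the paper's own proof is simply ``Similar to the proof of Claim~\ref{claim:big-output}'', and you have faithfully reproduced that argument with the references updated to their relevant-edges counterparts.
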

	
	\begin{proof}
	    Similar to the proof of Claim~\ref{claim:big-output}.
	\end{proof}
	
	Then we can move to our main algorithm.
	
	\begin{lemma} \label{lem:early-properties-relevant}
	    The first phase of Algorithm~\ref{algo:w-b-matching} uses $O(\beta\cdot |M_G|)$ memory and constructs a subgraph $H$ of $G$, satisfying the following properties:
	    \begin{enumerate}
	        \item The first phase terminates within the first $\varepsilon m$ edges of the stream, \emph{i.e.} processing an edge $e_i$ with $i \leq \varepsilon m$.
	        \item When the first phase terminates after processing some edge $e_i$, the subgraph $H \subseteq G$ constructed during this phase satisfies these properties:
	            \begin{enumerate}
	                \item $H$ has $b$-bounded weighted edge degree $\beta$, and hence contains at most $O(\beta \cdot |M_G|)$ edges.
	                \item With probability at least $1 - m^{-3}$, the total number of $(H, \beta, \beta^-)$-underfull edges in the remaining part of the stream is at most $\gamma = O(|M_G|\cdot (\log(m))^2 \cdot \beta^2 W^2 \cdot 1/\varepsilon)$.
	            \end{enumerate}
	    \end{enumerate}
	\end{lemma}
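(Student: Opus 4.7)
The plan is to mirror the proof of Lemma~\ref{lem:early-properties} while accounting for the two modifications introduced by Algorithm~\ref{algo:w-b-matching-relevant}: the replacement step in Lines~10-15, and the extended definition of underfull edges. I keep the potential function $\Phi$ of Proposition~\ref{prop:exist-w-b-edcs}. The key preliminary observation is that the replacement step in Lines~10-15 can be analyzed as a deletion of an existing edge of weight $w' < w_{uv}$ followed by the insertion of the new edge of weight $w_{uv}$; since the algorithm only enters that branch when Line~9's weighted-degree inequality already holds, one verifies that the combined change in $\Phi$ is still at least a positive constant. Consequently, every modification of $H$ during the first phase increases $\Phi$ by $\Omega(1)$, and the total number of modifications remains $O(\beta^2 W^2 \cdot |M_G|)$, exactly as in Proposition~\ref{prop:exist-w-b-edcs}.

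For Property~1, I would set $i_0 = \lceil \log_2 |M_G| \rceil$ and argue that the inner loop at level $i_0$ cannot execute all $2^{i_0+2}\beta^2 W^2 + 1$ of its iterations without at least one interval returning $\textsc{FoundUnderfull} = \textsc{False}$: by the previous paragraph the total number of modifications on $H$ is bounded by $4\beta^2 W^2 \cdot |M_G| \leq 2^{i_0+2}\beta^2 W^2$. Therefore the outer loop breaks at some $i \leq i_0$, and the number of edges read over the whole first phase is at most $\sum_{i=0}^{i_0} (2^{i+2}\beta^2W^2+1)\cdot \alpha_i \leq \varepsilon m \cdot (i_0+1)/\log_2 m \leq \varepsilon m$. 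Property~2.a is then immediate: the constraint $\deg_H(v) \leq \beta \cdot b_v$ is enforced by the cleanup loop in Lines~17-19, while the bound $|\delta_H(u) \cap \delta_H(v)| \leq \min(b_u, b_v)$ is preserved by the replacement check of Lines~10-13. The $O(\beta \cdot |M_G|)$ size bound is then exactly the content of Proposition~\ref{prop:w-b-edcs-bounded-size}.

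Property~2.b is a near-verbatim reuse of the epoch-concentration argument in Lemma~\ref{lem:early-properties}. I would define the events $\mathcal{A}_j$, $\mathcal{B}_j$, and $\mathcal{A}_j^k$ exactly as there, with the notion of underfull interpreted in the new two-case sense. Because $H$ is frozen throughout any epoch on which $\mathcal{A}_j$ holds, any edge that is $(H,\beta,\beta^-)$-underfull at some moment of the epoch is still $(H,\beta,\beta^-)$-underfull at its end, so $\mathcal{A}_j \cap \mathcal{B}_j$ still implies that the $\alpha_{i_0}$ uniformly-random samples forming the epoch all avoid a set of $>\gamma$ underfull edges. Setting $\gamma = 4\log(m)\cdot m/\alpha_{i_0}$, the sampling-without-replacement estimate yields $\mathbb{P}[\mathcal{A}_j \cap \mathcal{B}_j] \leq (1 - \gamma/m)^{\alpha_{i_0}} \leq m^{-4}$, and a union bound over the at most $m$ epochs gives the claimed $1 - m^{-3}$ success probability. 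The $\gamma = O(|M_G|\cdot (\log m)^2 \beta^2 W^2 /\varepsilon)$ bound is obtained by substituting $\alpha_{i_0}$ and using Claim~\ref{claim:big-output-relevant}.

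The main obstacle I anticipate is the potential-function bookkeeping for the replacement step, where a deletion and an insertion happen atomically; I would argue positivity of the combined $\Phi$-increment by factoring $(w_{uv}-w')$ out of the difference and using both the strict inequality $w_{uv} > w'$ (hence $w_{uv} \geq w' + 1$, using integer weights) and the underfull inequality from Line~9. Everything else is a clean duplication of the original proof, and the new underfull case (capacity saturated with a strictly larger weight available) plays no role in the concentration step, since that step only relies on random samples missing a large set of underfull edges.
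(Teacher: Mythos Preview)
Your proposal is correct and follows essentially the same approach as the paper's proof. In fact, you give considerably more detail: the paper simply asserts that the replacement operation in Lines~10--13 ``also increases the potential function in the proof of Proposition~\ref{prop:exist-w-b-edcs} by at least $1$'' and then defers Property~2.b entirely to the argument of Lemma~\ref{lem:early-properties}, whereas you sketch the $\Phi$-increment computation for the replacement step and spell out the epoch-concentration argument explicitly.
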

	
	\begin{proof}
	    First, in each interval of size $\alpha_i$ processed until the first phase terminates (except the last interval) there is at least one insertion/deletion operation that is performed (as described in the proof of Proposition~\ref{prop:exist-w-b-edcs}) or a ``replacement'' operation (when an irrelevant edge is replaced by a larger one, see Lines~10-13; note that this operation also increases the potential function in the proof of Proposition~\ref{prop:exist-w-b-edcs} by at least $1$) and therefore the total number of such processed intervals is bounded by $4 \beta^2W^2\cdot |M_G| + 1$. As a result, the first phase ends for some $i \leq i_0 = \lceil \log_2 |M_G| \rceil$, and the total number of edge processed in the first phase is therefore bounded by $\varepsilon m \cdot \frac{i_0}{\log_2(m)} \leq \varepsilon m$.
	    
	    As the subgraph $H$ built always keeps a $b$-bounded weighted edge-degree $\beta$, Proposition~\ref{prop:w-b-edcs-bounded-size} also applies and therefore the construction of $H$ uses $O(\beta \cdot |M_G|)$ memory.

        For the last property, the argument is very similar to the one used in the proof of Lemma~\ref{lem:early-properties} and in~\cite{bernstein:LIPIcs:2020:12419}.
	\end{proof}
	
	Now we can obtain the desired result.
	
	\begin{theorem}
	    \label{thm:streaming-approx-relevant}
	    Let $\varepsilon > 0$ and $W$ be an integer parameter. Using Algorithm~\ref{algo:w-b-matching-relevant}, with probability $1 - 2m^{-3}$, one can extract from a randomly-ordered stream of edges a weighted $b$-matching with an approximation ratio of $2 - \frac{1}{2W} + \varepsilon$, using $O(\max(|M_G|,n) \cdot poly(\log(m), W, 1/\varepsilon))$ memory.
	\end{theorem}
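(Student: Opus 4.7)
The plan is to follow the same structure as the proof of Theorem~\ref{thm:streaming-approx}, but leaning on the three relevant-edge analogues (Lemma~\ref{lem:underfull-matching-relevant}, Claim~\ref{claim:late-part-relevant}, Lemma~\ref{lem:early-properties-relevant}) that have been established in this appendix. First I would fix integer parameters $\beta$ and $\beta^-$ of magnitude $O(\mathrm{poly}(W,1/\varepsilon))$ satisfying the hypotheses of Lemma~\ref{lem:underfull-matching-relevant}; their existence follows from the same calculation as in Remark~\ref{rem:magnitude-beta}.

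Next I would handle memory and termination. By Lemma~\ref{lem:early-properties-relevant}, with probability at least $1 - m^{-3}$ the first phase of Algorithm~\ref{algo:w-b-matching-relevant} terminates within the first $\varepsilon m$ edges of the stream, the subgraph $H$ it produces has $b$-bounded weighted edge-degree $\beta$ with at most $O(\beta \cdot |M_G|)$ edges, and the number of $(H,\beta,\beta^-)$-underfull edges left in the late part of the stream is at most $\gamma = O(|M_G| \cdot (\log m)^2 \cdot \beta^2 W^2 / \varepsilon)$. Since the late phase collects precisely these underfull edges while discarding irrelevant ones, $|X| \leq \gamma$, so storing $H \cup X$ uses $O(|M_G| \cdot \mathrm{poly}(\log m, W, 1/\varepsilon))$ memory. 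The boundary cases $w(M_G) < \frac{3W^2}{2\varepsilon^2}\log m$ and very large $|M_G|$ are handled in parallel by the auxiliary mechanisms of Claim~\ref{claim:small-output-relevant} (together with the sparsity bound of Lemma~\ref{lem:bound-graph-size-relevant}) and Claim~\ref{claim:big-output-relevant}, each contributing at most $O(\max(|M_G|,n) \cdot \mathrm{poly}(\log m, W, 1/\varepsilon))$ additional memory.

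For the approximation guarantee, Claim~\ref{claim:late-part-relevant} ensures that, with probability at least $1 - m^{-3}$, the late part $E^{late}$ of the stream contains a $b$-matching of weight at least $(1-2\varepsilon)\, w(M_G)$. Writing $G^{late}$ for the subgraph consisting of the edges in $E^{late}$, I would apply Lemma~\ref{lem:underfull-matching-relevant} to the graph $H \cup G^{late}$: $H$ is a subgraph of bounded weighted edge-degree $\beta$, and $X$ contains every $(H,\beta,\beta^-)$-underfull edge in $(H \cup G^{late}) \setminus H$, so
\[
\left(2 - \frac{1}{2W} + \varepsilon\right) \cdot w(M_{H \cup X}) \;\geq\; w(M_{H \cup G^{late}}) \;\geq\; (1 - 2\varepsilon)\, w(M_G).
\]
Rescaling $\varepsilon$ by a constant factor at the outset, which is harmless because the constraints on $\beta$ and $\beta^-$ are polynomial in $1/\varepsilon$, converts the resulting bound $(1-2\varepsilon)^{-1} \cdot (2 - \frac{1}{2W} + \varepsilon)$ back into the announced ratio $2 - \frac{1}{2W} + \varepsilon$. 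A union bound over the two low-probability failure events then gives overall success probability at least $1 - 2m^{-3}$.

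The main obstacle in this argument is not in the present theorem but in the two underlying lemmas: Lemma~\ref{lem:early-properties-relevant}, where the ``replacement'' step of the first phase (which swaps an irrelevant edge for a heavier one between the same pair of vertices) must be incorporated into the potential-function analysis underlying Proposition~\ref{prop:exist-w-b-edcs}, and Lemma~\ref{lem:underfull-matching-relevant}, where the $w$-$b$-EDCS property has to be verified on the relevant subgraph $\overline{H \cup X^M}$ rather than on $H \cup X^M$ itself. Once those two ingredients are in place, the proof of the theorem itself reduces to an assembly of bounds followed by a union bound.
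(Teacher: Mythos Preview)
Your proposal is correct and follows essentially the same route as the paper: apply Lemma~\ref{lem:underfull-matching-relevant} to $H \cup G^{late}$ with Claim~\ref{claim:late-part-relevant} supplying the $(1-2\varepsilon)$ factor, invoke Lemma~\ref{lem:early-properties-relevant} for the memory bound, and finish with a union bound. Your write-up is in fact more explicit than the paper's terse three-line proof; the only minor imprecision is that the termination and edge-degree properties in Lemma~\ref{lem:early-properties-relevant} are deterministic rather than probabilistic, but this does not affect the argument.
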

	
	\begin{proof}
	    Applying Lemma~\ref{lem:underfull-matching-relevant} to the graph $H \cup G^{late}$ we can get, if we have chosen the right values $\beta$ and $\beta^-$ (which are $O(poly(W, 1/\varepsilon))$), $H \cup X$ contains a $(1 - 2\varepsilon)^{-1}\cdot (2 - \frac{1}{2W} + \varepsilon)$ approximation of the optimal $b$-matching (with probability at least $1 - m^{-3}$, see Claim~\ref{claim:late-part-relevant}), and with  a memory consumption of $O(|M_G| \cdot poly(\log(m), W, 1/\varepsilon))$ (with probability at least $1 - m^{-3}$, see Lemma~\ref{lem:early-properties-relevant}), with probability at least $1 - 2m^{-3}$ (by union bound). This concludes the proof.
	\end{proof}

\end{document}